\newtheorem{definition}{Definition}[section]
\newtheorem{proposition}[definition]{Proposition}
\newtheorem{lemma}[definition]{Lemma}
\newtheorem{assumption}{Assumption}[section]
\newtheorem{regc}{Regularity Condition}[section]
\newtheorem{theorem}[definition]{Theorem}
\newtheorem{remark}{Remark}[section]
\newcommand{\var}{\mathrm{Var}}
\newcommand{\E}{E}
\DeclareMathOperator*{\argmax}{arg\,max}
\newcommand{\bs}{\boldsymbol}
\newcommand{\bt}{\boldsymbol{\theta}}
\newcommand{\tbt}{\boldsymbol{\widetilde{\theta}}}
\newcommand{\hbt}{\boldsymbol{\widehat{\theta}}}
\newcommand{\bb}{\boldsymbol{\beta}}
\newcommand{\hbb}{\widehat{\boldsymbol{\beta}}}
\newcommand{\M}{\boldsymbol{M}}
\DeclareOldFontCommand{\bf}{\normalfont\bfseries}{\mathbf}
\DeclareMathOperator{\sgn}{sgn}
\def\blfootnote{\xdef\@thefnmark{}\@footnotetext}
\title{\textbf{Data Segmentation for Time Series Based on a General Moving Sum Approach}}
\author{Claudia Kirch$^{1,2}$ and Kerstin Reckruehm$^1$}
\date{\today}
\begin{document}

\maketitle

\begin{abstract}
	In this paper we propose new methodology for the data segmentation, also known as multiple change point problem, in a general framework including classic mean change scenarios, changes in linear regression but also changes in the time series structure such as in the parameters of Poisson-autoregressive time series. In particular, we derive a general theory based on estimating equations proving consistency for the number of change points as well as rates of convergence for the estimators of the locations of the change points. More precisely, two different types of MOSUM (moving sum) statistics are considered: A MOSUM-Wald statistic based on differences of local estimators and a MOSUM-score statistic based on a global estimator. The latter is usually computationally less involved in particular in non-linear problems where no closed form of the estimator is known such that numerical methods are required.  Finally, we evaluate the methodology by means of simulated data as well as using some geophysical well-log data.
\end{abstract}

\footnotetext[1]{Department of Mathematics, Otto-von-Guericke University;  Magdeburg, Germany. Supported by Deutsche Forschungsgemeinschaft - 314838170, GRK 2297 MathCoRe.}
\footnotetext[2]{E-mail: \url{claudia.kirch@ovgu.de}.} 

\section{Introduction}
Data segmentation or multiple change point estimation is a current topic in statistics and machine learning and comprises problems in a wide range of fields like finance, quality control, medicine or climate. For example, \cite{braunmueller} apply a technique of multiple change point detection on DNA sequences. \cite{aggarwal1999volatility} focus on finding structural breaks in the volatility of stock market returns, while \cite{killick2010detection} and \cite{killick2012optimal} give an interesting  application to oceanography by detecting changes in the variance of time series for wave heights.

Early literature on change point methodology focused on at-most-one-change testing and the corresponding estimator for a single change point (see e.g.\ \cite{CsorgoHorvath97}). Moving on to more complex data structures this is still an active field of research, see e.g. \cite{AueHorvath}, \cite{horvath2014extensions}, \cite{cho2020data}. At the same time, the theoretic investigation of the multiple change situation with all its additional challenges has become increasingly popular, where most papers deal with the detection of multiple changes in the mean; see for example the recent survey articles \cite{fearnhead2020relating}, \cite{cho2020data}.

One approach to the multiple change point in the mean problem  with favorable properties is based on moving sum (MOSUM) statistics first considered in detail by \cite{KirchMuhsal}. Corresponding tests had previously been considered by 
\cite{bauerhackl}, \cite{huvskova1990}, \cite{chuhornikkuan} and \cite{HuskovaSlaby01}.
\cite{ChoKirch2018} refine the procedure by an additional post-processing step based on an information criterion. 
An efficient implementation of the procedure in an \textsf{R}-package is detailed in \cite{ChoKirchMeier}. 
\cite{yau2016inference} use moving sum statistics in combination with an information criterion for estimating the change points in a linear autoregressive setting. In the context of renewal processes moving sum methodology was proposed by \cite{messer2014multiple} as well as \cite{KirchKlein}.

In this paper, we consider a general framework based on estimating equations similar to what has been considered in the at-most-one-change situation by \cite{kamgaing2016detection} as well as in a sequential change point context by \cite{kirch2015use,kirch2018modified}. This widely defined scope contains linear and non-linear (auto-)regressive time series including such based on neural network approximations. We propose two types of MOSUM (\underline{mo}ving \underline{sum}) statistics and derive  corresponding consistency results for estimators obtained from one bandwidth. Based on these results post-processing methods similar to \cite{ChoKirch2018} can be considered in future work.

The paper is organized as follows:
In Section~\ref{section_segmentation} we explain in detail the two data segmentation algorithms that we propose. The discussion in this paper occurs in a general framework based on estimating functions where examples are given in Subsection~\ref{subsection_examples}. In Section~\ref{section_consistency} we prove consistency of the data segmentation algorithms including localization rates for one of the procedures. Due to the general framework the consistency results are obtained based on some high-level assumptions. In Section~\ref{sectionasunderregcond} we prove these high-level assumptions for sufficiently smooth estimating functions under standard moment conditions. In Section~\ref{section_simulation} the empirical performance is investigated by means of a simulation study and using a geophysical well-data set before we give some conclusions in Section~\ref{section_conclusion}. The proofs can be found in an appendix.

\section{Data Segmentation Based on Moving Sum Statistics}\label{section_segmentation}
\subsection{MOSUM-Wald and MOSUM-Score Statistics}
We consider the following general setting 
\begin{align}\label{definitiongenparachamod}X_t=
	\sum_{j=1}^{q+1}X_t^{(j)}\, 1_{\{k_{j-1,n}<t\le k_{j,n}\}},\quad k_{0,n}=0,\; k_{q+1,n}=n.
\end{align}
	The sequences $\{X_t^{(j)}: {t\geq 1}\}, j=1,\ldots,q+1$, are assumed to be stationary, where consecutive sequences are distributionally different. Then,
$q$ denotes the number of structural breaks and $k_{1,n},\ldots,k_{q,n}$ denote the change points. 
For simplicity we assume that $q$ is fixed but all arguments are valid for a sequence $q_n$ as long as $q_n$ is bounded. Under some additional  assumptions it is also possible to relax the boundedness assumption on the number of change points (see e.g. \cite{KirchKlein} or Section E.3 in \cite{ChoKirch2018}).

While the construction of the below statistics is based on a given parametric model, we do not require the observed time series to follow that model. Instead we explicitly allow for model misspecification in the theoretic analysis showing that under misspecification changes are detected if consecutive time series have different best approximating model parameters (in the sense of Assumption~\ref{aswaldaltnullseg}).

The test statistic is related to estimators of the model parameters based on estimating functions $\bs{H}$ also known as M-estimators. These are obtained as the solution $\hbt_{a,b}$ of the estimating equation system $\sum_{i=a}^b\bs{H}(\mathbb{X}_i,\bt)\overset{!}{=}\bs{0}$ for a suitable choice of $\mathbb{X}_i$, which can be equal to the observations, a tuple of response and explanatory variables or (for autoregressive models) can include lagged observations; see Section~\ref{subsection_examples} for some examples.
The estimating function $\bs{H}$ is vector-valued for the estimation of multidimensional parameter vectors.

As we adopt a  general framework in this paper and explicitly allow for misspecification, we need to make some high-level assumptions on the underlying time series. We show their validity  under some moment conditions for sufficiently smooth estimating functions in Section~\ref{sectionasunderregcond}.
\begin{assumption}\label{as.invariancemulti}
	\begin{enumerate}[(a)]
		\item	For each segment $j$, let $\{\mathbb{X}_t^{(j)}:t\geq 1\}$ be stationary.
		\item 
			For a given $\boldsymbol{\theta}$ specified below  let \(\boldsymbol{S}_j(k,\boldsymbol{{\theta}})=\sum_{i=1}^{k}\boldsymbol{H}(\mathbb{X}_i^{(j)}, \boldsymbol{{\theta}})\) fulfill a strong invariance principle for all $j=1,\ldots,q+1$, i.e.\ possibly after changing the probability space there exists some  \(\nu>0\), a \(p\)-dimensional standard Wiener process \(\{\boldsymbol{W}(k): k \geq 0\}\) and a symmetric positive definite long-run covariance matrix $\boldsymbol{\Sigma}_{(j)}(\bt)$ such that as $k\to\infty$
\begin{align*}
\left\lVert\boldsymbol{S}_j(k,\boldsymbol{{\theta}})-\E(\boldsymbol{S}_j(k,\boldsymbol{{\theta}}))- \boldsymbol{\Sigma}_{(j)}(\bt)^{1/2}\, \boldsymbol{W}(k)\right\rVert=O(k^{1/(2+\nu)})\;\; a.s.\end{align*}
	\end{enumerate}
\end{assumption}
In the literature invariance principles as in (b) have been derived for many time series --  compare also Theorem~\ref{theorem_ass_ip} below, where typically the parameter $\nu$ depends on the number of existing moments. 

 The following assumption is a Bahadur representation for M-estimators as for example derived by \cite{he1996general} but uniformly in $k$. Effectively, this is a linearisation of the estimator which is typically used to derive asymptotic normality. In our case, the representation will also be used to make a connection between the MOSUM-Wald and the MOSUM-score statistics in the proof.

\begin{assumption}\label{aswaldaltnullseg}
	There exists a regular matrix $\boldsymbol{V}_{(j)}$ for any $j=1,\ldots, q+1$ such that
\begin{align*}
	&\max_{k_{j-1,n}<k\le k_{j,n}-G}\left\lVert\sqrt{\frac G 2}\boldsymbol{V}_{(j)}\left( \boldsymbol{\theta}_j-\boldsymbol{\widehat{\theta}}_{k+1,k+G}\right)-\frac{1}{\sqrt{2G}}\sum_{i=k+1}^{k+G}\boldsymbol{H}(\mathbb{X}^{(j)}_i,\boldsymbol{\theta}_j)\right\rVert\\&=  o_P\left((\log(n/G))^{-1/2}\right),
\end{align*}
where $\bt_j$ fulfills $\E\left(\bs{H}(\mathbb{X}_i^{(j)},\bt_j)\right)=\bs{0}$ (identifiably unique). \end{assumption} 

If the parametric assumption underlying the M-estimator is correct, then $\bt_j$ is the true underlying parameter for the $j$th stationary sequence. Otherwise, this is the best-approximating parameter in the sense induced by the estimating function. Furthermore, in case of differentiable estimating functions, it usually holds $\boldsymbol{V}_{(j)}=\E\left( \nabla \boldsymbol{H} (\mathbb{X}^{(j)}_{1},\boldsymbol{\theta}_j)^T\right)$; see Regularity Condition~\ref{regc_wald}~(b).

Moreover, we define
\begin{align}
	&	\boldsymbol{V}_k=\boldsymbol{V}_{(j)}, \; \boldsymbol{\Sigma}_k(\bt)=\boldsymbol{\Sigma}_{(j)}(\bt),\; \boldsymbol{\Sigma}_k=\boldsymbol{\Sigma}_{(j)}(\bt_j)\quad \text{for }k_{j-1,n}<k\le k_{j,n},\notag\\
	&	\boldsymbol{\Gamma}_k=\boldsymbol{V}_{k}^{-1}\boldsymbol{\Sigma}_{k}\left(\boldsymbol{V}_{k}^{-1}\right)^T, \qquad \boldsymbol{\Gamma}_{(j)}=\boldsymbol{\Gamma}_{k_{j,n}},\label{eq_Gamma}
\end{align}
where $\boldsymbol{\Gamma}_k$ is the asymptotic (long-run) covariance matrix for the estimators from the $j$th stationary segment  under the above assumptions.

A first intuitive  approach to the multiple change problem is based on the following
\textbf{MOSUM-Wald statistic} that uses weighted differences of moving estimators $\boldsymbol{\widehat{\theta}}_{k+1,k+G}$  and $\boldsymbol{\widehat{\theta}}_{k-G+1,k}$ of the unknown quantity $\theta$ based on the stretch of data $X_{k+1},\ldots,X_{k+G}$ respectively $X_{k-G+1},\ldots,X_k$. Similarly to the Mahalanobis distance the weighting is done with the asymptotic (long-run) covariance matrix, so that  we obtain for $k=G,\ldots,n-G$
\begin{align}\label{DefWaldstatistic}
	T^{(1)}_{k,n}(G)=T^{(1)}_{k,n}(G;\boldsymbol{\Gamma}_k)=\frac{\sqrt{G}}{\sqrt{2}}\left\lVert\boldsymbol{\Gamma}_{k}^{-1/2}\left(\boldsymbol{\widehat{\theta}}_{k+1,k+G}-\boldsymbol{\widehat{\theta}}_{k-G+1,k}\right)\right\rVert,\end{align}
where the bandwidth $G=G_n$ is a tuning parameter that determines the length of the moving window and $\|\cdot\|$ denotes the Euclidian norm.

In this paper, we consider bandwidths $G$ that are of smaller order than the sample length $n$. 
The procedure detects changes that are further apart than $2G$ such that sublinear changes, whose distance diverges strictly slower than the sample size, can  be detected. 
\begin{assumption}\label{as.bandwidth}
	\begin{enumerate}[(a)]
		\item 	For $\nu>0$ (specified in Assumption~\ref{as.invariancemulti} (b)) let 
	\begin{align*}
\frac{n}{G}\rightarrow \infty\;\;\text{and}\;\; \frac{n^{\frac{2}{2+\nu}}\log(n)}{G}\rightarrow 0\;\;\text{for}\;\;n\rightarrow
\infty.
\end{align*}
\item The minimal distance between two neighbouring change points is asymptotically larger than $2G$ in the sense of
	$$\lim\inf_{n\to\infty}\min_{j=1,\ldots,q+1}(k_{j,n}-k_{j-1,n})/G>2.$$
	\end{enumerate}
\end{assumption}

Close to a change point  the above Wald statistic \eqref{DefWaldstatistic} can be expected to be large such that it can be used to find changes in the parameter vector $\bt$. However, using the Wald statistic has one major drawback: Calculating  two estimates for each time point, i.e. $2(n-2G)$ in total, can be computationally challenging in situations where numerical methods need to be applied as e.g.\  in non-linear models such as Poisson autoregressive models. Furthermore, in situations with many (almost) zeroes (corresponding to many local optima) such as e.g.\ in a neural-network-autoregressive situation as in \cite{kirch2012testing} estimators can be far apart even though the corresponding stochastic processes are almost identical. While this is explicitly excluded for our theoretic results by assuming identifiability, this can easily lead to problems   in applications.

In order to avoid these problems and to reduce the computational complexity and corresponding numerical challenges we  consider \textbf{MOSUM-score statistics} where local parameter estimators are replaced by a global inspection parameter $\boldsymbol{\widetilde{\theta}}$, which can  be fixed or an estimator based on the same data. The MOSUM-score statistic is based on the following differences between moving sums of the estimating function at the inspection parameter 
for $k=G,\ldots,n-G$
\begin{align*}
	\boldsymbol{M}_{\boldsymbol{\widetilde{\theta}}}(k)=\sum_{i=k+1}^{k+G}\boldsymbol{H}(\mathbb{X}_i,\boldsymbol{\widetilde{\theta}})- \sum_{i=k-G+1}^{k}\boldsymbol{H}(\mathbb{X}_i,\boldsymbol{\widetilde{\theta}}).
\end{align*}
Effectively, this converts a general multiple parameter change problem to a multiple mean change problem of the transformed sequence $\{\bs{H}(\mathbb{X}_t,\tbt)\}_{t\geq 1}$. Consequently, a multivariate version of the (univariate) mean-MOSUM statistic investigated by \cite{KirchMuhsal} can be applied, leading to the following MOSUM-score statistic:
\begin{align}\label{Defscorestatistic}T^{(2)}_{k,n}(G)=T^{(2)}_{k,n}(G,\boldsymbol{\widetilde{\theta}})=T^{(2)}_{k,n}(G,\boldsymbol{\widetilde{\theta}};\boldsymbol{\Sigma}_k(\tbt))=\frac{1}{\sqrt{2G}}\left\lVert\boldsymbol{\Sigma}_k(\tbt)^{-1/2}\boldsymbol{M}_{\boldsymbol{\widetilde{\theta}}}(k)\right\rVert.\end{align}

If we use data-dependent inspection parameters ${\tbt}_n={\tbt}_n(\mathbb X_1,\ldots,\mathbb X_n)$, then we need the following additional assumptions for the MOSUM-score procedure:
\begin{assumption}\label{as.replacemulti}
There exists $\tbt$ such that for any $j=1,\ldots,q$
\begin{align*}
	&(i)\quad\max_{k_{j-1,n}+G\le k\le k_{j,n}-G}\left\lVert \M_{\tbt_n}(k)-\M_{\tbt}(k)\right\rVert=o_P\left(\sqrt{\frac{G}{\log(n/G)}}\right),\\
	&(ii)\quad \max_{|k-k_{j,n}|< G}\left\lVert \M_{\tbt_n}(k)-\M_{\tbt}(k)\right\rVert=o_P\left(\sqrt{G\,\log(n/G)}\right).
\end{align*}
\end{assumption}

This assumption holds under smoothness and mixing assumptions for any $\sqrt{n}$-consistent estimator $\tbt_n$ (see  Theorem~\ref{theorem_ass_score} below). In particular,  $\tbt_n$  obtained from the corresponding estimating equations based on the stretch of  data $X_a,\ldots,X_b$ are $\sqrt{n}$-consistent for  $\tbt$, 
the best-approximating parameter in the sense of Theorem~\ref{theoremrootnconsisalt}. While for the latter approach some theoretical guarantees towards detectability are given in Section~\ref{sectiondetectabibilty}, from a computational perspective it might be more efficient to use different estimators (for an empirical example, see Section~\ref{section_det_sim}).

The latter MOSUM-score approach avoids the numerical drawbacks of the MOSUM-Wald statistic but can only detect changes for which the given  inspection parameter $\tbt$ results in a change in the expectation of the transformed series. This problem will be discussed in detail in Section \ref{sectiondetectabibilty}.

In the mean change model based on moving sample means as considered by \cite{KirchMuhsal} (see also Example~\ref{examplemeanchange} below) the above MOSUM-Wald and MOSUM-score statistics coincide for any inspection parameter $\tbt$.

\subsection{Segmentation Algorithm}\label{sec_seg_alg}
MOSUM statistics as introduced in the previous section have peaks close to the true change points making them particularly suitable for data segmentation. More precisely, as demonstrated in Figure~\ref{figure_mosum}, the MOSUM statistic is a noisy version of the MOSUM signal, which is a piecewise linear function that is equal to zero away from the change points and has a single peak at each change point.

\begin{figure}
\includegraphics[width=\textwidth,trim=0 1.7cm 0 1.7cm]{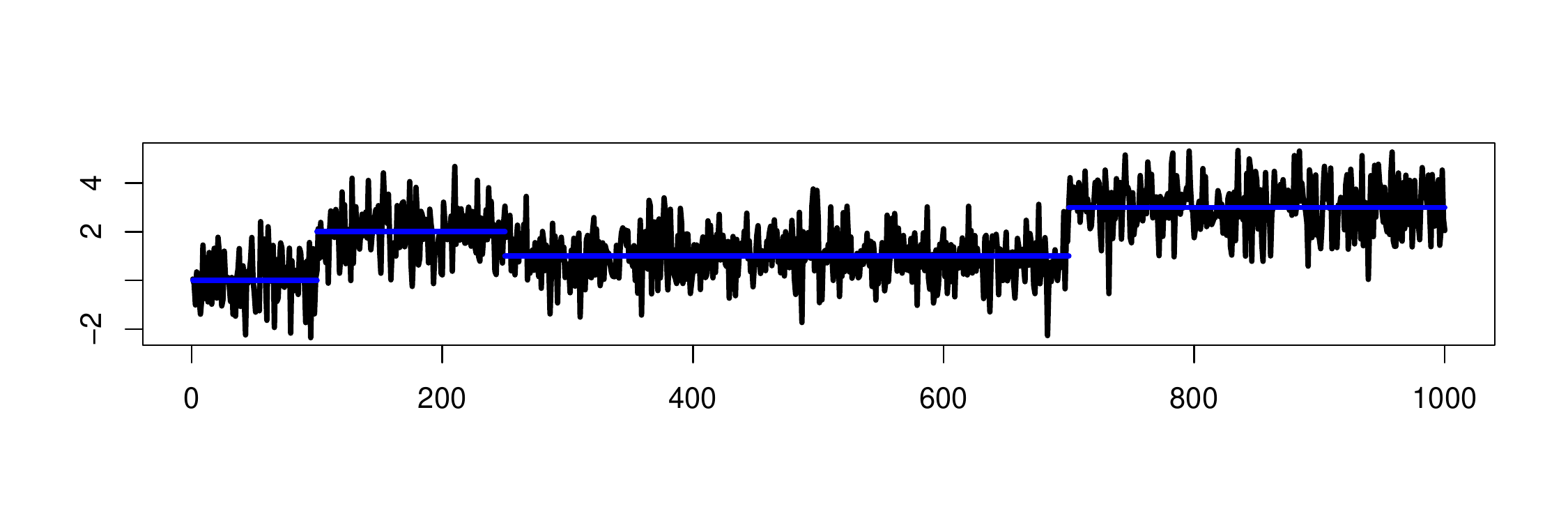} 
\includegraphics[width=\textwidth,trim=0 1.7cm 0 1.7cm]{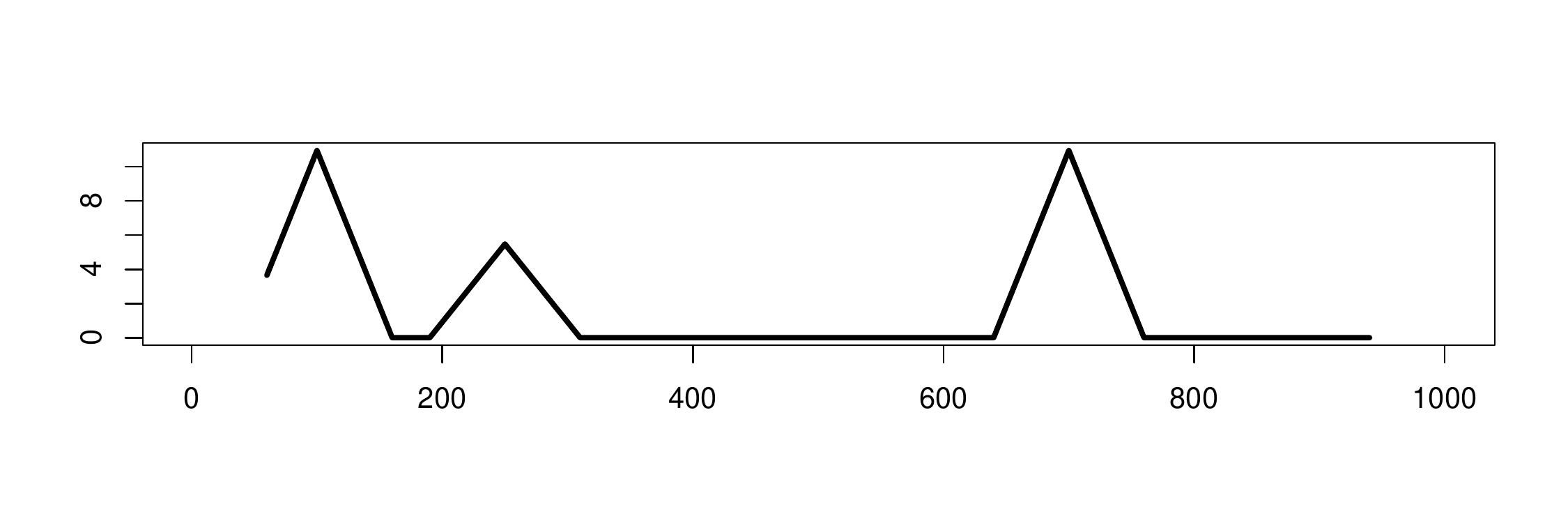} 
\includegraphics[width=\textwidth,trim=0 1.7cm 0 1.7cm]{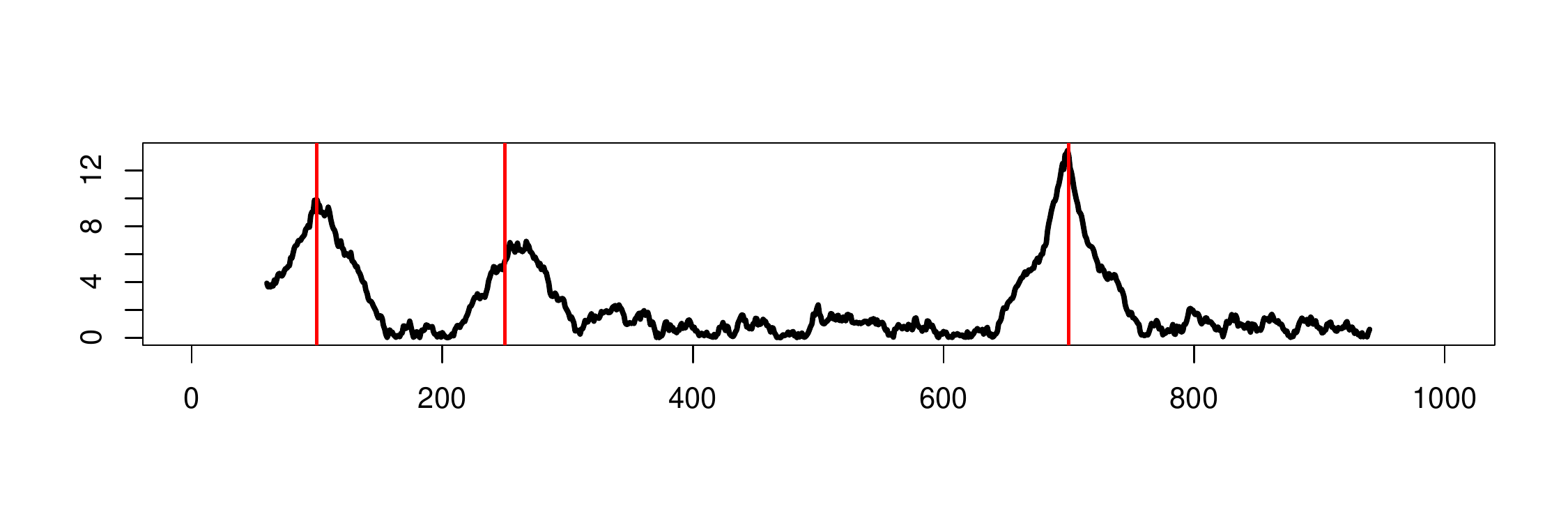}
\includegraphics[width=\textwidth,trim=0 1.7cm 0 1.7cm]{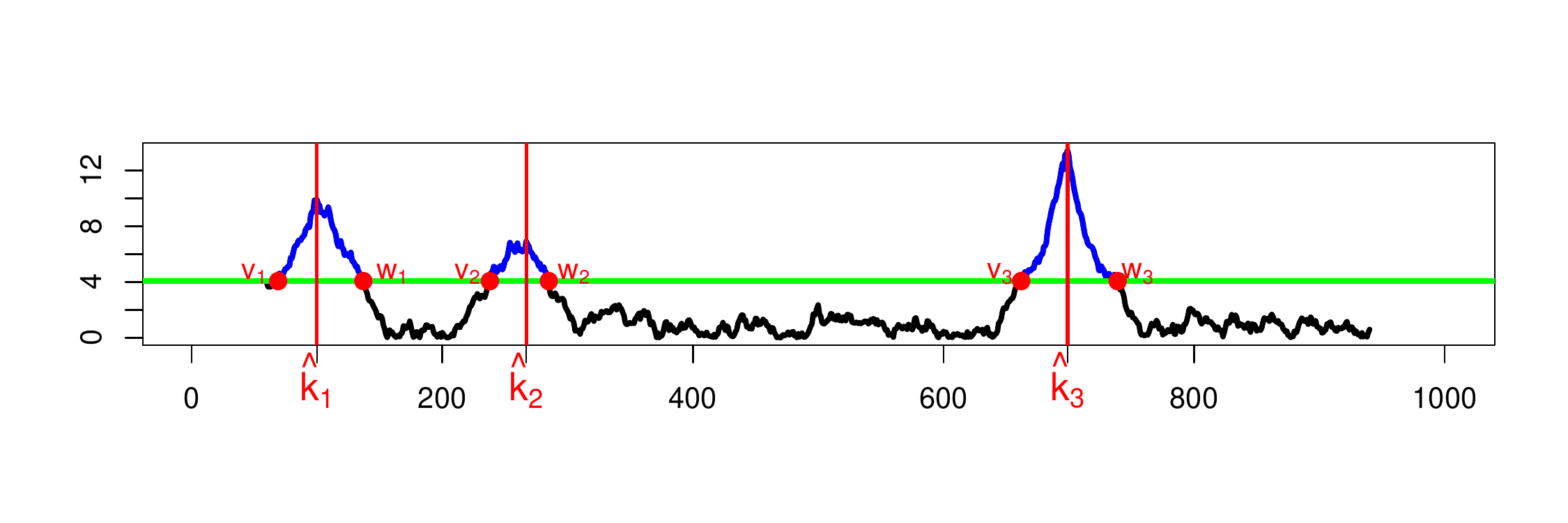}
\caption{Top panel: Time series with changes in the mean at time points $100$, $250$ and $700$. Second panel: Signal of the classical MOSUM statistic for mean changes as investigated by \cite{KirchMuhsal} with bandwidth $G=60$. Third panel: Actual (noisy) MOSUM statistic for the above data set (the vertical lines indicate the true change points). Fourth panel: Outcome of the segmentation procedure as described in Section~\ref{sec_seg_alg} in this example.}
	\label{figure_mosum}
\end{figure}

Clearly, in order to use this noisy signal for segmentation purposes a threshold is needed  to distinguish between significant local maxima that are due to a close-by change point and local maxima obtained simply from random fluctuations around zero if no change point is close by. We  obtain such thresholds by globally controlling the random fluctuations of the MOSUM statistic asymptotically if no change point is present in the time series. As a consequence the proposed procedure controls the (asymptotic) family-wise error rate at level $\alpha_n$ (see below) for the detected change points.
In Section~\ref{sectionthreshold} below we detail how to choose such a threshold $D_n(\alpha_n,G)$ based on asymptotic  $\alpha_n$-quantiles (of the no-change situation). 
These quantiles can also be considered as critical values for a corresponding uniform (across time) test procedure for the null hypothesis of no change versus the alternative of one or more changes. An asymptotic threshold is reasonable given the generality of our procedure, the (non-parametric) error structure and as it also allows for model misspecification.

Based on a MOSUM-Wald or MOSUM-score  statistic $T_{k,n}(G)$ with corresponding threshold $D_n(\alpha_n,G)$ we propose the following \textbf{MOSUM segmentation procedure} to determine estimators for the number and the locations of the change points. Below, we suppress the possible dependence on other tuning parameters such as the inspection parameter $\boldsymbol{\widetilde{\theta}}$ for the MOSUM-score statistic:\vspace{2mm}

\textit{Consider all pairs of time points $(v_{j,n},w_{j,n})$, $v_{1,n}< w_{1,n}<v_{2,n}<w_{2,n}<\ldots$, with 
  \begin{align}& \;T_{k,n}(G)\geq D_n(\alpha_n,G)\;\; \text{for}\;\; v_{j,n}\leq k \leq w_{j,n},\notag\\
   &\; T_{k,n}(G)< D_n(\alpha_n,G)\;\; \text{for}\;\;k=v_{j,n}-1,w_{j,n}+1,\notag\\
  &\;  w_{j,n}-v_{j,n} \geq \varepsilon G \;\;\;\text{with}\;\; 0<\varepsilon<1/2 \;\;\text{arbitrary but fixed}.\label{MOSUMcond3}
   \end{align}
   We take the number of these pairs as an estimator for the number of changes:\\
   $$\widehat{q}_n=\;\;\text{number of pairs}\;(v_{j,n},w_{j,n}).$$
   Furthermore, we determine the local maxima between $v_{j,n}$ and $w_{j,n}$, $j=1,\ldots,\widehat{q}_n$, and use them as estimators for the locations of the change points:
   $$\widehat{k}_{j,n}=\argmax\limits_{v_{j,n}\leq k \leq w_{j,n}} T_{k,n}(G)$$}

The pairs $(v_{j,n},w_{j,n})$, $j=1,\ldots,\widehat{q}_n$, give start and end points of intervals on which the statistic exceeds the threshold. For this reason we call $[v_{j,n},w_{j,n}]$, $j=1,\ldots,\widehat{q}_n$, intervals of exceedings or exceeding intervals.

Condition \eqref{MOSUMcond3} avoids false positives obtained due to random fluctuation when the linear signal crosses the threshold. While the linear signal is strictly monotone when crossing the threshold, the noisy version may not be -- possibly jumping beneath the threshold and then above again simply by chance.  

By the below theory (see Section~\ref{section_consistency}), we can be relatively certain (with asymptotic probability one), that the exceeding intervals contain a change point.  Typically $T_{k,n}(G)$ depends on an estimator for the long-run covariance matrix $\bs{\Gamma}_k$ resp.\ $\boldsymbol{\Sigma}_k$, where in many situations a local estimator (depending on the time point $k$) is required because the long-run covariance matrix depends on the segment. However, once the exceeding intervals have been determined, different and possibly better estimators  $\widehat{\boldsymbol{\Psi}}_{j,n}$, $j=1,\ldots,\widehat{q}_n$, can be used to obtain the final change point estimator
\begin{align}\label{eq_est_Psi}
	\widehat{k}_{j,n}(\widehat{\boldsymbol{\Psi}}_{j,n})=\argmax\limits_{v_{j,n}\leq k \leq w_{j,n}} T_{k,n}(G;\widehat{\boldsymbol{\Psi}}_{j,n}),
\end{align}
where $\widehat{\boldsymbol{\Psi}}_{j,n}$ is a sequence of symmetric positive definite matrices fulfilling Assumptions~\ref{ass_Psi} below.
For a real-valued estimating function $\mathbf{H}$ this is always equivalent to using the unscaled maximizer, i.e.\ $\widehat{\boldsymbol{\Psi}}_{j,n}=1$. For a vector-valued estimating function it is also possible (e.g. for computational reasons) to use the identity matrix $\widehat{\boldsymbol{\Psi}}_{j,n}=\mathbf{\mbox{Id}}$ without jeopardizing the asymptotic localization rates.
Indeed, the  theory does not even require the $\widehat{\boldsymbol{\Psi}}_{j,n}$ to be consistent estimators for the long-run covariance matrix.
 While we use the initial estimator in the simulation study, we obtain the localization rates for the latter.  A corresponding analysis for covariance estimators  depending on the time point $k$ is not possible in our general framework, but can be done for specific estimators (see e.g. Remark 3.2 and Corollary 3.1 in \cite{KirchMuhsal} for an example).

 \subsection{Examples}\label{subsection_examples}
The framework that is used in this work covers many different situations (see e.g. \cite{kirch2015use} and \cite{kirch2018modified} for additional examples). Here, we concentrate on the following three examples: Uni- and multivariate  changes in expectation where the sample mean but also more robust estimators can be used, linear regression and Poisson autoregression.
 
\subsubsection{Mean Change Model}\label{examplemeanchange}
	
The univariate mean change model is given by 
\begin{align*}
	X_i=\sum_{j=1}^{q+1}\mu_j\, 1_{\{k_{j-1,n}<i\le k_{j,n}\}} + \varepsilon_i,
\end{align*}
		where $\{\varepsilon_i: i\geq 1\}$ is a centered stationary error sequence with autocovariance function $\gamma(\cdot)$ and long-run variance $0<\tau^2=\sum_{h\in \mathbb{Z}}\gamma(h)<\infty$. Thus, the expected value of each stretch is given by  $\mu_j\neq \mu_{j+1}$, $j=1,\ldots,q$. 
		The MOSUM statistic discussed in \cite{KirchMuhsal} is based on the sample mean $\bar{X}_{1,n}=\frac{1}{n}\sum_{i=1}^nX_i$ which is the solution of the estimating equation $\sum_{i=1}^n(X_i-\mu)\overset{!}{=}0$, hence $\mathbb{X}_i=X_i$.

 An alternative M-estimator for the expectation for symmetric error distributions is based on the  estimating function $H(X_i,\mu)=\frac{2}{\pi}\arctan(\mu-X_i)$. This estimating function leads to more robust estimators and can be seen as a smooth approximation of the sign-function that leads to the median.
 Henceforth, we  will therefore call the corresponding estimator \emph{median-like estimator}.

 A multivariate version $\boldsymbol{X}_i=(X_{1,i},\ldots,X_{d,i})^T$ is obtained by replacing $\varepsilon_j$ as well as $\mu_j$ with multivariate quantities.  The multivariate estimating function  is given by\\ $\boldsymbol{H}(\boldsymbol{X}_i,\boldsymbol{\mu})=(H(X_{1,i},\mu_1),\ldots,H(X_{d,i},\mu_d))^T$ for $\boldsymbol{\mu}=(\mu_1,\ldots,\mu_d)^T$, i.e.\  $\mathbb{X}_i=\boldsymbol{X}_i$. Effectively, the procedure uses the corresponding univariate estimators in each component.  Consequently, an estimator for the long-run covariance matrix is now required which poses a major difficulty in practice.

\subsubsection{Linear Regression Model with Structural Breaks}\label{examplelinearregression}
The  linear regression model with changes is given by 
\begin{align*}
	X_i=\sum_{j=1}^{q+1}\bs{Z}_i^T\bb_j\, 1_{\{k_{j-1,n}<i\le k_{j,n}\}} + \varepsilon_i,
\end{align*}
		where $X_i$ denotes the response variable, $\bs{Z}_i$ are the regressors, $\bb_1,\ldots,\bb_{q+1}$ are the parameter vectors and $\{\varepsilon_i\}_{i\geq 1}$ represents an innovation sequence as in Section \ref{examplemeanchange}. 		
A least squares approach minimising the sum of squared residuals corresponds to the vector-valued estimating function 
\begin{align*}
	\bs{H}\left((X_i,\bs{Z}_i)^T,\bb\right)=-2\bs{Z}_{i}\left(X_i-\bs{Z}_{i}^T\bb\right),
\end{align*}
i.e.\ $\mathbb{X}_i=(X_i,\bs{Z}^T_{i})^T$. For the mathematical results some additional assumptions on the regressors such as stationarity and independence from the innovation sequence are required.

\subsubsection{Poisson Autoregressive Model with Structural Breaks}\label{examplepoiautoreg}
In this example we consider the Poisson autoregressive model of order one, which is a widespread model for integer-valued time series.

A Poisson autoregressive model of order one, also known as INARCH(1) model, with $q$ change points can be described by \eqref{definitiongenparachamod}, where 
$\{X_i^{(j)}\}$ are INARCH(1) time series with parameters $\bt_j=(\theta_{j,1},\theta_{j,2})^T$, $j=1,\ldots,q+1$, i.e.
\begin{align*}X_i^{(j)}|\mathcal{F}_{i-1} \sim Poi(\lambda_i),\quad\text{with } \lambda_i=\theta_{j,1}+\theta_{j,2}X_{i-1}.\end{align*}
	In this model the observation $X_i$ conditioned on the past is Poisson distributed with a parameter $\lambda_i$ depending on the past observation. A (partial) likelihood approach leads to the  estimating function  
	$$\bs{H}((X_i,X_{i-1})^T,\bt)=-2\bs{X}_{i-1} \left(\frac{X_i}{\bs{X}_{i-1}^T\bt} -1\right),$$
	where $\bs{X}_{i-1}=\left(1,X_{i-1}\right)^T$, i.e.\ here $\mathbb{X}_i=(X_i,X_{i-1})^T$.

\subsubsection{Further Examples}
In the context of at-most-one-change a posteriori  as well as sequential change point analysis, many procedures proposed in the literature are of this type (see \cite{KirchTadjHandb} as well as \cite{kirch2015use} for more examples), such that their extensions to the data segmentation problem is covered by this framework.

In the context of the mean change model as in Section~\ref{examplemeanchange}, in many applications the variance of the error sequence $\var(\varepsilon_i)=\sigma^2$ changes in addition to the mean. Thus, it is reasonable to assume that the variance is only piecewise constant such that a change point can be caused by a mean change, a variance change or a change in both parameters. In order to detect these types of changes one can use the following vector-valued estimating function 
$\bs{H}(X_i,\mu,\sigma^2)=\left(X_i-\mu,(X_i-\mu)^2-\sigma^2\right)^T$ (which relates to the method of moments but also the maximum likelihood estimator under Gaussianity assumption).  A related MOSUM-Wald statistic has recently been proposed and analyzed by \cite{messer2019bivariate}.

\subsection{Threshold Selection}\label{sectionthreshold}
In the following theorem we derive the limit distribution of the maximum of the MOSUM statistics if no structural break occurs. 
To this end, let
\begin{align}\label{defaandbfunction}
&a(x)= \sqrt{2\log(x)} \text{ and }\\
&b(x)=2\log(x)+\frac{p}{2}\log(\log(x))-\log\left(\frac{2}{3}\,\Gamma\left(\frac{p}{2}\right)\right),\notag
\end{align}
where $p$ is the dimension of the parameter space and $\Gamma$ denotes the gamma function.

\begin{theorem}\label{limitdistribution}
	Consider model~\eqref{definitiongenparachamod} with no change, i.e.\ $q=0$.
	Let Assumptions~\ref{as.bandwidth} on the bandwidth hold, in addition to
	Assumptions \ref{as.invariancemulti}, where (b) needs to hold with $\bt=\bt_1$ (as defined in Assumption~\ref{aswaldaltnullseg}) for the MOSUM-Wald and with $\bt=\tbt$ for the MOSUM-score statistic. For the MOSUM-Wald statistic, let also Assumption~\ref{aswaldaltnullseg} hold.
	\begin{enumerate}[(a)]
	 \item Then, for both the MOSUM-Wald ($\ell=1$) as well as the MOSUM-score ($\ell=2$ with $T_{k,n}^{(2)}(G)=T_{k,n}^{(2)}(G,\tbt)$) statistic it holds
	\begin{align*}
		a(n/G)\,\max_{k=G,\ldots,n-G}T^{(\ell)}_{k,n}(G)-b(n/G)\; \xrightarrow{\mathcal{D}}\; E
\end{align*}
with \(E\) a Gumbel distributed random variable fulfilling \(P(E\leq x)= \exp(-2\exp(-x))\) and with $a(x)$ and $b(x)$ as in \eqref{defaandbfunction}.
\item For the MOSUM-score statistic the result remains true for data-dependent inspection parameters $\tbt_n$ if Assumption~\ref{as.replacemulti} holds. Additionally, the assertions remain true if the long-run covariance matrix $\boldsymbol{\Sigma}_{1}$ is replaced by an estimator fulfilling (with $\|\cdot\|$ denoting the spectral norm of a matrix)	\begin{align*}
\max_{G\leq k \leq n-G}\left\lVert \widehat{\boldsymbol{\Sigma}}_{k,n}^{-1/2} - \boldsymbol{\Sigma}_1^{-1/2}\right\rVert=o_P\biggl(\bigl(\log(n/G)\bigr)^{-1}\biggr).
\end{align*}
\item The assertion for the MOSUM-Wald statistic remains true if both $\boldsymbol{\Sigma}_{1}$ and $\boldsymbol{V}_1$ are replaced by estimators $\widehat{\boldsymbol{\Sigma}}_{k,n}$ respectively $\widehat{\boldsymbol{V}}_{k,n}$ fulfilling
	\begin{align}\label{eq_Gamma_null}\max_{G\leq k \leq n-G}\left\lVert\boldsymbol{\Sigma}_1^{-1/2}\bs{V}(\bt_1)-\widehat{\boldsymbol{\Sigma}}_{k,n}^{-1/2}\widehat{\boldsymbol{V}}_{k,n}\right\rVert=o_P\left(\left(\log(n/G)\right)^{-1}\right).\end{align}
\end{enumerate}
\end{theorem}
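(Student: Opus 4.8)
The plan is to show that, under the null $q=0$, both the MOSUM-Wald and the MOSUM-score statistic are asymptotically equal, uniformly in $k$ and up to an error of order $o_P((\log(n/G))^{-1/2})$, to one and the same standardized Gaussian moving-sum process, and then to apply extreme-value theory to that common limit. First I would treat the score statistic. With $\bt=\tbt$ one has $\E(\boldsymbol{H}(\mathbb{X}_i,\tbt))=\boldsymbol{0}$, so the strong invariance principle of Assumption~\ref{as.invariancemulti}(b) gives, possibly on a new probability space, $\boldsymbol{S}_1(k,\tbt)=\boldsymbol{\Sigma}_{(1)}(\tbt)^{1/2}\boldsymbol{W}(k)+O(k^{1/(2+\nu)})$ a.s. Writing $\M_{\tbt}(k)=\boldsymbol{S}_1(k+G,\tbt)-2\boldsymbol{S}_1(k,\tbt)+\boldsymbol{S}_1(k-G,\tbt)$ and using $\boldsymbol{\Sigma}_k(\tbt)=\boldsymbol{\Sigma}_{(1)}(\tbt)$ under no change, I obtain
\[
T^{(2)}_{k,n}(G)=\frac{1}{\sqrt{2G}}\left\lVert\boldsymbol{W}(k+G)-2\boldsymbol{W}(k)+\boldsymbol{W}(k-G)\right\rVert+R_{k,n}.
\]
Assumption~\ref{as.bandwidth}(a) is exactly what renders $\max_k|R_{k,n}|=o_P((\log(n/G))^{-1/2})$: the approximation error is $O(n^{1/(2+\nu)})$, and dividing by $\sqrt{G}$ leaves it negligible on the $(\log(n/G))^{-1/2}$ scale precisely because $n^{2/(2+\nu)}\log(n)/G\to 0$.

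For the Wald statistic I would first invoke the Bahadur representation of Assumption~\ref{aswaldaltnullseg} to linearise each local estimator, which yields $\sqrt{G/2}\,\boldsymbol{V}_1(\hbt_{k+1,k+G}-\hbt_{k-G+1,k})=-\tfrac{1}{\sqrt{2G}}\M_{\bt_1}(k)+o_P((\log(n/G))^{-1/2})$ uniformly in $k$. The key algebraic observation is that $\boldsymbol{\Gamma}_k=\boldsymbol{V}_1^{-1}\boldsymbol{\Sigma}_1(\boldsymbol{V}_1^{-1})^T$, so $\boldsymbol{\Gamma}_k^{-1/2}\boldsymbol{V}_1^{-1}\boldsymbol{\Sigma}_1^{1/2}$ is orthogonal and hence $\lVert\boldsymbol{\Gamma}_k^{-1/2}\boldsymbol{V}_1^{-1}\boldsymbol{x}\rVert=\lVert\boldsymbol{\Sigma}_1^{-1/2}\boldsymbol{x}\rVert$ for every $\boldsymbol{x}$. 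Applying this with $\boldsymbol{x}=\M_{\bt_1}(k)$ shows that $T^{(1)}_{k,n}(G)$ coincides with the same standardized Gaussian moving-sum expression as $T^{(2)}_{k,n}(G)$, again up to an $o_P((\log(n/G))^{-1/2})$ remainder. This reduces part~(a) for both statistics to a single extreme-value statement.

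The core of the proof is therefore the extreme-value limit for $\max_k\tfrac{1}{\sqrt{2G}}\lVert\boldsymbol{W}(k+G)-2\boldsymbol{W}(k)+\boldsymbol{W}(k-G)\rVert$. By Brownian scaling $\boldsymbol{W}(G\,\cdot)/\sqrt{G}\overset{d}{=}\boldsymbol{W}(\cdot)$ this is, after rescaling time by $G$, the maximum over an interval of diverging length $\asymp n/G$ of the Euclidean norm of a $p$-dimensional stationary Gaussian process with independent components, each having the triangular correlation $r(h)=1-\tfrac32|h|+o(|h|)$ at the origin (the $\alpha=1$ case). The norm is marginally $\chi_p$-distributed, and its tail $P(\chi_p>u)\sim(2^{p/2-1}\Gamma(p/2))^{-1}u^{p-2}e^{-u^2/2}$ produces the $\tfrac p2\log\log$ term and the $\Gamma(p/2)$ in $b(x)$, while the Pickands/Berman analysis for $\alpha=1$ processes together with the slope $\tfrac32$ of the correlation yields the factor $\tfrac23$ and the doubling $\exp(-2\exp(-x))$. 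I expect this multivariate extreme-value step to be the main obstacle: the univariate version is the result of \cite{KirchMuhsal}, and the hard part will be to extend their upcrossing/Poisson-clumping argument to the norm of the vector process, to pin down the constants through the $\chi_p$ tail, and to pass between the discrete statistic and the continuous Gaussian field via a sufficiently fine grid approximation (again controlled by Assumption~\ref{as.bandwidth}).

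Finally, parts~(b) and~(c) would follow from part~(a) by perturbation arguments, once one notes that $\max_k T^{(\ell)}_{k,n}(G)=O_P(\sqrt{\log(n/G)})$. For the data-dependent inspection parameter, the reverse triangle inequality and Assumption~\ref{as.replacemulti}(i), which under $q=0$ holds over the whole range $G\le k\le n-G$, give $\tfrac{1}{\sqrt{2G}}\lVert\boldsymbol{\Sigma}_k(\tbt)^{-1/2}\rVert\,\lVert\M_{\tbt_n}(k)-\M_{\tbt}(k)\rVert=o_P((\log(n/G))^{-1/2})$ uniformly. For the estimated covariance the difference is bounded by $\lVert\widehat{\boldsymbol{\Sigma}}_{k,n}^{-1/2}-\boldsymbol{\Sigma}_1^{-1/2}\rVert\cdot\tfrac{1}{\sqrt{2G}}\lVert\M_{\tbt}(k)\rVert=o_P((\log(n/G))^{-1})\cdot O_P(\sqrt{\log(n/G)})=o_P((\log(n/G))^{-1/2})$. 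For the Wald statistic in~(c) the identity $\lVert\widehat{\boldsymbol{\Gamma}}_{k,n}^{-1/2}\boldsymbol{x}\rVert=\lVert\widehat{\boldsymbol{\Sigma}}_{k,n}^{-1/2}\widehat{\boldsymbol{V}}_{k,n}\boldsymbol{x}\rVert$, valid because $\widehat{\boldsymbol{\Gamma}}_{k,n}^{-1}=\widehat{\boldsymbol{V}}_{k,n}^T\widehat{\boldsymbol{\Sigma}}_{k,n}^{-1}\widehat{\boldsymbol{V}}_{k,n}$, shows that the relevant standardizing object is exactly $\widehat{\boldsymbol{\Sigma}}_{k,n}^{-1/2}\widehat{\boldsymbol{V}}_{k,n}$, so condition~\eqref{eq_Gamma_null} together with $\lVert\sqrt{G/2}(\hbt_{k+1,k+G}-\hbt_{k-G+1,k})\rVert=O_P(\sqrt{\log(n/G)})$ produces an $o_P((\log(n/G))^{-1/2})$ perturbation. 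In every case the perturbation is multiplied by $a(n/G)\asymp\sqrt{\log(n/G)}$ and hence vanishes, so Slutsky's theorem leaves the Gumbel limit of part~(a) unchanged.
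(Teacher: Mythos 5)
Your proposal is correct and follows essentially the same route as the paper: reduce the MOSUM-score statistic to the standardized Gaussian moving-sum process via the invariance principle and the bandwidth condition, reduce the MOSUM-Wald statistic to the score statistic via the Bahadur representation of Assumption~\ref{aswaldaltnullseg} (your orthogonality identity $\lVert\boldsymbol{\Gamma}_k^{-1/2}\boldsymbol{V}_1^{-1}\boldsymbol{x}\rVert=\lVert\boldsymbol{\Sigma}_1^{-1/2}\boldsymbol{x}\rVert$ is exactly what the paper leaves implicit there), and obtain (b) and (c) by the same triangle-inequality perturbation bounds against the $O_P(\sqrt{\log(n/G)})$ size of the maximum. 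The one step you flag as the main obstacle --- the Gumbel limit for $\sup_t \tfrac{1}{\sqrt{2}}\lVert\boldsymbol{W}(t+1)-2\boldsymbol{W}(t)+\boldsymbol{W}(t-1)\rVert$ --- is not proved from scratch in the paper either: it is quoted from Lemma 3.1 and Remark 3.1 of \cite{SteinebachEastwood} with $\alpha=1$ and $C_1=\cdots=C_p=3/2$, and the ingredients you identify (the $\chi_p$ tail producing the $\tfrac{p}{2}\log\log$ term and $\Gamma(p/2)$ in $b(x)$, the correlation slope $\tfrac32$ at the origin, and the doubling $\exp(-2e^{-x})$) all match that cited result, so no new extreme-value theory needs to be developed.
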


As threshold in our segmentation algorithm as described in Section~\ref{sec_seg_alg}  we use the asymptotic $(1-\alpha_n)$-quantiles as given in Theorem~\ref{limitdistribution}  i.e.\ \begin{align}\label{eq_neu_threshold}
	D_n(\alpha_n,G)= \frac{b(n/G)+c_{\alpha_n}}{a(n/G)},\qquad c_{\alpha_n}=-\log\log\frac{1}{\sqrt{1-\alpha_n}},
\end{align}
where $c_{\alpha_n}$ is the $(1-\alpha_n)$-quantile of the limiting Gumbel distribution.
Furthermore, we require
$\alpha_n\to 0$ sufficiently slowly in the sense of: \begin{assumption}\label{as.siglevel}
Let the sequence of significance levels $\alpha_n$ fulfill
\begin{align*}
	\alpha_n \rightarrow 0\;\;\; \text{and} \;\;\;c_{\alpha_n}=O(b(n/G)).\end{align*}\end{assumption}

\begin{remark}
For fixed $\alpha$ the threshold $D_n(\alpha,G)$ can be used as a critical value in a test procedure based on MOSUM statistics for the null hypothesis of no change versus the alternative of at least one change.
Proposition~\ref{propositionsprobminmaxwald} below shows when such a test procedure has asymptotic power one.
 While the power is usually not as good as for the corresponding at-most-one-change statistics even in the presence of multiple changes, the latter are not as suitable for the localization of change points.
\end{remark}

\section{Consistency of the MOSUM Segmentation Procedures}\label{section_consistency}\subsection{MOSUM-Wald Procedure}\label{chapterwald}
Part (a) of the following assumption  controls the behavior of estimators that are not contaminated by change points and extends standards results from a pointwise to a uniform (within a $G$-environment) situation. It follows from Assumption~\ref{aswaldaltnullseg} under weak assumptions  on the time series and estimating functions. On the other hand, for contaminated estimators (containing observations from two segments), the estimator cannot be expected to converge to either of the best approximating parameters but will be contaminated by both (see Theorem~\ref{theoremrootnconsisalt} below). Part (b) of the below assumption guarantees that the estimator differs from the best approximating parameter of the $j$th segment if at least $\varepsilon G$ of the summands are from a neighboring regime.
 For sufficiently smooth estimating functions and mixing time series, the validity of the assumption will be shown in Section~\ref{sectionasunderregcond} below.
 \begin{assumption}\label{Ass_31_new}For any $j=1,\ldots,q$ let
	 \begin{align*}
&(a)\qquad
\max_{\substack{k_{j-1,n}<k\le k_{j-1,n}+G \\ k_{j,n}-2G\le k\le k_{j,n}-G}}\sqrt{G}\,\|\boldsymbol{\widehat{\theta}}_{k+1,k+G}- \boldsymbol{\theta}_j\|=O_P\left( \sqrt{\log(n/G)} \right).\\
&(b)\qquad 
\sqrt{\frac{G}{\log(n/G)}}\min_{\substack{k_{j-1,n}-G\le k\le k_{j-1,n}-\varepsilon\,G \\ k_{j,n}-(1-\varepsilon)\,G\le k\le k_{j,n}}}\|\boldsymbol{\widehat{\theta}}_{k+1,k+G}- \boldsymbol{\theta}_j\|\overset{P}{\longrightarrow}\infty.
\end{align*}\end{assumption}

Because the long-run covariance matrix $\bs{\Gamma}_k=\boldsymbol{V}_{k}^{-1}\boldsymbol{\Sigma}_{k}\left(\boldsymbol{V}_{k}^{-1}\right)^T$ is usually unknown in applications we need to understand the influence of estimating this quantity on the procedure. This is reflected by the following assumption stating that the estimator needs to be consistent away from changes (related to \eqref{eq_Gamma_null}), while we can allow for a certain amount of misspecification close to the changes.
\begin{assumption}\label{asaltestcovmatwald}
Let \(\widehat{\boldsymbol{\Gamma}}_{k,n}\) be an estimator sequence of $\bs{\Gamma}_k$ satisfying
 
\begin{enumerate}[(a)]
\item   $\max_{k_{j-1,n}+G\le k\le k_{j,n}-G}\left\lVert \boldsymbol{\widehat{\Gamma}}^{-1/2}_{k,n}-\boldsymbol{\Gamma}_k^{-1/2}\right\rVert=o_P\left(\log(n/G)^{-1}\right)$\\[1mm] for any $j=1,\ldots,q+1$,
\item for any $j=1,\ldots,q$ that
	\begin{align*}
		&\sup_{|k-k_{j,n}|\le (1-\varepsilon) G}\left\lVert\boldsymbol{\widehat\Gamma}_{k,n}^{1/2}\right\rVert<\infty,\qquad \sup_{|k-k_{j,n}|\le (1-\varepsilon) G}\left\lVert\boldsymbol{\widehat\Gamma}_{k,n}^{-1/2}\right\rVert<\infty.
\end{align*}
\end{enumerate}
\end{assumption}

\begin{remark}\label{rem_cov_wald}
	In Assumption~\ref{asaltestcovmatwald} (a) we require a uniformly consistent covariance estimator  at locations well away from change points. This assumption allows to work with the minimal threshold as discussed in Section~\ref{sectionthreshold}  of exact order $\sqrt{\log(n/G)}$ which in turns leads to minimal assumptions on the signal strength in Assumption~\ref{Ass_31_new}(b).
	In practice, it can be difficult or numerically expensive to use such estimators (see also Section~\ref{section_simulation}). 
In such cases, we can replace this assumption  by the weaker one that
	\begin{align*}
		\max_{k_{j-1,n}+G\le k\le k_{j,n}-G}\left\lVert \boldsymbol{\widehat{\Gamma}}^{-1/2}_{k,n}\right\rVert<\infty.
	\end{align*}
	Then, the below results remain true as long as we use a 
slightly larger threshold $\widetilde{D}_n(G)$ that fulfills $\widetilde{D}_n(G)/\sqrt{\log(n/G)}\to\infty$ as well as 
	slightly  strengthen Assumption~\ref{Ass_31_new}(b) on the signal strength to
	\begin{align*}
		\frac{\sqrt{G}}{\widetilde{D}_n(G)}\min_{\substack{k_{j-1,n}-G\le k\le k_{j-1,n}-\varepsilon\,G \\ k_{j,n}-(1-\varepsilon)\,G\le k\le k_{j,n}}}\|\boldsymbol{\widehat{\theta}}_{k+1,k+G}- \boldsymbol{\theta}_j\|\overset{P}{\longrightarrow}\infty.
	\end{align*}
\end{remark}

We are now ready to prove a proposition, from which we will derive a consistency theorem for the MOSUM segmentation including first rates of convergence.
\begin{proposition}\label{propositionsprobminmaxwald}
	Let Assumptions~\ref{as.bandwidth} on the bandwidth hold, in addition to
	Assumptions \ref{as.invariancemulti}, where (b) needs to hold with $\bt=\bt_j$, $j=1,\ldots,q+1$, (as defined in Assumption~\ref{aswaldaltnullseg}) in addition to Assumptions~\ref{aswaldaltnullseg} and \ref{Ass_31_new}.
	\begin{enumerate}[(a)]
		\item Then, 	for $D_n(\alpha_n,G)$ as in \eqref{eq_neu_threshold} fulfilling Assumption~\ref{as.siglevel}		\begin{align*}
				(i)\quad &P\left(\max_{j=1,\ldots,q+1}\max_{k_{j-1,n}+G\le k\le k_{j,n}-G} T^{(1)}_{k,n}(G)<D_n(\alpha_n,G)\right)\rightarrow 1,\\
				(ii)\quad &P\left(\min_{j=1,\ldots,q}\,\min_{|k-k_{j,n}|\le (1-\varepsilon)\,G}
	T^{(1)}_{k,n}(G)		\geq D_n(\alpha_n,G)\right)\rightarrow 1.
			\end{align*}
	\item Furthermore, the statements of (a) remain true if the long-run covariance matrix  \(\boldsymbol{\Gamma}_k\) is replaced by estimators \(\widehat{\boldsymbol{\Gamma}}_{k,n}\) satisfying Assumption \ref{asaltestcovmatwald}.
\end{enumerate}
\end{proposition}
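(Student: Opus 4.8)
The threshold \eqref{eq_neu_threshold} is calibrated through Assumption~\ref{as.siglevel} so that $a(n/G)\,D_n(\alpha_n,G)-b(n/G)=c_{\alpha_n}\to\infty$ while $D_n(\alpha_n,G)$ stays of exact order $\sqrt{\log(n/G)}$; the first fact drives the no-change bound (a)(i), the second the detection bound (a)(ii). The plan is to prove (a)(i) and (a)(ii) with the true covariance first and then obtain (b) by a perturbation argument.

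For (a)(i), fix a segment $j$ and observe that for $k$ in the interior $k_{j-1,n}+G\le k\le k_{j,n}-G$ both windows $[k-G+1,k]$ and $[k+1,k+G]$ use only segment-$j$ observations (up to finitely many lagged terms at the boundary). Hence $T^{(1)}_{k,n}(G)$ there coincides with the MOSUM-Wald statistic of the stationary no-change sequence $\{\mathbb{X}^{(j)}_i\}$, whose long-run covariance is $\boldsymbol{\Gamma}_{(j)}=\boldsymbol{\Gamma}_k$. For that sequence Assumption~\ref{as.invariancemulti}(b) holds with $\bt=\bt_j$, Assumption~\ref{aswaldaltnullseg} holds, and the bandwidth conditions hold for $n$, so Theorem~\ref{limitdistribution}(a) (with $\ell=1$) yields $a(n/G)\max_{G\le k\le n-G}T^{(1)}_{k,n}(G)-b(n/G)\xrightarrow{\mathcal D}E$ for that no-change sample. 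Restricting the maximum to the interior of segment $j$ only decreases it, so $a(n/G)\max_{k_{j-1,n}+G\le k\le k_{j,n}-G}T^{(1)}_{k,n}(G)-b(n/G)$ is bounded above in probability, which is $<c_{\alpha_n}$ with probability tending to one; a union bound over the $q+1$ segments gives (a)(i).

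For (a)(ii) I would use that near $k_{j,n}$ exactly one window stays uncontaminated. On $k_{j,n}-(1-\varepsilon)G\le k\le k_{j,n}$ the backward window lies in segment $j$ and satisfies $\sqrt G\,\|\hbt_{k-G+1,k}-\bt_j\|=O_P(\sqrt{\log(n/G)})$ by Assumption~\ref{Ass_31_new}(a), whereas the forward window contains at least $\varepsilon G$ observations from the neighbouring regime and satisfies $\sqrt G\,\|\hbt_{k+1,k+G}-\bt_j\|\gg\sqrt{\log(n/G)}$ by Assumption~\ref{Ass_31_new}(b); on $k_{j,n}<k\le k_{j,n}+(1-\varepsilon)G$ the roles swap, and the bound for the contaminated backward window follows from Assumption~\ref{Ass_31_new} applied to the adjacent segment after writing $\hbt_{k-G+1,k}=\hbt_{k'+1,k'+G}$ with $k'=k-G$. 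In either case the reverse triangle inequality gives $\sqrt G\,\|\hbt_{k+1,k+G}-\hbt_{k-G+1,k}\|\gg\sqrt{\log(n/G)}$ uniformly, because the diverging contaminated term dominates the $O_P(\sqrt{\log(n/G)})$ clean term. Since $\boldsymbol{\Gamma}_k$ takes finitely many positive definite values, $\boldsymbol{\Gamma}_k^{-1/2}$ is bounded below and $T^{(1)}_{k,n}(G)\ge c\,\sqrt{G/2}\,\|\hbt_{k+1,k+G}-\hbt_{k-G+1,k}\|\gg\sqrt{\log(n/G)}$, which exceeds $D_n(\alpha_n,G)=O(\sqrt{\log(n/G)})$; a union bound over the finitely many changes gives (a)(ii).

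Finally, for (b) I would treat $\widehat{\boldsymbol{\Gamma}}_{k,n}$ as a perturbation. Away from changes, $|T^{(1)}_{k,n}(G;\widehat{\boldsymbol{\Gamma}}_{k,n})-T^{(1)}_{k,n}(G;\boldsymbol{\Gamma}_k)|\le\|\widehat{\boldsymbol{\Gamma}}^{-1/2}_{k,n}-\boldsymbol{\Gamma}_k^{-1/2}\|\,\sqrt{G/2}\,\|\hbt_{k+1,k+G}-\hbt_{k-G+1,k}\|$, where the last factor is $O_P(\sqrt{\log(n/G)})$ by (a)(i) and the first is $o_P(\log(n/G)^{-1})$ by Assumption~\ref{asaltestcovmatwald}(a); hence the perturbation is $o_P((\log(n/G))^{-1/2})=o_P(a(n/G)^{-1})$ and is absorbed into the $O_P(1)$ extreme-value term. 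Near changes, Assumption~\ref{asaltestcovmatwald}(b) gives $\sup\|\widehat{\boldsymbol{\Gamma}}_{k,n}^{1/2}\|<\infty$, so $\|\widehat{\boldsymbol{\Gamma}}^{-1/2}_{k,n}v\|\ge\|v\|/\sup\|\widehat{\boldsymbol{\Gamma}}_{k,n}^{1/2}\|$ and the lower bound of (a)(ii) survives with a smaller positive constant. The main obstacle is precisely this rate bookkeeping: the argument works only because the Bahadur remainder in Assumption~\ref{aswaldaltnullseg} and the covariance error in Assumption~\ref{asaltestcovmatwald}(a) are $o_P((\log(n/G))^{-1/2})=o_P(a(n/G)^{-1})$, so that after the normalization $a(n/G)$ they vanish without consuming the margin $c_{\alpha_n}$, together with the careful matching of the index ranges in Assumption~\ref{Ass_31_new} to the clean and contaminated windows flanking each change.
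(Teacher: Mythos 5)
Your proof is correct and follows essentially the same route as the paper: a union bound over segments combined with Theorem~\ref{limitdistribution} for (a)(i), the clean/contaminated window split via the triangle inequality through $\bt_j$ (resp.\ $\bt_{j+1}$ on the right of the change) together with Assumption~\ref{Ass_31_new}(a)/(b) and the matrix-norm bounds for (a)(ii), and the covariance-perturbation argument under Assumption~\ref{asaltestcovmatwald} for (b). Your treatment of (b) is in fact more explicit than the paper's, which only remarks that it ``follows along the same lines.''
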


The following theorem shows, that the number of change points are estimated consistently and that the distance from each change point to the closest estimator is smaller than $G$.

\begin{theorem}\label{theorem.consistencyqwald}
Let the assumptions of Proposition \ref{propositionsprobminmaxwald} hold (either with known or estimated long-run covariance matrix).
Then, as $n\to\infty$,
\begin{align*}
	P\left(\widehat{q}_n^{(1)}=q, \max_{1\leq j\leq q}
	\left|\widehat{k}^{(1)}_{j,n}-k_{j,n}\right|< G\right)\rightarrow 1,
\end{align*}
where $\widehat{q}_n^{(1)}$ is the estimated number of change points based on the MOSUM-Wald statistics and $\widehat{k}^{(1)}_{j,n}$ are the corresponding (ordered) change point estimators.
\end{theorem}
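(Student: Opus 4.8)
The plan is to convert the two probabilistic statements of Proposition~\ref{propositionsprobminmaxwald} into a purely deterministic combinatorial argument about the exceeding intervals of \eqref{MOSUMcond3}. Let $A_n$ be the intersection of the events in parts~(a)(i) and~(a)(ii) of Proposition~\ref{propositionsprobminmaxwald} (using either the known $\boldsymbol{\Gamma}_k$ or, via part~(b), an estimator satisfying Assumption~\ref{asaltestcovmatwald}); then $P(A_n)\to 1$, and it suffices to show that for all sufficiently large $n$ the event $A_n$ is contained in $\{\widehat q_n^{(1)}=q,\ \max_{1\le j\le q}|\widehat k^{(1)}_{j,n}-k_{j,n}|<G\}$. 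On $A_n$ the statistic lies strictly below $D_n(\alpha_n,G)$ at every point of each ``plateau'' $[k_{j-1,n}+G,\,k_{j,n}-G]$, $j=1,\ldots,q+1$, and at or above $D_n(\alpha_n,G)$ at every point of each ``peak'' $[k_{j,n}-(1-\varepsilon)G,\,k_{j,n}+(1-\varepsilon)G]$, $j=1,\ldots,q$.

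First I would localize the exceeding intervals. Since $G=G_n\to\infty$ by Assumption~\ref{as.bandwidth}(a), Assumption~\ref{as.bandwidth}(b) yields $k_{j,n}-k_{j-1,n}-2G\to\infty$, so every plateau contains integer points and separates the consecutive open gaps $I_j:=(k_{j,n}-G,\,k_{j,n}+G)$, $j=1,\ldots,q$. As the statistic is strictly below the threshold at each plateau point, no exceeding run can contain a plateau point, hence each exceeding interval is contained in exactly one $I_j$. Conversely, by the peak bound all points of $[k_{j,n}-(1-\varepsilon)G,\,k_{j,n}+(1-\varepsilon)G]$ satisfy $T^{(1)}_{k,n}(G)\ge D_n(\alpha_n,G)$ and therefore belong to one common exceeding run, whose length is at least $2(1-\varepsilon)G\ge\varepsilon G$ because $\varepsilon<1/2<2/3$. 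Thus each $I_j$ contains at least one admissible exceeding interval in the sense of \eqref{MOSUMcond3}.

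The crux is to exclude a second admissible interval inside a gap. Write $I_j$ as the disjoint union of the peak $[k_{j,n}-(1-\varepsilon)G,\,k_{j,n}+(1-\varepsilon)G]$ and the two ``unknown'' stretches $(k_{j,n}-G,\,k_{j,n}-(1-\varepsilon)G)$ and $(k_{j,n}+(1-\varepsilon)G,\,k_{j,n}+G)$, each of length $\varepsilon G$. Any exceeding run in $I_j$ that meets the peak is connected to the peak run (all peak points exceed the threshold) and is therefore the single peak-containing interval; any exceeding run that does not meet the peak is contained strictly in one unknown stretch (bounded on one side by a below-threshold plateau point and on the other by a dip), so its length is strictly less than $\varepsilon G$ and it violates the length requirement of \eqref{MOSUMcond3}. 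Hence each gap contains exactly one admissible exceeding interval, and because exceeding intervals live in the $q$ disjoint gaps, the number of admissible pairs equals $q$, i.e.\ $\widehat q_n^{(1)}=q$ on $A_n$.

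For the localization bound, the unique admissible interval associated with the $j$-th gap satisfies $[v_{j,n},w_{j,n}]\subseteq I_j=(k_{j,n}-G,\,k_{j,n}+G)$, so the corresponding estimator $\widehat k^{(1)}_{j,n}=\argmax_{v_{j,n}\le k\le w_{j,n}}T^{(1)}_{k,n}(G)$ obeys $|\widehat k^{(1)}_{j,n}-k_{j,n}|<G$; since the admissible pairs are produced in increasing order and the gaps are ordered, the $j$-th estimator falls in $I_j$, which gives the claimed maximal deviation. Combining this with $P(A_n)\to 1$ finishes the proof. The only genuinely delicate point is the length-threshold bookkeeping of the third paragraph: it is exactly the choice $0<\varepsilon<1/2$ that makes the guaranteed peak long enough ($\ge\varepsilon G$) to be admissible while keeping each leftover unknown stretch too short ($<\varepsilon G$) to generate a spurious admissible interval.
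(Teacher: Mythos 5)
Your proof is correct and follows essentially the same route as the paper: the paper's own proof consists precisely of the set inclusion you establish, namely that the intersection of the two events from Proposition~\ref{propositionsprobminmaxwald} (with known or estimated covariance) is contained in $\{\widehat{q}_n^{(1)}=q\}\cap\{\max_{1\leq j\leq q}|\widehat{k}^{(1)}_{j,n}-k_{j,n}|< G\}$, stated there as ``immediate.'' Your third paragraph simply spells out the combinatorial bookkeeping (peak runs of length at least $2(1-\varepsilon)G$ are admissible, spurious runs in the leftover stretches of length $\varepsilon G$ are not) that the paper leaves implicit.
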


As a special case it follows:

\begin{remark}\label{remarkconsistencysacorcpsestwald}
	In the classical multiple change point situation, where $k_{j,n}=\lfloor \lambda_j\, n\rfloor$ for $0<\lambda_1<\ldots<\lambda_q<1$  and $\widehat{\lambda}^{(1)}_{j,n}=\widehat{k}^{(1)}_{j,n}/n$, it holds
	\begin{align*}
		\max_{j=1,\ldots,\min(q,\widehat{q}^{(1)}_n)}\left|\widehat{\lambda}_{j,n}-\lambda_{j,n} \right|=O_P\left( \frac Gn \right)=o_P(1).
	\end{align*}
\end{remark}

\subsection{MOSUM-Score Procedure}\label{chapterscore}
The main disadvantage of the MOSUM-Wald procedure is its large computational complexity in non-linear problems where $n-G$ estimators need to be calculated via numerical optimization. Additionally, the use of numerical procedures can cause  statistical problems if distant parameters describe very similar models (corresponding to many local maxima for the numerical optimization). A prominent example of this type are neural networks.The MOSUM-score procedure on the other hand does not suffer from this problem but may  (depending on the underlying model and choice of estimating function) not detect all changes if only applied with a single inspection parameter.

\subsubsection{Detectability}\label{sectiondetectabibilty}
A change in the parameter vector $\bt$ can only be detected or localized by the MOSUM-score statistics if it causes a change in the expectation of the transformed series $\boldsymbol{H}(\mathbb{X}_i,\tbt)$ for the chosen inspection parameter $\tbt$.
Denote the corresponding set and its cardinality by
\begin{align}\label{as.alt.dectest2}
	&\tilde{Q}=\tilde Q(\tbt)=
	\left\{
          1\le j\le q\,\Big|\, \E\boldsymbol{H}(\mathbb{X}_{1}^{(j)},\boldsymbol{\widetilde{\theta}})\neq \E\boldsymbol{H}(\mathbb{X}_{1}^{(j+1)},\boldsymbol{\widetilde{\theta}}) 
	  \right\},
	  \qquad\notag\\&
	  \tilde{q}=\tilde{q}(\tbt)=|\tilde{Q}|.
\end{align}
In order to formulate the below results it is helpful to relabel detectable changes $k_{j,n}$ with $j\in \tilde{Q}$ by $\tilde{k}_{j,n}=\tilde{k}_{j,n}(\tbt)$, $j=1,\ldots,\tilde q$, where $\tilde{k}_{1,n}<\ldots<\tilde{k}_{\tilde{q},n}$.
\begin{figure}
	\caption{The plots give two examples where $\tilde{q}\neq q$, i.e.\ not all changes are detectable. The upper panel shows a simulated time series, while the lower panel shows the corresponding signal of the MOSUM-score statistic with the global median as inspection parameter and the estimating function of the median. The change points are indicated by the red vertical lines.}\label{figuredetectptob} 
\begin{center}\vspace{-3ex}
\includegraphics[width=0.9 \textwidth]{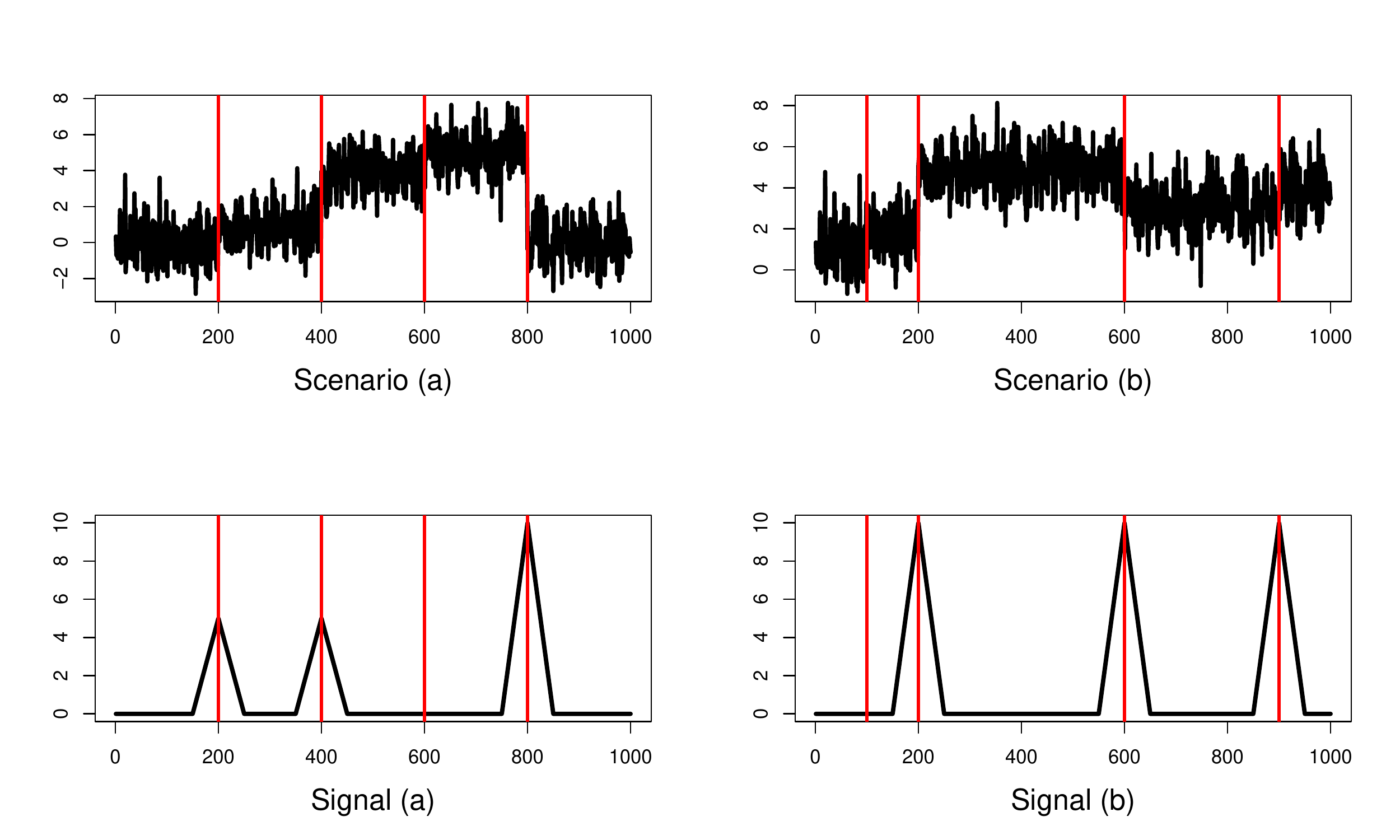}\end{center}\end{figure}

Clearly, the number of detectable changes depends on both the choice of estimating function and the inspection parameter. 
Figure \ref{figuredetectptob} gives an example for $\tilde q\neq q$, where the estimating function for the median, i.e.\ $H(x,\mu)=\sgn(x-\mu)$, was used with the global median as inspection parameter. Using a smooth strictly monotone approximation of this estimating function (compare Section~\ref{examplemeanchange}) makes all changes detectable theoretically but still leads to power loss for some changes, which can be remedied by using different inspection parameters in addition to some post-processing (see Remark~\ref{rem_detect} below). This effect is illustrated by means of a simulation study as well as a data example in Section~\ref{section_det_sim} below.

For the classical multiple change point situation, where $ k_{j,n}=\lfloor  \lambda_j\, n\rfloor$ for $0=\lambda_0< \lambda_1<\ldots< \lambda_{ q}<\lambda_{q+1}=1$ a canonical data-driven inspection parameter is $\tbt_n=\widehat{\boldsymbol{\theta}}_{1,n}$ defined as the solution of $\sum_{i=1}^n\bs{H}(\mathbb{X}_i,\bt)\overset{!}{=}\bs{0}$. 

Theorem~\ref{theoremrootnconsisalt} below shows that under mild conditions this data-dependent inspection parameter fulfills the required assumptions with $\tbt_{0,1}$ as the unique solution of  $\sum_{j=1}^{q+1}(\lambda_{j}-\lambda_{j-1})\E \bs{H}(\mathbb{X}_1^{(j)},\bt)\overset{!}{=}\mathbf{0}$.
 \begin{lemma}\label{lemmaproblemdetect}
	Let $\bt_j$ denote the unique zero of $\E\boldsymbol{H}(\mathbb{X}_1^{(j)},\boldsymbol{\theta})$, $j=1,\ldots,q+1,$ with $\bt_j\neq \bt_{j+1}$ for all $j=1,\ldots,q$ and $q\ge 1$.
	\begin{itemize}
		\item[(a)] Then, the MOSUM-score procedure with detection parameter $\tbt_{0,1}$ (or a corresponding data driven version) detects at least one change, i.e. $\widetilde{q}\geq 1$.
\item[(b)] If there are only two distinct regimes then all changes are detectable, i.e. $\widetilde{q}=q$.
\end{itemize} 
\end{lemma}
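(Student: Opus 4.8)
The plan is to prove both parts by contradiction, exploiting the two defining features of the inspection parameter $\tbt_{0,1}$: that it annihilates the convex combination $\sum_{j=1}^{q+1}(\lambda_j-\lambda_{j-1})\,\E\boldsymbol{H}(\mathbb{X}_1^{(j)},\tbt_{0,1})$, and that the weights $\lambda_j-\lambda_{j-1}$ sum to $\lambda_{q+1}-\lambda_0=1$. Throughout I would write $\boldsymbol{g}_j(\boldsymbol{\theta})=\E\boldsymbol{H}(\mathbb{X}_1^{(j)},\boldsymbol{\theta})$, so that $\bt_j$ is the unique zero of $\boldsymbol{g}_j$ and $j\in\tilde Q$ precisely when $\boldsymbol{g}_j(\tbt_{0,1})\neq\boldsymbol{g}_{j+1}(\tbt_{0,1})$.

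For part (a) I would assume $\tilde q=0$ and derive a contradiction. Then $\boldsymbol{g}_1(\tbt_{0,1})=\cdots=\boldsymbol{g}_{q+1}(\tbt_{0,1})=:\boldsymbol{c}$, so the defining equation for $\tbt_{0,1}$ collapses to $\boldsymbol{c}\sum_{j=1}^{q+1}(\lambda_j-\lambda_{j-1})=\boldsymbol{c}=\boldsymbol{0}$. Hence $\boldsymbol{g}_j(\tbt_{0,1})=\boldsymbol{0}$ for every $j$, and uniqueness of the zero forces $\tbt_{0,1}=\bt_j$ for all $j$; since $q\geq1$ this yields $\bt_1=\bt_2$, contradicting $\bt_j\neq\bt_{j+1}$. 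Thus $\tilde q\geq1$.

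For part (b) the decisive observation is that with only two distinct regimes the detectability condition becomes independent of the change index. Because consecutive segments are distributionally different and only two regimes occur, the segments must strictly alternate between them; writing $\boldsymbol{G}_a,\boldsymbol{G}_b$ for the two expected estimating functions I would note $\{\boldsymbol{g}_j,\boldsymbol{g}_{j+1}\}=\{\boldsymbol{G}_a,\boldsymbol{G}_b\}$ for every $j$. Consequently $j\in\tilde Q$ holds iff $\boldsymbol{G}_a(\tbt_{0,1})\neq\boldsymbol{G}_b(\tbt_{0,1})$, a statement not depending on $j$. Part (a) supplies at least one detectable change, so this common condition is satisfied and hence every change is detectable, i.e.\ $\tilde q=q$.

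I expect the one point needing care to be the alternation structure underlying part (b): one has to spell out that having only two distinct regimes, combined with the distinctness of neighbouring segments, forces the pattern $a,b,a,b,\ldots$, which makes all $q$ detectability conditions coincide. The remainder is the compact contradiction of part (a) --- where the weights summing to one is exactly what turns equality of all the values $\boldsymbol{g}_j(\tbt_{0,1})$ into their all being zero --- together with the reduction in part (b).
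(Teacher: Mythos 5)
Your proposal is correct and follows essentially the same route as the paper: the contradiction in part (a) using that the weights $\lambda_j-\lambda_{j-1}$ sum to one, so equality of all expectations forces them to vanish and hence $\bt_j=\tbt_{0,1}$ for all $j$, contradicting $\bt_1\neq\bt_2$; and part (b) via the observation that with two regimes every consecutive pair realises the same detectability condition. The only difference is cosmetic --- the paper dispatches part (b) in one sentence (``clearly if one change is detectable all of them are''), whereas you spell out the alternation structure that justifies it, which is a reasonable elaboration rather than a different argument.
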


\begin{remark}\label{rem_detect}In particular, Lemma~\ref{lemmaproblemdetect} shows that at least one change point is detectable if the global estimator (computed on the whole sample) is used as inspection parameter. Consequently, by recursively applying this procedure on detected segments, all changes are detected with a much smaller computational burden than for the MOSUM-Wald procedure. An empirical illustration based on simulated data and a well-log data set is given in Section~\ref{section_det_sim} below. This is of particular interest if the procedure is used for candidate generation to be combined with a pruning step as suggested by \cite{ChoKirch2018}. Some first considerations and results in that direction can be found in Chapter~5 of ~\cite{Reckruehm}.
\end{remark}

\subsubsection{Consistency}
The MOSUM-score statistics depends on the long-run covariance matrix $\bs{\Sigma}_k(\tbt)$ as in \eqref{eq_Gamma} which is typically not known. We show that it can be replaced by an estimator as long as the estimator is consistent away from changes and fulfills some weaker assumptions close to a change point. The latter is important because local estimators will typically be contaminated by a close-by change point. For many models (other than mean changes) $\bs{\Sigma}_k(\tbt)$ differs from one segment to the next, such that there is
 is no alternative to local estimation.
\begin{assumption}\label{as.alt.varianceestimator}
	Let $\boldsymbol{\widehat{\Sigma}}_{k,n}$ satisfy 
\begin{enumerate}[(a)]
	\item $\max\limits_{k_{j-1,n}+G\le k\le k_{j,n}-G}\left\lVert \boldsymbol{\widehat{\Sigma}}^{-1/2}_{k,n}-\boldsymbol{\Sigma}_k(\tbt)^{-1/2}\right\rVert=o_P\left(\log(n/G)^{-1}\right),$ \\[1mm]for any $j=1,\ldots,q+1$,
	\item For any $j=1,\ldots,q$ it holds	\begin{align*}
		&\sup_{|k-k_{j,n}|<  G}\left\lVert\boldsymbol{\widehat\Sigma}_{k,n}^{1/2}\right\rVert<\infty,\qquad \sup_{|k-k_{j,n}|< G}\left\lVert\boldsymbol{\widehat\Sigma}_{k,n}^{-1/2}\right\rVert<\infty.
\end{align*}
	\end{enumerate} \end{assumption}

\begin{remark}\label{rem_cov_score}
	As in Remark~\ref{rem_cov_wald} it can be difficult or numerically expensive to use an estimator that is uniformly consistent away from change points. This is even more important in combination with the MOSUM-score statistics that is  mainly designed to ease the computational burden of the procedure (see also Section~\ref{section_simulation}). 
	Here, too, the below results remain true as long as one uses  a threshold $\widetilde{D}_n(T)$ with $\widetilde{D}_n(T)/\sqrt{\log(n/G)}\to\infty$.
\end{remark}

In analogy to the MOSUM-Wald statistics we prove the following proposition.
\begin{proposition}\label{propositionsprobminmax}
	Let Assumptions~\ref{as.bandwidth} on the bandwidth hold, in addition to
	 Assumptions~\ref{as.replacemulti}, thus allowing for a sequence of inspection parameters $\tbt_n$. Furthermore, let Assumptions \ref{as.invariancemulti} hold  with  $\bt=\tbt$ (as in Assumption~\ref{as.replacemulti}) in (b).	\begin{enumerate}[(a)]
\item Then, 			\begin{align*}
		(i)\quad &P\left(\max_{j=1,\ldots,\tilde{q}+1}\max_{\tilde k_{j-1,n}+G\le k\le \tilde k_{j,n}-G} T^{(2)}_{k,n}(G)<D_n(\alpha_n,G)\right)\rightarrow 1,\\
				(ii)\quad &P\left(\min_{j=1,\ldots,\tilde q}\,\min_{|k-\tilde k_{j,n}|\le (1-\varepsilon)\,G}
	T^{(2)}_{k,n}(G)		\geq D_n(\alpha_n,G)\right)\rightarrow 1.
			\end{align*}
		\item Furthermore, the statements of (a) remain true if the long-run covariance matrix  \(\boldsymbol{\Sigma}_k(\tbt)\) is replaced by  estimators \(\widehat{\boldsymbol{\Sigma}}_{k,n}\) satisfying Assumption~\ref{as.alt.varianceestimator}.
\end{enumerate}
\end{proposition}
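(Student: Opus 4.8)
The plan is to exploit the fact, noted below \eqref{Defscorestatistic}, that \(T^{(2)}_{k,n}(G)\) is merely a multivariate mean-change MOSUM statistic of the transformed sequence \(\{\bs{H}(\mathbb{X}_i,\tbt)\}_i\), whose segment-wise expectation equals \(\E\bs{H}(\mathbb{X}_1^{(j)},\tbt)\) and which, by Assumption~\ref{as.invariancemulti}(b) with \(\bt=\tbt\), admits a strong Gaussian approximation with long-run covariance \(\bs{\Sigma}_{(j)}(\tbt)\) on each segment. By the definition of the detectable set in \eqref{as.alt.dectest2}, this transformed expectation is constant across every non-detectable change and jumps only at the relabelled points \(\tilde{k}_{j,n}\). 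First I would reduce to a fixed inspection parameter: replacing \(\tbt_n\) by \(\tbt\) alters \(T^{(2)}_{k,n}(G,\tbt_n)\) by at most \(\|\bs{\Sigma}_k(\tbt)^{-1/2}\|\,(2G)^{-1/2}\|\M_{\tbt_n}(k)-\M_{\tbt}(k)\|\), which by Assumption~\ref{as.replacemulti}(i) is \(o_P(1)\) away from all changes and by (ii) is \(o_P(\sqrt{\log(n/G)})\) within a \(G\)-window of any change. Part (a) then splits into the two one-sided statements, which I would prove for fixed \(\tbt\).

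For (a)(i) the expectation of \(\M_{\tbt}(k)\) vanishes on \(\tilde{k}_{j-1,n}+G\le k\le \tilde{k}_{j,n}-G\), since the moving window never straddles a detectable change. I would partition this index range into (A) the maximal blocks on which \emph{both} mean and long-run covariance are constant, i.e.\ the pieces delimited by all (detectable and non-detectable) change points, and (B) the \(O(G)\)-long transition spans in which the window overlaps a non-detectable change. On each block in (A), comparison with a homogeneous sample via the strong approximation and a localized version of Theorem~\ref{limitdistribution} (restricting the maximum to a sub-collection of windows can only decrease it) gives \(a(n/G)\max T^{(2)}_{k,n}(G)-b(n/G)=O_P(1)\). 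On each span in (B) the windows overlap in all but \(O(G)\) observations, so the standardized moving sum behaves like the supremum of a Gaussian process over a bounded interval; hence its maximum is \(O_P(1)\) as well, \emph{without} a \(\sqrt{\log(n/G)}\) inflation. Since there are only finitely many blocks and spans and \(c_{\alpha_n}\to\infty\) by Assumption~\ref{as.siglevel}, combining these with the \(o_P(\sqrt{\log(n/G)})\) replacement error and recalling that \(D_n(\alpha_n,G)\) in \eqref{eq_neu_threshold} satisfies \(a(n/G)D_n(\alpha_n,G)-b(n/G)=c_{\alpha_n}\) yields \(\max T^{(2)}_{k,n}(G)<D_n(\alpha_n,G)\) with probability tending to one.

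For (a)(ii), near a detectable change \(\tilde{k}_{j,n}\), Assumption~\ref{as.bandwidth}(b) ensures that the window of length \(2G\) contains at most this one change, so a direct computation gives \(\E\M_{\tbt}(k)=(G-|k-\tilde{k}_{j,n}|)\,\bs{\Delta}_j\) with \(\bs{\Delta}_j=\E\bs{H}(\mathbb{X}_1^{(j+1)},\tbt)-\E\bs{H}(\mathbb{X}_1^{(j)},\tbt)\neq\bs{0}\) the detectability gap. For \(|k-\tilde{k}_{j,n}|\le(1-\varepsilon)G\) this has norm at least \(\varepsilon G\,\|\bs{\Delta}_j\|\), and since \(\bs{\Sigma}_k(\tbt)\) is positive definite its deterministic contribution to \(T^{(2)}_{k,n}(G)\) is of exact order \(\sqrt{G}\); the centered part is uniformly \(O_P(\sqrt{\log(n/G)})\) and the replacement error \(o_P(\sqrt{\log(n/G)})\). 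As \(\log(n/G)=o(G)\) under Assumption~\ref{as.bandwidth}(a), the signal dominates and \(\min T^{(2)}_{k,n}(G)\to\infty\) in probability, in particular exceeding \(D_n(\alpha_n,G)=O(\sqrt{\log(n/G)})\) with probability tending to one.

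Finally, for part (b) I would feed \(\widehat{\bs{\Sigma}}_{k,n}\) into the same two bounds. Away from detectable changes the difference between the statistic with estimated and with true covariance is at most \((2G)^{-1/2}\|\widehat{\bs{\Sigma}}_{k,n}^{-1/2}-\bs{\Sigma}_k(\tbt)^{-1/2}\|\,\|\M_{\tbt}(k)\|\); since \((2G)^{-1/2}\|\M_{\tbt}(k)\|=O_P(\sqrt{\log(n/G)})\) there and the norm difference is \(o_P(\log(n/G)^{-1})\) by Assumption~\ref{as.alt.varianceestimator}(a), this product is \(o_P(1)\) and (i) is preserved. Near a change, Assumption~\ref{as.alt.varianceestimator}(b) supplies only \(\sup\|\widehat{\bs{\Sigma}}_{k,n}^{1/2}\|<\infty\), which suffices because \(\|\widehat{\bs{\Sigma}}_{k,n}^{-1/2}\M_{\tbt}(k)\|\ge\|\M_{\tbt}(k)\|/\|\widehat{\bs{\Sigma}}_{k,n}^{1/2}\|\), so the order-\(\sqrt{G}\) lower bound from (ii) survives and (ii) is preserved. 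The main obstacle is (a)(i): obtaining the below-threshold bound with the sharp constant \(\sqrt{2}\) hidden in \(D_n(\alpha_n,G)\sim\sqrt{2\log(n/G)}\). This hinges on the block-wise application of Theorem~\ref{limitdistribution} and, above all, on the observation that over the short transition spans around non-detectable changes the strongly overlapping windows produce only an \(O_P(1)\) maximum rather than the \(O_P(\sqrt{\log(n/G)})\) one might naively fear.
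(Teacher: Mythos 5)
Your overall route is the paper's own: split the range in (a)(i) into windows that avoid all change points, where Theorem~\ref{limitdistribution} applied segment-wise gives the Gumbel bound, and the $G$-neighbourhoods of non-detectable changes, where the Wiener approximation plus self-similarity gives an $O_P(1)$ maximum (no $\sqrt{\log(n/G)}$ inflation) and hence $o_P(D_n(\alpha_n,G))$; then for (a)(ii) compute $\E\M_{\tbt}(k)=(G-|k-\tilde k_{j,n}|)\,\boldsymbol{d}_j$ and let the order-$G$ signal dominate the $o_P(\sqrt{\log(n/G)})$ noise and replacement error. This matches the paper's proof step for step, including the use of Assumption~\ref{as.replacemulti}(i) away from changes and (ii) near them. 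One caveat on your reduction step: keep the rate $o_P\bigl((\log(n/G))^{-1/2}\bigr)$ that Assumption~\ref{as.replacemulti}(i) actually delivers for the statistic away from all changes rather than weakening it to ``$o_P(1)$''; since $c_{\alpha_n}$ may diverge arbitrarily slowly (Assumption~\ref{as.siglevel} only forces $c_{\alpha_n}=O(b(n/G))$), an error of size $o_P(1)$ multiplied by $a(n/G)$ need not be $o_P(c_{\alpha_n})$, and the Gumbel-based threshold comparison would not close.

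The genuine gap is in part (b). You invoke Assumption~\ref{as.alt.varianceestimator}(a) on the whole range ``away from detectable changes'', but (a) is stated only for $k_{j-1,n}+G\le k\le k_{j,n}-G$, $j=1,\ldots,q+1$, i.e.\ for windows overlapping no change point at all, detectable or not. The range in statement (i) contains the $G$-neighbourhoods of every non-detectable change, and there (a) gives no control whatsoever on $\widehat{\boldsymbol{\Sigma}}_{k,n}$: these are exactly the windows in which a local covariance estimator is contaminated, which is why the paper only assumes boundedness there. As written, your proof of (b)(i) fails on those spans. You also misread Assumption~\ref{as.alt.varianceestimator}(b): it provides boundedness of \emph{both} $\|\widehat{\boldsymbol{\Sigma}}_{k,n}^{1/2}\|$ and $\|\widehat{\boldsymbol{\Sigma}}_{k,n}^{-1/2}\|$ near every change $k_{j,n}$, $j=1,\ldots,q$, and the second bound is precisely the missing ingredient: for $|k-k_{j,n}|<G$ with $j\notin\tilde Q$,
\begin{align*}
\frac{1}{\sqrt{2G}}\left\lVert\widehat{\boldsymbol{\Sigma}}_{k,n}^{-1/2}\M_{\tbt_n}(k)\right\rVert\le \left\lVert\widehat{\boldsymbol{\Sigma}}_{k,n}^{-1/2}\right\rVert\,\frac{1}{\sqrt{2G}}\left\lVert\M_{\tbt_n}(k)\right\rVert=O_P(1)\, o_P\left(D_n(\alpha_n,G)\right),
\end{align*}
using your own $O_P(1)$ bound for the fixed-parameter statistic near non-detectable changes together with the $o_P(\sqrt{\log(n/G)})$ replacement error from Assumption~\ref{as.replacemulti}(ii). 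With this repair -- which is how the paper argues, explicitly citing Assumption~\ref{as.alt.varianceestimator}(b) -- part (b) closes; your treatment of (b)(ii) via $\|\widehat{\boldsymbol{\Sigma}}_{k,n}^{-1/2}\M_{\tbt_n}(k)\|\ge\|\M_{\tbt_n}(k)\|/\|\widehat{\boldsymbol{\Sigma}}_{k,n}^{1/2}\|$ is correct.
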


From this we conclude in the following theorem that the number of change points are estimated consistently and that the distance from each change point to the closest estimator is smaller than $G$.

\begin{theorem}\label{theorem.consistencyq}
	Let the assumptions of Proposition \ref{propositionsprobminmax} hold (either with known or estimated long-run covariance matrix).
Then, as $n\to\infty$,
\begin{align*}
	P\left(\widehat{q}_n^{(2)}(\tbt_n)=\tilde q(\tbt), \max_{1\leq j\leq \tilde q}	\left|\widehat{k}^{(2)}_{j,n}(\tbt_n)-\tilde k_{j,n}(\tbt)\right|< G\right)\rightarrow 1,
\end{align*}
where $\widehat{q}_n^{(2)}(\tbt_n)$ is the estimated number of change points based on the MOSUM-score statistics with inspection parameters $\tbt_n$ and $\widehat{k}^{(2)}_{j,n}(\tbt_n)$ are the corresponding (ordered) change point estimators.
\end{theorem}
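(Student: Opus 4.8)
The plan is to derive the statement deterministically from Proposition~\ref{propositionsprobminmax}, exactly as Theorem~\ref{theorem.consistencyqwald} is obtained from Proposition~\ref{propositionsprobminmaxwald}, but with the detectable change points $\tilde k_{j,n}(\tbt)$ and their number $\tilde q(\tbt)$ playing the role of all change points. First I would intersect the two events in Proposition~\ref{propositionsprobminmax}(a) (or (b) in the estimated-covariance case); by the proposition this intersection, call it $\mathcal{A}_n$, satisfies $P(\mathcal{A}_n)\to 1$, and on $\mathcal{A}_n$ the remainder is a purely combinatorial statement about the exceedance set $\mathcal{E}=\{G\le k\le n-G:\ T^{(2)}_{k,n}(G)\ge D_n(\alpha_n,G)\}$ produced by the segmentation algorithm of Section~\ref{sec_seg_alg}.

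On $\mathcal{A}_n$, part (i) of the proposition forces $T^{(2)}_{k,n}(G)<D_n(\alpha_n,G)$ for every $k$ lying in some $[\tilde k_{j-1,n}+G,\tilde k_{j,n}-G]$, so that $\mathcal{E}\subseteq\bigcup_{j=1}^{\tilde q}(\tilde k_{j,n}-G,\tilde k_{j,n}+G)$. By Assumption~\ref{as.bandwidth}(b) consecutive change points, in particular the detectable ones, are eventually more than $2G$ apart, so these $G$-neighbourhoods are pairwise disjoint and each exceeding interval of the algorithm is contained in exactly one of them. Conversely, part (ii) guarantees that $[\tilde k_{j,n}-(1-\varepsilon)G,\tilde k_{j,n}+(1-\varepsilon)G]\subseteq\mathcal{E}$, a connected block of length $2(1-\varepsilon)G$; since $\varepsilon<1/2$ this exceeds $\varepsilon G$, so the maximal exceeding interval containing this block satisfies the length requirement in \eqref{MOSUMcond3} and is genuinely detected.

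The one point requiring care, the analogue of the corresponding step in Theorem~\ref{theorem.consistencyqwald}, is to rule out \emph{additional} qualifying intervals inside a single neighbourhood $(\tilde k_{j,n}-G,\tilde k_{j,n}+G)$. Here I would invoke that (i) already drives $T^{(2)}$ below the threshold at the endpoints $\tilde k_{j,n}\pm G$, so any exceeding interval disjoint from the central block is trapped in one of the two wings $(\tilde k_{j,n}-G,\tilde k_{j,n}-(1-\varepsilon)G)$ or $(\tilde k_{j,n}+(1-\varepsilon)G,\tilde k_{j,n}+G)$, each of width $\varepsilon G$; such an interval has length strictly below $\varepsilon G$ and is therefore discarded by \eqref{MOSUMcond3}. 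Hence each detectable change point contributes exactly one exceeding interval, while the quiet regions contribute none, giving $\widehat q_n^{(2)}(\tbt_n)=\tilde q(\tbt)$ on $\mathcal{A}_n$.

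Finally, localization is immediate: the unique exceeding interval attached to $\tilde k_{j,n}$ lies inside $(\tilde k_{j,n}-G,\tilde k_{j,n}+G)$, so its argmax estimator satisfies $|\widehat k^{(2)}_{j,n}(\tbt_n)-\tilde k_{j,n}(\tbt)|<G$; taking the intersection over $j$ and with $\mathcal{A}_n$ yields the stated probability bound. I expect essentially no further obstacle at this level, since all the probabilistic content has been absorbed into Proposition~\ref{propositionsprobminmax} (which in turn rests on Assumptions~\ref{as.invariancemulti}, \ref{as.replacemulti} and the bandwidth conditions). The only genuinely delicate piece is the wing-width estimate that prevents over-counting; everything else is bookkeeping of exceeding intervals identical to the Wald case.
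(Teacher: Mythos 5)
Your proposal is correct and takes essentially the same route as the paper: the paper's proof of this theorem is simply ``completely analogous'' to that of Theorem~\ref{theorem.consistencyqwald}, which derives the claim from the intersection of events (i) and (ii) of Proposition~\ref{propositionsprobminmax} via exactly the set inclusion you establish. Your wing-width argument and the exclusion of spurious intervals via \eqref{MOSUMcond3} merely spell out the combinatorial bookkeeping that the paper leaves implicit in that inclusion.
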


\begin{remark}\label{remarkconsistencysacorcpsestscore}
	In the classical multiple change point situation, where $\tilde k_{j,n}=\lfloor\tilde  \lambda_j\, n\rfloor$ for $0<\tilde \lambda_1<\ldots<\tilde \lambda_{\tilde q}<1$  and $\widehat{\lambda}^{(2)}_{j,n}=\widehat{k}^{(2)}_{j,n}/n$, it holds
	\begin{align*}
		\max_{j=1,\ldots,\min(\tilde q,\widehat{q}^{(2)}_n)}|\widehat{\lambda}_{j,n}-\tilde \lambda_{j,n}| =O_P\left( \frac Gn \right)=o_P(1).
	\end{align*}
\end{remark}

\subsubsection{Localization Rates}\label{subsectionimprovedconvrates}
Localization rates quantify 
 the distance between true and estimated change points. For a bounded number of change points the minimax optimal localization rates are known to be constant, i.e. it holds $\max_{j}|\widehat{k}_{j,n}-k_j|=O_P(1)$; see e.g.\ Lemma 2 in ~\cite{wang2020univariate}. In this section we show that the MOSUM-score estimators match these minimax optimal localization rates.

 This is important because the MOSUM-score  procedure is particularly well suited as a candidate generating algorithm to be combined with a pruning step (see \cite{ChoKirch2018} for a corresponding discussion in the context of the mean change model). This is due to its computational efficiency and the fact that combined with several inspection parameters all changes can be detected. For such pruning methodology to work it is essential that the candidate generating algorithm produces estimators that are sufficiently close to the true change points.

 In this section, we work with $\widehat{k}_{j,n}^{(2)}(\tbt_n;\widehat{\boldsymbol{\Psi}}_{j,n})$ as in \eqref{eq_est_Psi}, i.e.\ for $j\in\widetilde{Q}$
\begin{align*}
\widehat{k}_{j,n}^{(2)}(\tbt_n;\widehat{\boldsymbol{\Psi}}_{j,n})
	=\argmax_{v_{j,n}\leq k \leq w_{j,n}}\M_{\tbt_n}(k)^T\boldsymbol{\widehat{\Psi}}_{j,n}^{-1}\M_{\tbt_n}(k).
\end{align*}

We need the following assumptions on the matrices $\boldsymbol{\widehat{\Psi}}_{j,n}$.
\begin{assumption}\label{ass_Psi}
$\boldsymbol{\widehat{\Psi}}_{j,n}$ is a sequence (possibly depending on the data) of symmetric positive definite matrices with
\begin{align*}
\|\boldsymbol{\widehat{\Psi}}_{j,n}^{-1}\|=O_P(1), \qquad\|\boldsymbol{\widehat{\Psi}}_{j,n}^{1/2}\|=O_P(1).
\end{align*}
\end{assumption}
Effectively, the assumption guarantees that the use of $\boldsymbol{\widehat{\Psi}}_{j,n}$ does not kill the signal nor does it explode the noise.

Furthermore, we need to slightly strengthen the previous assumptions  by additionally requiring:
\begin{assumption}\label{ass_localization_rates}
	\begin{enumerate}[(a)]
		\item With $\tbt$ as in Assumption~\ref{as.replacemulti} we require that for any $\xi_n \to\infty$ and any $j=1,\ldots,\tilde{q}_n+1$, it holds 
			\begin{align*}
				(i)\quad& \text{for }m=\tilde{k}_{j-1,n}(\tbt)+G,\;\tilde{k}_{j,n}(\tbt)-G,\; \tilde{k}_{j,n}(\tbt)\text{ that}\\
				&\max\limits_{\xi_n\leq k\leq G}\frac{1}{k}\left\lVert\sum\limits_{i=m-k+1}^m\left(\boldsymbol{H}\left(\mathbb{X}_i^{(j)},\tbt_n\right)-\boldsymbol{H}\left(\mathbb{X}_i^{(j)},\boldsymbol{\widetilde{\theta}}\right)\right)\right\rVert=o_P(1),\\
				(ii)\quad &\text{for }m=\tilde{k}_{j-1,n}(\tbt),\;\tilde{k}_{j-1,n}(\tbt)+G,\;\tilde{k}_{j,n}(\tbt)-G\text{ that}\\
				&\max\limits_{\xi_n\leq k\leq G}\frac{1}{k}\left\lVert\sum\limits_{i=m+1}^{m+k}\left(\boldsymbol{H}\left(\mathbb{X}_i^{(j)},\tbt_n\right)-\boldsymbol{H}\left(\mathbb{X}_i^{(j)},\boldsymbol{\widetilde{\theta}}\right)\right)\right\rVert=o_P(1).
			\end{align*}
		\item The following backward law of large numbers holds for any $j$ and any sequence $\xi_n\to\infty$
			\begin{align*}
				& \max\limits_{\xi_n\leq k\leq G}\frac{1}{k}\left\lVert\sum\limits_{i=G-k+1}^G\left(\boldsymbol{H}\left(\mathbb{X}_i^{(j)},\tbt\right)-\E\boldsymbol{H}\left(\mathbb{X}_i^{(j)},\tbt\right)\right)\right\rVert=o_P(1).
			\end{align*}
	\end{enumerate}
\end{assumption}

\begin{remark}\label{rem_localization_rates}
	The forward version (as in (a)(ii)) is missing in (b) because it follows directly from a strong law of large number which (after changing the probability space) follows directly from the invariance principle as in Assumption~\ref{as.invariancemulti} (b).
	However, the backward version does not follow from that assumption alone - unless the data is i.i.d. Nevertheless, most sequences will allow for strong ergodic theorems even in backward direction. For example mixing conditions as in Regularity conditions~\ref{regc_ip}(b) are symmetric in the sense that the backward sequence is also mixing with the same rates. Consequently, both a forward and backward invariance principle holds and (b) follows from that.
\end{remark}

In the following theorem we derive the minimax optimal localization rates improving upon Theorem~\ref{theorem.consistencyq}.
\begin{theorem}\label{theoremimprovedconvrates}Let the assumptions of Theorem~\ref{theorem.consistencyq} hold in addition to Assumptions~\ref{ass_Psi} -- \ref{ass_localization_rates}. Then, it holds
	\begin{align*}
		\max_{1\le j\le \min(\tilde q,\widehat{q}_n^{(2)}(\tbt_n))} \left|\widehat{k}_{j,n}^{(2)}(\tbt_n;\widehat{\boldsymbol{\Psi}}_{j,n})-\tilde{k}_{j,n}(\tbt)\right|=O_P(1).
	\end{align*}
\end{theorem}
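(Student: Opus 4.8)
The plan is to sharpen the $O_P(G)$ bound of Theorem~\ref{theorem.consistencyq} to $O_P(1)$ by a local analysis of the objective $k\mapsto\M_{\tbt_n}(k)^T\widehat{\boldsymbol{\Psi}}_{j,n}^{-1}\M_{\tbt_n}(k)$ in a $G$-neighbourhood of each detectable change point. First I would condition on the event, which by Theorem~\ref{theorem.consistencyq} and Proposition~\ref{propositionsprobminmax} has probability tending to one, on which $\widehat{q}_n^{(2)}(\tbt_n)=\tilde q$, each exceeding interval $[v_{j,n},w_{j,n}]$ contains the associated change point $\tilde{k}_{j,n}$, and moreover $[v_{j,n},w_{j,n}]\subset(\tilde{k}_{j,n}-G,\tilde{k}_{j,n}+G)$ (the latter since Proposition~\ref{propositionsprobminmax}(a)(i) forces every point at distance at least $G$ from the change point to be sub-threshold, hence outside the exceeding interval). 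On this event $\tilde{k}_{j,n}$ is itself a competitor in the argmax, so it suffices to show that for every $\xi_n\to\infty$, uniformly over $\xi_n\le|r|\le G$, the objective at $\tilde{k}_{j,n}+r$ is strictly smaller than at $\tilde{k}_{j,n}$ with probability tending to one. The standard characterisation of $O_P(1)$ (namely $X_n=O_P(1)$ iff $X_n=o_P(\xi_n)$ for every $\xi_n\to\infty$) then yields the claim, as $\tilde q$ is bounded.

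The core computation is the increment $\M_{\tbt}(\tilde{k}_{j,n}+r)-\M_{\tbt}(\tilde{k}_{j,n})$. Telescoping the two moving windows, this equals a signed combination of three blocks of $|r|$ consecutive summands of $\bs{H}(\cdot,\tbt)$, each block lying entirely in a single (relabelled) segment because neighbouring detectable changes are more than $2G$ apart by Assumption~\ref{as.bandwidth}(b). Taking expectations produces the piecewise-linear signal $-|r|\,\boldsymbol{\delta}_j$ with $\boldsymbol{\delta}_j=\E\bs{H}(\mathbb{X}_1^{(j+1)},\tbt)-\E\bs{H}(\mathbb{X}_1^{(j)},\tbt)\neq\boldsymbol{0}$, nonzero precisely because $j$ indexes a detectable change. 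The centred noise is handled blockwise by the forward law of large numbers (automatic from the invariance principle of Assumption~\ref{as.invariancemulti}(b), see Remark~\ref{rem_localization_rates}) for $r>0$ and by the backward law of large numbers of Assumption~\ref{ass_localization_rates}(b) for $r<0$, each giving a $1/|r|$-normalised $o_P(1)$ bound that is uniform over $\xi_n\le|r|\le G$. Replacing $\tbt$ by the data-driven $\tbt_n$ costs an additional $o_P(|r|)$ uniformly by Assumption~\ref{ass_localization_rates}(a), whose prescribed anchors $m$ are exactly the block endpoints generated by the telescoping. Altogether $\M_{\tbt_n}(\tilde{k}_{j,n}+r)-\M_{\tbt_n}(\tilde{k}_{j,n})=-|r|\,\boldsymbol{\delta}_j+o_P(|r|)$ uniformly.

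It then remains to expand the quadratic form. Writing $\boldsymbol{A}:=\M_{\tbt_n}(\tilde{k}_{j,n})$ and $\boldsymbol{D}_r:=\M_{\tbt_n}(\tilde{k}_{j,n}+r)-\boldsymbol{A}$, the objective increment equals $2\,\boldsymbol{A}^T\widehat{\boldsymbol{\Psi}}_{j,n}^{-1}\boldsymbol{D}_r+\boldsymbol{D}_r^T\widehat{\boldsymbol{\Psi}}_{j,n}^{-1}\boldsymbol{D}_r$. Here $\boldsymbol{A}=G\,\boldsymbol{\delta}_j+\boldsymbol{E}$ with $\|\boldsymbol{E}\|=O_P(\sqrt{G\log(n/G)})$ (invariance principle for the centred window noise at the single point $\tilde{k}_{j,n}$ plus the replacement bound of Assumption~\ref{as.replacemulti}(ii)), yielding the dominant term $-2G|r|\,c$ with $c:=\boldsymbol{\delta}_j^T\widehat{\boldsymbol{\Psi}}_{j,n}^{-1}\boldsymbol{\delta}_j$; the quadratic term contributes $|r|^2c\le G|r|\,c$, and every remaining cross term is $o_P(G|r|)$ uniformly once one uses $\|\widehat{\boldsymbol{\Psi}}_{j,n}^{-1}\|=O_P(1)$ and $\|\widehat{\boldsymbol{\Psi}}_{j,n}^{1/2}\|=O_P(1)$ (Assumption~\ref{ass_Psi}) together with $\log(n/G)/G\to0$ from Assumption~\ref{as.bandwidth}(a). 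Since $|r|\le G$ forces the factor $-2+|r|/G\le-1$, the increment is at most $-G|r|\,(c-o_P(1))$, and $c$ is bounded away from zero in probability because $\boldsymbol{\delta}_j\neq\boldsymbol{0}$ is fixed and $\|\widehat{\boldsymbol{\Psi}}_{j,n}\|=O_P(1)$. Thus the increment is negative uniformly over $\xi_n\le|r|\le G$ with probability tending to one, which forces the maximiser into $\{|r|<\xi_n\}$ and completes the argument.

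I expect the main obstacle to be the uniform-in-$r$ control of the increment $\boldsymbol{D}_r$: one must secure the $1/|r|$-normalised $o_P(1)$ bounds \emph{simultaneously} over the whole range $\xi_n\le|r|\le G$ rather than pointwise. This is exactly why the comparatively delicate forward and backward laws of large numbers of Assumption~\ref{ass_localization_rates}, together with the careful matching of their anchor points $m$ to the telescoped block endpoints, are indispensable; by contrast the subsequent quadratic-form bookkeeping is routine given Assumption~\ref{ass_Psi} and the bandwidth condition.
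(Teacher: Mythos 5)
Your proposal is correct and follows essentially the same route as the paper's own proof: reduction to uniform negativity of the objective increment over $\xi_n\le|r|\le G$ via the $O_P(1)$-characterisation (the paper's Lemma~\ref{lemma_bound_conv}), telescoping $\M_{\tbt_n}(\tilde{k}_{j,n}+r)-\M_{\tbt_n}(\tilde{k}_{j,n})$ into blocks handled by Assumption~\ref{ass_localization_rates} (forward/backward laws of large numbers and the $\tbt_n$-replacement), and Assumption~\ref{ass_Psi} to keep the signal $\boldsymbol{d}_j^T\widehat{\boldsymbol{\Psi}}_{j,n}^{-1}\boldsymbol{d}_j$ bounded away from zero. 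The only cosmetic difference is that you expand the quadratic form as $2\boldsymbol{A}^T\widehat{\boldsymbol{\Psi}}_{j,n}^{-1}\boldsymbol{D}_r+\boldsymbol{D}_r^T\widehat{\boldsymbol{\Psi}}_{j,n}^{-1}\boldsymbol{D}_r$ while the paper uses the algebraically equivalent difference-of-squares factorisation $-(\M(\tilde{k}_{j,n})-\M(k))^T\widehat{\boldsymbol{\Psi}}_{j,n}^{-1}(\M(\tilde{k}_{j,n})+\M(k))$.
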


\section{Regularity Conditions for Sufficiently Smooth Estimating Functions}\label{sectionasunderregcond}
The results of the previous sections were obtained under certain high-level assumptions, that need to be verified separately for each underlying time series structure as well as estimating function. On the other hand, many estimating functions are either sufficiently smooth or can be approximated to any degree of accuracy by sufficiently smooth functions. Therefore, we will verify these high-level assumptions exemplarily for sufficiently smooth estimating function for strongly mixing time series under moment conditions. This includes the i.i.d.\ situation as a special case.

The invariance principles of Assumption~\ref{as.invariancemulti} (b) follow from the following regularity conditions:
\begin{regc}\label{regc_ip}	For $j=1,\ldots,q+1$ 
	\begin{enumerate}[(a)]
	\item let there exist a $\tilde\nu>0$ such that $0<\E\left\lVert \boldsymbol{H}(\mathbb{X}_1^{(j)},\boldsymbol{\theta})\right\rVert^{2+\tilde\nu}<\infty$ for the $\boldsymbol{\theta}$ of interest,
	\item let	$\{\mathbb{X}_t^{(j)}\}$ be stationary and strongly mixing sequences of random vectors with a strong mixing coefficient $\alpha(\cdot)$
	satisfying $\alpha(n)=O(n^{-\beta})$ for some $\beta>1+2/\tilde\nu$ as $n\to\infty$, where $\tilde\nu$ is as (a).
	\end{enumerate}
\end{regc}

\begin{theorem}\label{theorem_ass_ip}
	\begin{enumerate}[(a)]
		\item Under Regularity Conditions~\ref{regc_ip} Assumption~\ref{as.invariancemulti} hold for the same $\boldsymbol{\theta}$ and some $\nu>0$ depending on $\beta$, $\tilde{\nu}$ and the dimension. 
		\item Additionally, Assumption~\ref{ass_localization_rates} (b) holds if Regularity Conditions~\ref{regc_ip} are fulfilled with $\boldsymbol{\theta}=\tbt$.	
		\end{enumerate}

\end{theorem}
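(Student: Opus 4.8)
The plan is to prove both parts by passing to the transformed sequence $\boldsymbol{Y}_i^{(j)}=\boldsymbol{H}(\mathbb{X}_i^{(j)},\boldsymbol{\theta})$ and invoking a (forward, resp.\ backward) almost sure invariance principle for multivariate strongly mixing sequences. First I would observe that Assumption~\ref{as.invariancemulti}~(a) is immediate, since stationarity of $\{\mathbb{X}_t^{(j)}\}$ is part of Regularity Condition~\ref{regc_ip}~(b). For part (b) of that assumption, note that $\boldsymbol{Y}_i^{(j)}$ depends on $\mathbb{X}_i^{(j)}$ only through the single time index $i$, so $\sigma(\boldsymbol{Y}_i^{(j)})\subseteq\sigma(\mathbb{X}_i^{(j)})$ and $\{\boldsymbol{Y}_i^{(j)}\}$ inherits both stationarity and the strong mixing rate $\alpha(n)=O(n^{-\beta})$ from Regularity Condition~\ref{regc_ip}~(b). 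Regularity Condition~\ref{regc_ip}~(a) supplies the moment bound $\E\|\boldsymbol{Y}_1^{(j)}\|^{2+\tilde\nu}<\infty$ directly, and together with the mixing rate it guarantees that the long-run covariance $\boldsymbol{\Sigma}_{(j)}(\boldsymbol{\theta})=\sum_{h\in\mathbb{Z}}\cov(\boldsymbol{Y}_0^{(j)},\boldsymbol{Y}_h^{(j)})$ converges absolutely; I would take it to be symmetric positive definite as required (a non-degeneracy condition that I treat as implicit in the framework).

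The core of part (a) is then to verify the hypotheses of a standard strong-approximation theorem — such as that of Kuelbs and Philipp for $p$-dimensional strongly mixing vectors — which under $\E\|\boldsymbol{Y}_1^{(j)}\|^{2+\tilde\nu}<\infty$ and $\alpha(n)=O(n^{-\beta})$ with $\beta>(1+\epsilon)(1+2/\tilde\nu)$ for some $\epsilon>0$ yields, after enlarging the probability space, a $p$-dimensional standard Wiener process $\boldsymbol{W}$ and some $\lambda>0$ with $\|\boldsymbol{S}_j(k,\boldsymbol{\theta})-\E\boldsymbol{S}_j(k,\boldsymbol{\theta})-\boldsymbol{\Sigma}_{(j)}(\boldsymbol{\theta})^{1/2}\boldsymbol{W}(k)\|=O(k^{1/2-\lambda})$ a.s. The strict inequality $\beta>1+2/\tilde\nu$ in Regularity Condition~\ref{regc_ip}~(b) is exactly what permits such a choice of $\epsilon$, since then $\beta/(1+2/\tilde\nu)>1$. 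Finally I would convert the rate: setting $\nu=\nu(\beta,\tilde\nu,p)>0$ via $1/(2+\nu)=1/2-\lambda$ casts the bound into the form $O(k^{1/(2+\nu)})$ demanded by Assumption~\ref{as.invariancemulti}~(b).

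For part (b) I would apply part (a) with $\boldsymbol{\theta}=\tbt$ to the time-reversed sequence. As noted in Remark~\ref{rem_localization_rates}, strong mixing coefficients are invariant under time reversal, so the reversed process satisfies Regularity Conditions~\ref{regc_ip} with the same $\tilde\nu$ and $\beta$; hence a backward almost sure invariance principle holds, giving $\bar{\boldsymbol{S}}_k:=\sum_{i=G-k+1}^{G}(\boldsymbol{H}(\mathbb{X}_i^{(j)},\tbt)-\E\boldsymbol{H}(\mathbb{X}_i^{(j)},\tbt))=\boldsymbol{\Sigma}_{(j)}(\tbt)^{1/2}\boldsymbol{W}(k)+O(k^{1/(2+\nu)})$ a.s.\ for a (backward) Wiener process $\boldsymbol{W}$. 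Then $\frac1k\|\bar{\boldsymbol{S}}_k\|\le\|\boldsymbol{\Sigma}_{(j)}(\tbt)^{1/2}\|\,\frac{\|\boldsymbol{W}(k)\|}{k}+O(k^{-(1+\nu)/(2+\nu)})$, and the strong law of large numbers for Brownian motion ($\|\boldsymbol{W}(k)\|/k\to0$ a.s.) forces $\sup_{k\ge\xi_n}\frac1k\|\bar{\boldsymbol{S}}_k\|\to0$ a.s.\ as $\xi_n\to\infty$. Since $\max_{\xi_n\le k\le G}\frac1k\|\bar{\boldsymbol{S}}_k\|\le\sup_{k\ge\xi_n}\frac1k\|\bar{\boldsymbol{S}}_k\|$, this is $o(1)$ a.s.\ and in particular $o_P(1)$, which is exactly Assumption~\ref{ass_localization_rates}~(b).

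The main obstacle I anticipate is in part (a): matching our moment/mixing trade-off to the precise hypotheses of the chosen strong-approximation theorem — the interplay of $\beta$, $\tilde\nu$ and the dimension $p$ in the attainable exponent $\lambda$, and the bookkeeping that turns $O(k^{1/2-\lambda})$ into $O(k^{1/(2+\nu)})$ — together with justifying positive definiteness of $\boldsymbol{\Sigma}_{(j)}(\boldsymbol{\theta})$. Part (b) is then essentially a corollary, the only care being the uniform control of the maximum over the growing window $[\xi_n,G]$, which the supremum bound above handles once $\xi_n\to\infty$.
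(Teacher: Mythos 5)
Your proposal is correct and follows essentially the same route as the paper: part (a) is exactly the paper's one-line appeal to the Kuelbs--Philipp strong approximation theorem for multivariate strongly mixing sequences (after noting that $\boldsymbol{H}(\mathbb{X}_i^{(j)},\boldsymbol{\theta})$ inherits stationarity and the mixing rate), and part (b) is the paper's observation that mixing is symmetric under time reversal, so a backward invariance principle holds. You in fact supply more detail than the paper does (the rate bookkeeping $O(k^{1/2-\lambda})\mapsto O(k^{1/(2+\nu)})$ and the Brownian strong law step), at the same level of rigor on the points both gloss over (positive definiteness of $\boldsymbol{\Sigma}_{(j)}$ and the re-anchoring of the backward sums at the moving endpoint $G$ via stationarity).
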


In the following we will always assume that the bandwidth $G$ fulfils Assumption~\ref{as.bandwidth} for the above suitable $\nu$.
 
In order to prove the  necessary assumptions for the consistency results of the above MOSUM statistics the following additional regularity conditions on the estimating function are sufficient.
\begin{regc}\label{regc_wald}\begin{enumerate}[(a)]
	\item Let $\boldsymbol{\Theta}\subset\mathbb{R}^p$ be a compact parameter space and $\boldsymbol{H}=(H_1,\ldots,H_p)^T$ twice differentiable such that for some $\tilde{\nu}>0$
\begin{align*}
	&(i)\quad\E\left\lVert \nabla \boldsymbol{H}(\mathbb{X}_1^{(j)},\boldsymbol{\theta})\right\rVert^{2+\tilde\nu}<\infty\qquad\text{for all }\boldsymbol{\theta}\in\boldsymbol{\Theta},\\
	&(ii)\quad \E\sup_{\boldsymbol{\theta}\in\boldsymbol{\Theta}}\left\lVert \nabla^2 H_l(\mathbb{X}^{(j)}_1,\boldsymbol{\theta})\right\rVert^{2+\tilde\nu}<\infty, \quad \text{for all } l=1,\ldots,p,
\end{align*}
for all $j=1,\ldots,q+1$. 
\item Let $\bs{V}_{(j)}(\bt)=\E\left(\nabla\boldsymbol{H}(\mathbb{X}^{(j)}_1,\boldsymbol{\theta})\right)^T$ be a regular matrix for all $\boldsymbol{\theta}\in\boldsymbol{\Theta}$ fulfilling for all $j=1,\ldots,q+1$
\begin{align*}
	&\sup_{\bt\in\bs{\Theta}}\left\lVert\bs{V}_{(j)}(\bt)^{-1}\right\rVert<\infty.
\end{align*}
\item Let $\E\sup_{\boldsymbol{\theta}\in\boldsymbol{\Theta}}\left\lVert  \nabla H_l(\mathbb{X}^{(j)}_1,\boldsymbol{\theta})\right\rVert<\infty$, $j=1,\ldots,q+1$.
\end{enumerate}
\end{regc}
For specific examples the above assumptions such as the compactness assumption can be weakened. The  important special case of linear regression models has been discussed in detail in \cite{Reckruehm}, Chapter 3.2, under standard assumptions for linear regression.

\begin{theorem}\label{theorem_ass_wald}Let Regularity Condition \ref{regc_ip}  hold with $\boldsymbol{\theta}=\boldsymbol{\theta}_j$ as defined in Assumption~\ref{aswaldaltnullseg} identifiably unique, $j=1,\ldots,q+1$, in addition to Regularity Condition~\ref{regc_wald} (a) and (b). 	\begin{enumerate}[(a)]
			\item Then Assumptions~\ref{aswaldaltnullseg} and  Assumption~\ref{Ass_31_new} (a) hold.
			\item If additionally Regularity Condition~\ref{regc_wald} (c)  hold, we get Assumption~\ref{Ass_31_new} (b).
			\end{enumerate}

\end{theorem}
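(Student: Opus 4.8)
The plan is to linearise the local estimating equation $\sum_{i=k+1}^{k+G}\bs H(\mathbb X_i^{(j)},\hbt_{k+1,k+G})=\bs 0$ by a second-order Taylor expansion around $\bt_j$ and to control every resulting term uniformly in the window position $k$ via the strong invariance principle supplied by Theorem~\ref{theorem_ass_ip} (available here since Regularity Condition~\ref{regc_ip} is assumed with $\bt=\bt_j$). For part~(a) I would first record the uniform consistency $\max_k\|\hbt_{k+1,k+G}-\bt_j\|=o_P(1)$ on the non-contaminated ranges, a standard M-estimation fact made uniform in $k$ using the identifiable uniqueness of $\bt_j$ together with the moment bounds of Regularity Condition~\ref{regc_wald}(a); note that a global envelope for the uniform law of large numbers is already available through $\sup_{\bt}\|\nabla H_l(\cdot,\bt)\|\le\|\nabla H_l(\cdot,\bt_j)\|+\mathrm{diam}(\bs\Theta)\sup_{\bt}\|\nabla^2H_l(\cdot,\bt)\|$.

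The Taylor expansion reads
$$\bs 0=\sum_{i=k+1}^{k+G}\bs H(\mathbb X_i^{(j)},\bt_j)+\Big(\sum_{i=k+1}^{k+G}\nabla\bs H(\mathbb X_i^{(j)},\bt_j)\Big)\big(\hbt_{k+1,k+G}-\bt_j\big)+\bs R_k,$$
with $\bs R_k$ the quadratic remainder. Three ingredients then finish the argument. First, applying the invariance principle to the mean-zero summands $\bs H(\cdot,\bt_j)$ yields the maximal bound $\max_k\|\sum_{i=k+1}^{k+G}\bs H(\mathbb X_i^{(j)},\bt_j)\|=O_P(\sqrt{G\log(n/G)})$, which, combined with the invertible Jacobian, already delivers Assumption~\ref{Ass_31_new}(a). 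Second, applying it to $\nabla\bs H(\cdot,\bt_j)$ shows that the averaged Jacobian $\frac1G\sum_{i=k+1}^{k+G}\nabla\bs H(\mathbb X_i^{(j)},\bt_j)$ converges to $\bs V_{(j)}=\bs V_{(j)}(\bt_j)$ of Regularity Condition~\ref{regc_wald}(b) at rate $O_P(\sqrt{\log(n/G)/G})$ uniformly in $k$, hence is uniformly invertible. Third, a law of large numbers for $\sup_{\bt}\|\nabla^2H_l\|$ (finite expectation by Regularity Condition~\ref{regc_wald}(a)(ii)) bounds $\frac1G\|\bs R_k\|$ by $O_P(1)\,\|\hbt_{k+1,k+G}-\bt_j\|^2$. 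Inverting the Jacobian and inserting the consistency rate $\|\hbt_{k+1,k+G}-\bt_j\|=O_P(\sqrt{\log(n/G)/G})$ reduces the Bahadur error, after the $\sqrt G$-scaling, to terms of order $O_P(\log(n/G)/\sqrt G)$; comparing with the target $o_P((\log(n/G))^{-1/2})$ requires $(\log(n/G))^{3/2}/\sqrt G\to0$, which holds under Assumption~\ref{as.bandwidth} because $G\gg n^{2/(2+\nu)}\log n$. This establishes Assumption~\ref{aswaldaltnullseg}.

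For part~(b) the windows entering the minimum straddle a change point, so a fraction $\rho\in[\varepsilon,1]$ of the $G$ summands stem from a neighbouring regime. Here Regularity Condition~\ref{regc_wald}(c) provides the first-moment envelope for the uniform-in-$\bt$ law of large numbers, under which the empirical estimating function converges uniformly on $\bs\Theta$ to the mixture $g_\rho(\bt)=\rho\,\E\bs H(\mathbb X_1^{(\mathrm{nb})},\bt)+(1-\rho)\,\E\bs H(\mathbb X_1^{(j)},\bt)$, where $(\mathrm{nb})$ denotes the relevant neighbour. At $\bt_j$ the second term vanishes while the first does not, since identifiable uniqueness forces $\E\bs H(\mathbb X_1^{(\mathrm{nb})},\bt_j)\neq\bs0$; hence $\|g_\rho(\bt_j)\|\ge\varepsilon\,c>0$. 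A continuity and compactness argument over $\rho\in[\varepsilon,1]$ and the two neighbours produces a fixed $\delta>0$ and a $\bt_j$-neighbourhood on which $\|g_\rho\|$ is bounded below uniformly, so the uniform LLN forces $\hbt_{k+1,k+G}$ outside that neighbourhood, uniformly over the $O(G)$ admissible $k$, with probability tending to one. Thus the minimum of $\|\hbt_{k+1,k+G}-\bt_j\|$ exceeds $\delta$ with probability tending to one, and since $G/\log(n/G)\to\infty$ the scaled minimum in Assumption~\ref{Ass_31_new}(b) diverges.

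The main obstacle is the uniformity in $k$ in part~(a): the Jacobian deviation and the quadratic remainder must be dominated simultaneously across all $\sim n$ windows, which is precisely why the strong invariance principle rather than a pointwise law of large numbers is invoked, and the bookkeeping must confirm that each error term beats the sharp threshold $o_P((\log(n/G))^{-1/2})$ under the bandwidth condition. In part~(b) the analogous difficulty is securing the separation constant $\delta$ uniformly over the contamination fraction $\rho$ and the window position, for which the compactness of $\bs\Theta$ and the envelope of Regularity Condition~\ref{regc_wald}(c) are the decisive tools.
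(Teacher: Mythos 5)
Your proposal is correct, and part (a) is essentially the paper's argument (Taylor expansion of the estimating equation, moving-window maximal bounds from the invariance principle, absorption of the quadratic remainder, and the explicit bookkeeping showing both error terms are $O_P(\log(n/G)/\sqrt{G})=o_P((\log(n/G))^{-1/2})$ under the bandwidth condition). One organizational difference in (a): you first establish uniform consistency of $\hbt_{k+1,k+G}$ and then expand with the Jacobian evaluated at $\bt_j$ plus a Lagrange remainder, whereas the paper expands in mean-value form with the Jacobian at the intermediate point $\boldsymbol{\xi}_{k,n}^{(j)}$ and invokes the uniform-in-$\bt$ invertibility $\sup_{\bt}\|\bs{V}_{(j)}(\bt)^{-1}\|<\infty$ of Regularity Condition~\ref{regc_wald}(b) to skip the consistency step entirely; this is precisely why that condition is stated uniformly over $\bs\Theta$. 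Your route is the more standard M-estimation template and needs the extra (but routine) well-separatedness argument; the paper's exploits the structure of its assumptions. In part (b) your route genuinely differs: you prove a uniform-in-$(\bt,k)$ law of large numbers for the contaminated empirical score towards the mixture $g_\rho$, and conclude that $\hbt_{k+1,k+G}$ must stay a fixed distance $\delta$ from $\bt_j$ (a stronger intermediate claim, from which the required divergence of the scaled minimum is immediate since $\sqrt{G/\log(n/G)}\to\infty$). The paper instead lower-bounds the empirical score evaluated at $\bt_j$ by $\varepsilon\,\|\E\bs{H}(\mathbb{X}_1^{(j-1)},\bt_j)\|+o_P(1)$ and transfers this to $\|\hbt_{k+1,k+G}-\bt_j\|$ through a mean-value expansion whose Jacobian factor is $O_P(1)$ uniformly by Lemma~\ref{lemma_uniform}(c), which is where Regularity Condition~\ref{regc_wald}(c) enters for the paper. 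Your argument uses condition (c) differently, namely to build the envelope $\sup_{\bt}\|\bs{H}(\cdot,\bt)\|\le\|\bs{H}(\cdot,\bt_j)\|+\mathrm{diam}(\bs{\Theta})\sup_{\bt}\|\nabla\bs{H}(\cdot,\bt)\|$ needed for the uniform law of large numbers; both usages are legitimate, both rest on the same key fact ($\E\bs{H}(\mathbb{X}_1^{(\mathrm{nb})},\bt_j)\neq\bs{0}$ by identifiable uniqueness), and your version avoids the mean-value step in (b) at the price of the compactness/continuity argument over $(\rho,\bt)$. Just make sure, when writing this up fully, that the uniform law of large numbers in (b) is stated uniformly over the window position $k$ as well as over $\bt$ (the windows must be split at the change point into a backward and a forward anchored sum, exactly as in Lemma~\ref{lemma_uniform}(c)).
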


The MOSUM-score procedure requires an inspection parameter $\tbt$ which can also be data-dependent $\tbt_n$. We will show the validity of the assumptions  for situations where $\tbt_n$ is a $\sqrt{n}$-consistent estimator for some $\tbt$ in Theorem~\ref{theorem_ass_score}. 

Here, we discuss the choice $\widehat{\bt}_{a,b}$ defined as the unique zero of $\sum_{i=a}^b\bs{H}(\mathbb{X}_i,\bt)\overset{!}{=}\bs{0}$, which has advantageous properties  for detectability (compare Section~\ref{sectiondetectabibilty}) in particular if combined with a subsequent pruning step.
\begin{theorem}\label{theoremrootnconsisalt}
Let the classical multiple change point situation with $ k_{j,n}=\lfloor  \lambda_j\, n\rfloor$ for $0=\lambda_0< \lambda_1<\ldots< \lambda_{ q}<\lambda_{q+1}=1$ hold. Consider $a=\lfloor \gamma_a n\rfloor$ and $b=\lfloor \gamma_b n\rfloor$ for some $0\le \gamma_a<\gamma_b\le 1$.
Define $\tbt_{\gamma_a,\gamma_b}$ as the unique solution of  $\sum_{j=1}^{q+1}(\min(\lambda_{j},\gamma_b)-\max(\lambda_{j-1},\gamma_a))_+\,\E \bs{H}(\mathbb{X}_1^{(j)},\bt)\overset{!}{=}\mathbf{0}$, where $x_+=\max(x,0)$. 
Furthermore, let Regularity Conditions~\ref{regc_ip} hold with $\bt=\tbt_{\gamma_a,\gamma_b}$ as well as ~\ref{regc_wald} (b) and (c).
Then,
\begin{align*}
	\sqrt{n} \left(\widehat{\bt}_{a,b}- \tbt_{\gamma_a,\gamma_b} \right)=O_P(1).
\end{align*}
\end{theorem}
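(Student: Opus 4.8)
The plan is to treat $\widehat{\bt}_{a,b}$ as an M-estimator whose ``target'' estimating equation is the contaminated, weight-averaged one defining $\tbt_{\gamma_a,\gamma_b}$, and then run the classical two-step Z-estimator argument: first consistency, then linearisation. Write $w_j=(\min(\lambda_j,\gamma_b)-\max(\lambda_{j-1},\gamma_a))_+$ for the asymptotic proportion of the window $[a,b]$ lying in segment $j$, and set $\bs{M}(\bt)=\sum_{j=1}^{q+1}w_j\,\E\bs{H}(\mathbb{X}_1^{(j)},\bt)$, so that $\tbt_{\gamma_a,\gamma_b}$ is by definition the unique zero of $\bs{M}$. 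Throughout I abbreviate $\tbt=\tbt_{\gamma_a,\gamma_b}$.

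First I would prove consistency $\widehat{\bt}_{a,b}\overset{P}{\to}\tbt$. The key ingredient is a uniform law of large numbers
\[
\sup_{\bt\in\bs{\Theta}}\Bigl\lVert\frac1n\sum_{i=a}^b\bs{H}(\mathbb{X}_i,\bt)-\bs{M}(\bt)\Bigr\rVert\overset{P}{\to}0 .
\]
On each segment this follows from a uniform ergodic theorem for strongly mixing sequences (Regularity Condition~\ref{regc_ip}(b)); the required integrable envelope over the compact $\bs{\Theta}$ comes from combining the moment bound on $\bs{H}(\cdot,\tbt)$ in Regularity Condition~\ref{regc_ip}(a) with the mean-value inequality $\lVert\bs{H}(x,\bt)-\bs{H}(x,\tbt)\rVert\le\sup_{\bt'}\lVert\nabla\bs{H}(x,\bt')\rVert\,\lVert\bt-\tbt\rVert$ and the envelope of Regularity Condition~\ref{regc_wald}(c). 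Since there are only $q+1$ segments the per-segment statements aggregate. Consistency then follows from the uniform LLN together with the fact that $\tbt$ is a well-separated zero of $\bs{M}$ (the identifiable-uniqueness content of ``unique solution'') and $\frac1n\sum_{i=a}^b\bs{H}(\mathbb{X}_i,\widehat{\bt}_{a,b})=\bs0$.

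For the rate I would expand the defining equation componentwise by the mean value theorem around $\tbt$, obtaining intermediate points $\bar{\bt}_1,\ldots,\bar{\bt}_p$ on the segment between $\widehat{\bt}_{a,b}$ and $\tbt$ and a matrix $\bs{J}_n$ whose $l$-th row is $\frac1n\sum_{i=a}^b\nabla H_l(\mathbb{X}_i,\bar{\bt}_l)$, so that
\[
\sqrt{n}\bigl(\widehat{\bt}_{a,b}-\tbt\bigr)=-\bs{J}_n^{-1}\,\frac{1}{\sqrt n}\sum_{i=a}^b\bs{H}(\mathbb{X}_i,\tbt).
\]
The score factor is $O_P(1)$: its expectation equals $\sum_j(n_j-nw_j)\,\E\bs{H}(\mathbb{X}_1^{(j)},\tbt)$, where $n_j=|[a,b]\cap(k_{j-1,n},k_{j,n}]|$, because $\sum_j nw_j\,\E\bs{H}(\mathbb{X}_1^{(j)},\tbt)=n\bs{M}(\tbt)=\bs0$; the floors in $a,b,k_{j,n}$ give $|n_j-nw_j|=O(1)$ over the bounded number of segments, so the expectation is $O(1)=o(\sqrt n)$. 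The centred part $\sum_{i=a}^b(\bs{H}(\mathbb{X}_i,\tbt)-\E\bs{H})$ is $O_P(\sqrt n)$ by the strong invariance principle of Theorem~\ref{theorem_ass_ip} (with $\bt=\tbt$) applied on each segment and summed. Meanwhile $\bs{J}_n\overset{P}{\to}\bs{J}:=\sum_j w_j\,\E\nabla\bs{H}(\mathbb{X}_1^{(j)},\tbt)$ by a second uniform LLN for $\nabla\bs{H}$, justified by the moment bound of Regularity Condition~\ref{regc_wald}(a)(i) and the already-established consistency $\bar{\bt}_l\to\tbt$.

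The main obstacle is the invertibility of the aggregated Jacobian $\bs{J}$. Regularity Condition~\ref{regc_wald}(b) guarantees that each per-segment matrix $\bs{V}_{(j)}(\tbt)=\E(\nabla\bs{H}(\mathbb{X}_1^{(j)},\tbt))^T$ is regular, but a nonnegative weighted sum of regular matrices need not be regular, so this does not transfer automatically to $\bs{J}=\sum_j w_j\,\bs{V}_{(j)}(\tbt)^T$. I would therefore read the hypothesis that $\tbt$ is the \emph{unique} solution of the weighted equation as also entailing that this zero is nondegenerate, i.e.\ that the Jacobian $\bs{J}$ of $\bs{M}$ at $\tbt$ is nonsingular (exactly the condition one checks directly in the concrete examples of Section~\ref{subsection_examples}). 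Granting this, $\bs{J}_n^{-1}=O_P(1)$, and combining with the $O_P(1)$ score factor yields $\sqrt n(\widehat{\bt}_{a,b}-\tbt)=O_P(1)$, as claimed.
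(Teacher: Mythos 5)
Your proposal is correct and, at its core, follows the same route as the paper: the paper also linearises the estimating equation at $\tbt_{\gamma_a,\gamma_b}$ (by declaring the proof ``analogous'' to the Taylor-expansion argument that gave \eqref{eq_315}), controls the score term $\frac{1}{\sqrt n}\sum_{i=a}^b\bs{H}(\mathbb{X}_i,\tbt_{\gamma_a,\gamma_b})$ by a piecewise central limit theorem over the segments met by $[a,b]$ (the deterministic parts cancelling up to $O(1)$ by the definition of the weights, exactly your floor-counting argument), and obtains convergence of the Jacobian $\frac1n\sum_{i=a}^b\nabla\bs{H}(\mathbb{X}_i,\bs{\xi}_n)$ by a piecewise application of Lemma~\ref{lemma_uniform}(c)(ii) with $G$ replaced by $n$. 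Two differences are worth recording. First, you establish consistency before linearising; the paper's prototype argument for \eqref{eq_315} skips this step because Regularity Condition~\ref{regc_wald}(b) bounds $\lVert\bs{V}_{(j)}(\bt)^{-1}\rVert$ uniformly over all of $\bs{\Theta}$, so the unknown intermediate point is harmless there. Second, and more substantively, the invertibility obstacle you flag is genuine and the paper does not resolve it either: for a window straddling several segments the relevant limit is the weighted sum $\sum_j w_j\bs{V}_{(j)}(\bt)^T$, Regularity Condition~\ref{regc_wald}(b) controls only the individual summands, and a proof ``by analogy'' to \eqref{eq_315} silently requires invertibility of this weighted sum uniformly in $\bt$ --- which is even stronger than what your argument needs (nonsingularity near $\tbt_{\gamma_a,\gamma_b}$ only, thanks to the prior consistency step). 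Your reading of ``unique solution'' as nondegeneracy is not a logical consequence of uniqueness (consider $x\mapsto x^3$), so it is an added assumption; but it is an honest patch of a gap the paper leaves implicit, and it holds automatically whenever all $\bs{V}_{(j)}$ are positive (or negative) definite, as in the least-squares and Poisson partial-likelihood examples of Section~\ref{subsection_examples}. One small correction: you invoke Regularity Condition~\ref{regc_wald}(a)(i) for the Jacobian law of large numbers, but (a) is not among the theorem's hypotheses; the assumed condition (c) (an integrable sup-envelope for $\nabla\bs{H}$) already yields the required uniform convergence via Lemma~\ref{lemma_uniform}(c), which is precisely how the paper argues.
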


Finally, we show the validity of the necessary assumptions for the MOSUM-score procedure under suitable regularity conditions on the estimating function.
\begin{theorem}\label{theorem_ass_score}
	Let $\tbt_n$ and $\tbt$ with $\sqrt{n} (\tbt_n-\tbt)=O_P(1)$ in addition to Regularity Condition \ref{regc_ip} with $\boldsymbol{\theta}=\tbt$ hold. Additionally, let Regularity Condition~\ref{regc_wald} (a) and (c) hold. Then
Assumption~\ref{as.replacemulti} as well as \ref{ass_localization_rates} (a) hold.
\end{theorem}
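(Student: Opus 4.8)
The plan is to reduce both assertions to a Taylor expansion of the estimating function about the limiting inspection parameter $\tbt$, exploiting that $\|\tbt_n-\tbt\|=O_P(n^{-1/2})$. Since $\bs{H}$ is twice differentiable on the compact $\bs{\Theta}$ (Regularity Condition~\ref{regc_wald}(a)) and, by consistency, $\tbt_n$ lies in $\bs\Theta$ together with the segment joining it to $\tbt$ with probability tending to one, a componentwise mean value expansion gives for each $l=1,\dots,p$
\[ H_l(\mathbb{X}_i,\tbt_n)-H_l(\mathbb{X}_i,\tbt)=\nabla H_l(\mathbb{X}_i,\tbt)^T(\tbt_n-\tbt)+\tfrac12(\tbt_n-\tbt)^T\nabla^2H_l(\mathbb{X}_i,\bs\xi_{i,l})(\tbt_n-\tbt), \]
with $\bs\xi_{i,l}$ between $\tbt$ and $\tbt_n$. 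Everything then reduces to controlling moving-window sums of the pointwise gradient $\nabla H_l(\cdot,\tbt)$ (the linear term) and of $\sup_{\bt\in\bs\Theta}\|\nabla^2H_l(\cdot,\bt)\|$ (the quadratic remainder).

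For Assumption~\ref{as.replacemulti} I would insert this expansion into $\M_{\tbt_n}(k)-\M_{\tbt}(k)$, obtaining a linear contribution bounded by $\|\tbt_n-\tbt\|$ times the norm of the difference of the two moving sums of $\nabla H_l(\mathbb{X}_i,\tbt)$, plus a quadratic contribution bounded by $\|\tbt_n-\tbt\|^2$ times a $2G$-window sum of $\sup_{\bt\in\bs\Theta}\|\nabla^2H_l\|$. The sequences $\{\nabla H_l(\mathbb{X}_i,\tbt)\}$ and $\{\sup_{\bt\in\bs\Theta}\|\nabla^2H_l(\mathbb{X}_i,\bt)\|\}$ are piecewise stationary, strongly mixing and possess $2+\tilde\nu$ moments (Regularity Condition~\ref{regc_wald}(a)(i),(ii)), hence satisfy the invariance principle of Theorem~\ref{theorem_ass_ip}. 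As in the proof of Theorem~\ref{limitdistribution}, the modulus of continuity of the approximating Wiener process yields a moving-window fluctuation bound of order $\sqrt{G\log(n/G)}$ uniformly in $k$; adding the $O(G)$ contribution of the (at most $q+1$) piecewise-constant means shows each window sum is $O_P(G)$ uniformly. Thus $\max_k\|\M_{\tbt_n}(k)-\M_{\tbt}(k)\|=O_P(Gn^{-1/2})+O_P(Gn^{-1})=O_P(Gn^{-1/2})$. Writing $c=n/G\to\infty$ (Assumption~\ref{as.bandwidth}(a)), the comparisons $Gn^{-1/2}/\sqrt{G/\log(n/G)}=\sqrt{\log(c)/c}\to0$ and $Gn^{-1/2}/\sqrt{G\log(n/G)}=(c\log c)^{-1/2}\to0$ establish (i) and (ii) respectively.

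For Assumption~\ref{ass_localization_rates}(a) the sums run over a single stationary regime $\{\mathbb{X}_i^{(j)}\}$, are normalised by $1/k$, and the endpoint $m$ is one of finitely many fixed values, so no maximum over $O(n/G)$ positions arises and the first-order information suffices. I would use the mean value bound $\|\bs{H}(\mathbb{X}_i^{(j)},\tbt_n)-\bs{H}(\mathbb{X}_i^{(j)},\tbt)\|\le\|\tbt_n-\tbt\|\sum_l\sup_{\bt\in\bs\Theta}\|\nabla H_l(\mathbb{X}_i^{(j)},\bt)\|$, whence each maximum in (i),(ii) is at most $\|\tbt_n-\tbt\|\,\max_{\xi_n\le k\le G}\tfrac1k\sum_i\sum_l\sup_{\bt\in\bs\Theta}\|\nabla H_l(\mathbb{X}_i^{(j)},\bt)\|$, the index $i$ ranging over the window of length $k$. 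The summand is integrable by Regularity Condition~\ref{regc_wald}(c), and $\{\mathbb{X}_i^{(j)}\}$ is stationary and ergodic (mixing implies ergodicity), so the forward and backward ergodic theorems make the growing averages converge to their common mean; since $\xi_n\to\infty$ the displayed maximum is $O_P(1)$. Multiplying by $\|\tbt_n-\tbt\|=O_P(n^{-1/2})=o_P(1)$ gives the required $o_P(1)$ for all six choices of $m$.

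The main obstacle is the moving-window maximal inequality behind Assumption~\ref{as.replacemulti}: the crude bound $\max_k\sum_{\text{window}}(\cdot)\le\sum_{i=1}^n(\cdot)=O_P(n)$ is far too weak, and a finite first moment alone (all that Regularity Condition~\ref{regc_wald}(c) provides) cannot preclude a single summand of order nearly $n$, so it does not control the window sums at rate $O_P(G)$. This is precisely why the expansion is pushed to second order and the $2+\tilde\nu$ moments of Regularity Condition~\ref{regc_wald}(a) are invoked, unlocking the invariance principle of Theorem~\ref{theorem_ass_ip} and the $\sqrt{G\log(n/G)}$ fluctuation bound; the localisation claim, involving only fixed endpoints and $1/k$-normalisation, is the one place where the weaker integrability in (c) is enough. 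The remaining work is routine bookkeeping: checking that $\tbt_n$ and the segment to $\tbt$ lie inside $\bs\Theta$ with probability tending to one so that the expansion is licit, and tracking the two distinct threshold comparisons via $n/G\to\infty$.
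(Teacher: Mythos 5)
Your proposal is correct, and all the rate computations check out (both threshold comparisons reduce to $\sqrt{\log c/c}\to 0$ resp.\ $(c\log c)^{-1/2}\to 0$ with $c=n/G$, exactly as you state), but you take a slightly different route than the paper for the main part. The paper's proof is a one-step argument: a \emph{first-order} mean value expansion $\sum_{i=k+1}^{k+G}\bigl(\boldsymbol{H}(\mathbb{X}_i^{(j)},\tbt_n)-\boldsymbol{H}(\mathbb{X}_i^{(j)},\tbt)\bigr)=\bigl(\sum_{i=k+1}^{k+G}\nabla\boldsymbol{H}(\mathbb{X}^{(j)}_i,\boldsymbol{\xi}_{k,n}^{(j)})\bigr)^T(\tbt_n-\tbt)$, where the random intermediate points $\boldsymbol{\xi}_{k,n}^{(j)}$ are absorbed by the \emph{uniform-in-$\boldsymbol{\theta}$} moving law of large numbers, Lemma~\ref{lemma_uniform}~(b), giving $O_P(G/\sqrt n)$ directly. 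You instead expand to \emph{second order} around the fixed $\tbt$, control the linear term at the single parameter value $\tbt$ via the invariance principle (this is Lemma~\ref{lemma_uniform}~(a) applied to $\nabla\boldsymbol{H}(\cdot,\tbt)$, which you essentially re-derive) and the quadratic remainder via the sup-moment bound on $\nabla^2 H_l$ (Lemma~\ref{lemma_uniform}~(d)). The two decompositions are morally equivalent — indeed the paper's proof of Lemma~\ref{lemma_uniform}~(b) internally performs the same expansion of the gradient plus a moving LLN for $\sup_{\bt}\|\nabla^2 H_l\|$, so both routes consume exactly Regularity Condition~\ref{regc_wald}~(a)(i) and (a)(ii) — but yours avoids the covering/chaining argument over the compact $\boldsymbol{\Theta}$ at the cost of carrying an explicit second-order remainder, while the paper's is shorter because that chaining work is already packaged in Lemma~\ref{lemma_uniform}~(b). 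Your observation that first moments alone (Regularity Condition~\ref{regc_wald}~(c)) cannot control the maximum over $O(n/G)$ windows at rate $O_P(G)$, whereas they do suffice for the localization statement with its finitely many fixed anchors and $1/k$-normalisation, is accurate and is precisely the division of labour in the paper. For Assumption~\ref{ass_localization_rates}~(a) your argument (mean value inequality with the supremum pulled inside the summand, then forward/backward ergodic theorems) coincides with the paper's appeal to Lemma~\ref{lemma_uniform}~(c), up to the cosmetic difference of where the supremum over $\boldsymbol{\theta}$ sits.
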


\section{Empirical comparison}\label{section_simulation}
For the mean change situation based on the sample mean as in Section~\ref{examplemeanchange} extensive simulation studies can be found in \cite{KirchMuhsal,ChoKirch2018,ChoKirchMeier}.
In this section, we want to give a proof of concept beyond the mean by including a variety of examples in the simulation study, where we focus on aspects that differ from the mean situation.
In Section~\ref{section_det_sim} we aim at sheding some additional light on the detectability issue discussed in Section~\ref{sectiondetectabibilty} using the example of a mean change in combination with the median-like estimator.
In Sections~\ref{sec_linear} and \ref{sec_Poisson} we focus on differences between the MOSUM-Wald and MOSUM-score statistics in the empirical performance as well as their computation time. In particular, this includes the example of linear regression in Section~\ref{sec_linear}, where the estimator is known analytically, as well as the example of Poisson autogression in Section~\ref{sec_Poisson}, where this is not the case and all estimators need to be obtained numerically.

All simulations are based on $1000$ repetitions unless otherwise stated, $\varepsilon$ as in \eqref{MOSUMcond3} is set to $0.2$ and the threshold from Section~\ref{sectionthreshold} with $\alpha_n=5\%$ is used.

\subsection{Median-like Estimator and Detectability}\label{section_det_sim}

\begin{table}
\begin{center}\begin{tabular}{|c||c|c|c|c|}
\hline
   $G$ &   $[80,120]$& $[180,220]$ & $[580,620]$ & $[880,920]$ \\
 \hline \hline
 \multicolumn{5}{|c|}{Inspection Parameter: Sample median $\hat{\mu}_{1,1000}$}\\
 \hline
$20$        &   0.019   &  1.000 &    0.935  &   0.142\\
$50$          &    0.343  &   1.000   &  1.000  & 0.665\\
  \hline  \hline
  \multicolumn{5}{|c|}{Inspection Parameter: Sample median $\hat{\mu}_{1,200}$}\\
\hline
$20$     &     0.158     & 0.985 &   0.380  &   0.026  \\
$50$      &    0.659     & 1.000  &   0.999   &  0.404  \\
  \hline 
 \end{tabular}\end{center}
\caption{Percentage of simulations with a change point estimator in the given intervals ('detection rate')  based on the MOSUM-score statistic with the median-like estimating function}
\label{table_median}
 \end{table}

In Section \ref{sectiondetectabibilty} the detectability issue of the MOSUM-score procedure was discussed in detail. Here, we shed some additional light onto the problem by a corresponding empirical study and the analysis of well-log data where we use the median-like estimating function as discussed in Section~\ref{examplemeanchange} with the sample median (based on different stretches of the original data)
as inspection parameter. If instead the median-like estimator is used as an inspection parameter, the detection rates as given in Table~\ref{table_median} only differ in the third digit. However, using the sample median as implemented in \texttt{R} instead of the median-like estimator (implemented using the \texttt{uniroot}-function from the \texttt{R}-package \texttt{stats}) saved around 1/3 of computation time in our other-wise identical simulation.

The procedure is implemented using the \texttt{mosum} function from the \textsf{R}-package \texttt{mosum}  applied to the transformed data sequence $H(X_t,\widehat{\bt}_n)$. In particular, the variance is estimated empirically with the mosum-window estimator that is the default in the package (see \cite{ChoKirchMeier}). Thus, the signal is effectively given by the difference in the expectation of the transformed data $\E H(X_t,\tbt)$  with the noise corresponding to the distribution of $H(X_t,\tbt)$ where $\tbt$ is the limit of $\widehat{\bt}_n$.

To elaborate we consider a noise sequence of $1000$ independent standard normally distributed random variables which are then centered around a mean sequence of $1,2,5,3,4$ with change points at $100, 200, 600$ and $900$. Consequently, the global median will be somewhere between $3$ and $4$. As such the second (jumping from mean $2$ to mean $5$) and to a slightly lesser degree the third change point (jumping from $5$ to $3$) should be well visible to our statistic applied with the global median, while the last one (jumping from $3$ to $4$) should be less visible. This is also confirmed by the empirical detection rates as reported in Table~\ref{table_median}.  The first change point should be almost  invisible as it jumps from $1$ to $2$ with 
only few observations exceeding the global median. While the detection rates in Table \ref{table_median} are indeed lower than for the other change points, they are  well above the $5\%$ threshold with detection rates of approximately $20\%$ (for $G=20$) and even above $80\%$ (with $G=50$). The reasons are twofold: First of all, more observations from the second regime can be expected to cross the threshold than for the first one. Secondly, with the median-like estimation function, the data transformation is strictly monotonic such that any mean difference remains theoretically detectable with any inspection parameter. Furthermore, the transformation that dampens the signal (if both means are on the same side of the inspection parameter) will also dampen the noise (i.e.\ variance) of the signal.  As soon as the sample median of the first $200$ observations is used as an inspection parameter, which will be between $1$ and $2$,  the detection rates for the first change point increases while the ones for the last change point decreases illustrating Remark~\ref{rem_detect}.

\subsubsection*{Analysis of well-log data}

One of the advantages of using the median in combination with the median-like estimating function is the fact that they are robust with respect to outliers. Indeed, recently, the distinction between outliers and change points has received some increased attention. For example, \cite{knoblauch2018doubly}, \cite{fearnhead2019changepoint} and \cite{li2021adversarially} present robust change point procedures and illustrate their usefulness using well-log data (see Figure~\ref{figure_well_log}~(a)): This geophysical data is collected from a probe being lowered into a bore-hole, where change points occur when the probe moves from one rock strata to another. The data notably contains several outliers, which have been removed before the change point analysis in earlier works such as \cite[Section 5.7.2]{ruanaidh1996numerical}, where the data was first discussed, or \cite{fearnhead2003line}, \cite{fearnhead2006exact}, \cite{adams2007bayesian},\cite{wyse2011approximate}, \cite{ruggieri2016exact}. 

\begin{figure}[tb]
	\centering
	\subfloat[Well-log data with detected change points]{
\includegraphics[width=\textwidth,trim=1.5cm 1.6cm 1.5cm 2.2cm]{./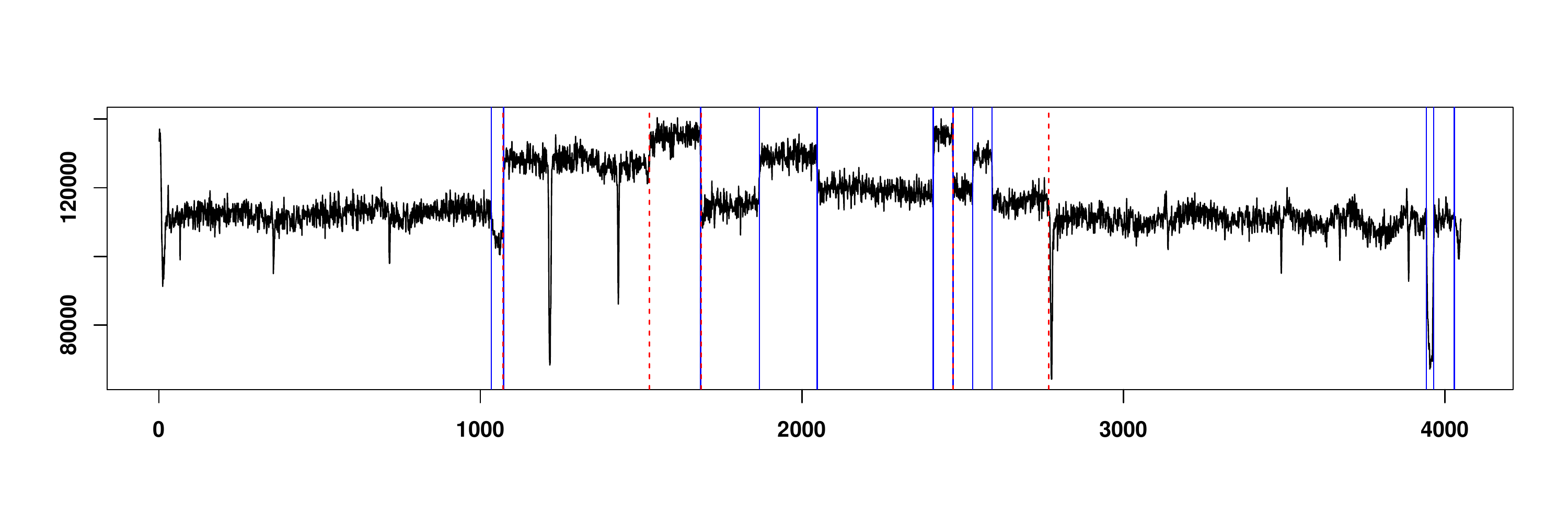}
	}\\
	\subfloat[MOSUM-Score statistic with global sample median as inspection parameter]{
\includegraphics[width=\textwidth,trim=1.5cm 1.6cm 1.5cm 1.8cm]{./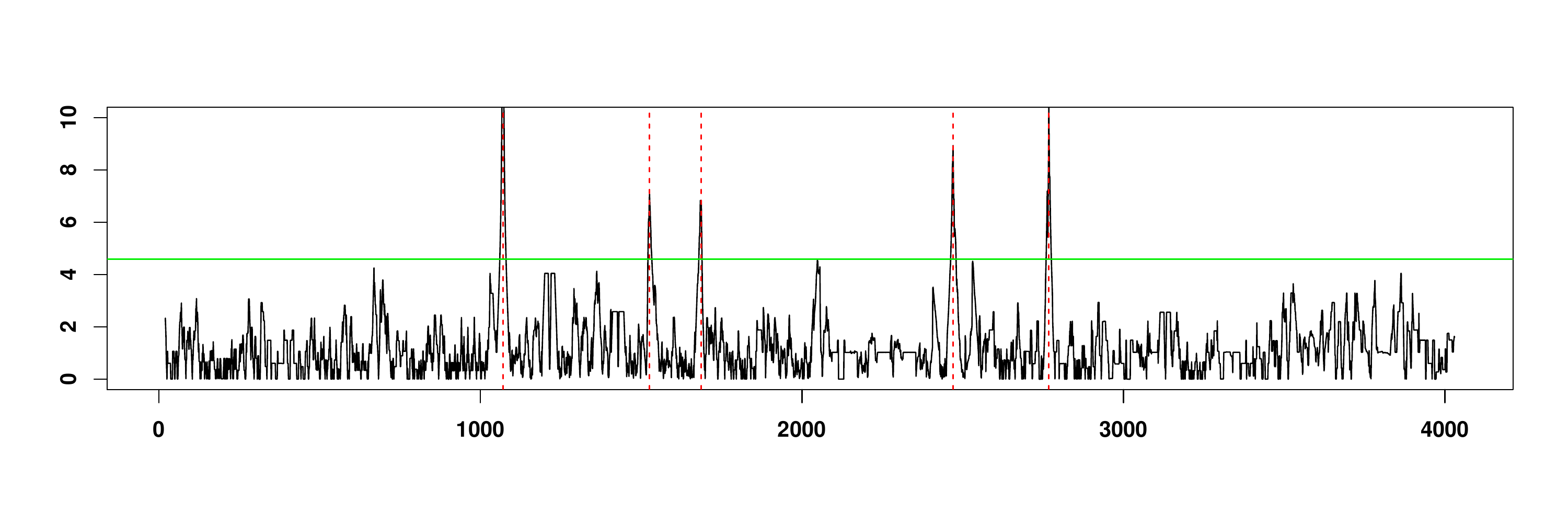}
	}\\
	\subfloat[MOSUM-Score statistic with median between $1070$ and $2767$ as inspection parameter]{
\includegraphics[width=\textwidth,trim=1.5cm 1.6cm 1.5cm 1.8cm]{./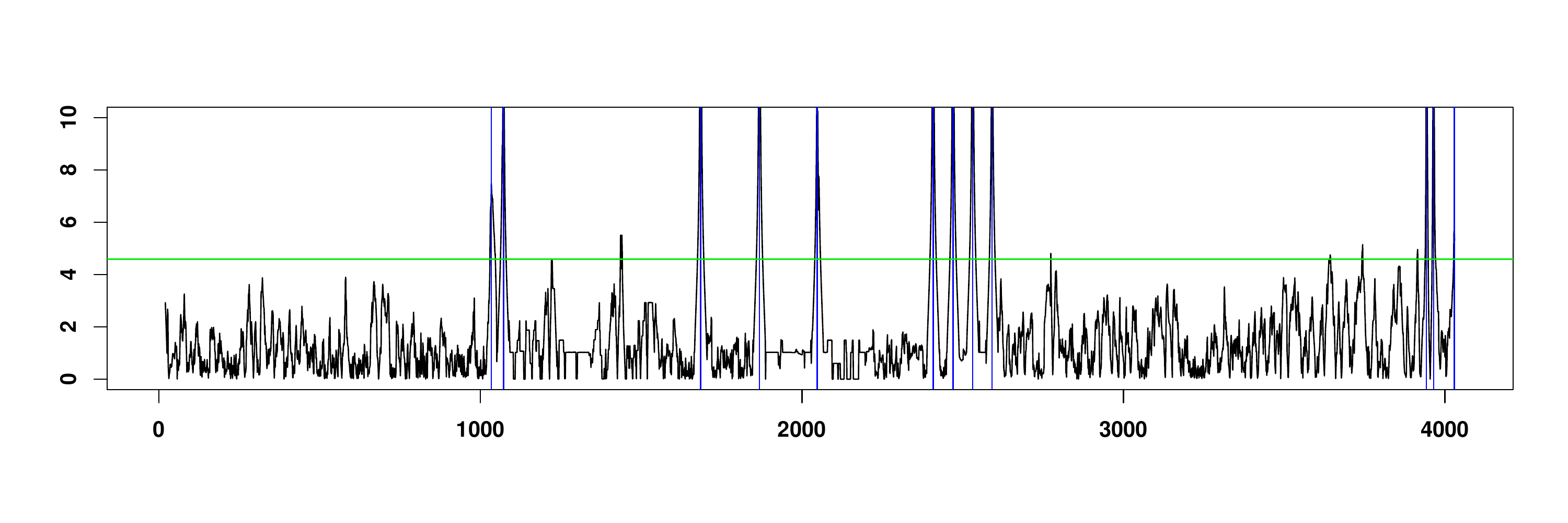}
	}
	\caption{Well-log data with segmentation obtained from the median-like MOSUM-Score statistic with different inspection parameters. The vertical dashed red lines give the detected change points using the global median as inspection parameter, while the vertical solid blue lines give the detected change points using as an inspection parameter the median of the stretch starting at $1070$ and ending at $2767$.}
	\label{figure_well_log}
\end{figure}

Using the median-like estimating function in combination with the sample median yields a robust procedure such that we can use this methodology on the original data set including the outliers. We use a bandwidth of $G=20$ and first apply the procedure using the global median as inspection parameter (see Figure~\ref{figure_well_log}~(b)). This results in the detection of 5 change points leaving some more undetected (as predicted by the discussion in Section~\ref{sectiondetectabibilty}). In a second step, we repeat the procedure using now the sample median between the first (at $1070$) and last (at $2768$) detected change point as  inspection parameter (see Figure~\ref{figure_well_log}~(c)). This results in the detection of 9 additional change points, which together result in a very reasonable segmentation of the data (see Figure~\ref{figure_well_log}~(a)). Out of the three change points detected by both inspection parameters in one case the estimator was exactly at the same location (at $2470$), while the other two were shifted by 2 time points (one in either direction). This further emphasizes the need for suitable post-processing methodology that can deal with this type of duplicate estimators obtained from different inspection parameters or different bandwidths (see also Remark~\ref{rem_detect}). Using additionally inspection parameters obtained from the data up to the first respectively starting at the last change point (detected with the global median at $1070$ respectively $2768$) does not yield any additional change point estimators but some more duplicated ones ($8$ that are exactly equal, and $6$ slightly shifted). 

Concerning the last three detected change points in Figures~\ref{figure_well_log}~(a) and (c), the question may arise as to whether these are truly change points or in fact outliers. Indeed, the stretch between $3942$ and $3965$ as been removed by some of the previous works as outliers but not all of them. The distinction between serial outliers and change points depends on the length of the 'outlier' stretch. While some authors  considered this stretch of $23$ very small observations as two change points, others have considered them serial outliers. The MOSUM procedure  is robust with respect to classical outliers that can occur in both directions in an isolated way. Longer (with respect to the bandwidth $G$) stretches of serial outliers are in some sense indistinguishable from segments of length of order $G$ and thus may be detected by the procedure: While the breakdown point of the median is at $50\%$,   smaller stretches of outliers (in the same direction) of length $\gamma G$, $0<\gamma<0.5$, will lead to the situation that the sample median of the window of $G$ observations (including the $\gamma G$ serial outliers) will estimate the $(0.5\pm 0.5\, \gamma/(1-\gamma))$-quantile instead which typically will differ from the median of the neighboring stretch that does not contain any serial outliers. This also explains the last change point in Figures~\ref{figure_well_log}~(a) and (c) because there the 'outlier' stretch (if one decides to classify it as such) has a length of around $10$ and we use a bandwidth of only~$20$. 
On the other hand a suitable choice of bandwidth $G$ in combination with the choice of $\varepsilon$ in \eqref{MOSUMcond3} will make the procedure robust with respect to sufficiently short outlier stretches. The latter effect can e.g.\ be seen in Figure~\ref{figure_well_log}~(c), where some of the other outlier stretches lead to significant points in the MOSUM score statistic that are sufficiently isolated to not be considered change points by the procedure due to the choice of $\varepsilon=0.2$ in \eqref{MOSUMcond3}.

With larger bandwidths the long stretches at the beginning and the end of the sequence are also segmented to some extend, which is consistent with observations made by some previous authors (even after outlier removal) because these stretches seem not to be mean-stationary which will cause location segmentation procedures to fit a step-function.

\subsection{Linear Regression}\label{sec_linear}
In this section, we will explore the difference in performance between the MOSUM-Wald and MOSUM-score methodology when applied to linear regression where we use the usual least squares methodology that requires no numerical approximation.

To this end, we consider a time series $Y_i=\boldsymbol{X}_i^T\boldsymbol{\beta}+\varepsilon_i$, $i=1,\ldots,n,$ of length $n=1000$ with exogenous regressors $\boldsymbol{X}_i=(1,X_{i,1},X_{i,2})^T$ with $X_{i,1} \sim N(1,1)$ and $X_{i,2} \sim N(2,1)$ and i.i.d.\ standard normal errors. We include three change points at $200, 500$ and $800$ and the regression coefficients $\boldsymbol{\beta}_1=(1,2,2)^T, \boldsymbol{\beta}_2=(1,1,2)^T, \boldsymbol{\beta}_3=(2,1,2)^T$ and  $\boldsymbol{\beta}_4=(2,1,1)^T$. We use the global least-squares regression estimator as inspection parameter $\hbb_{1,n}$ for the MOSUM-score statistic.

As discussed in Remarks~\ref{rem_cov_wald} and \ref{rem_cov_score} we can make minimal assumptions on the covariance estimators theoretically, but the small-sample performance crucially depends on this estimator.

\begin{table}\begin{center}\begin{tabular}{|l||c|c|c|c|c||c|c|c|}
\hline
&\multicolumn{5}{|c||}{\textbf{Estimated number $\widehat{q}$}}&\multicolumn{3}{|c|}{\textbf{Detection rate }}    \\
  $G$ &  $\leq 1$ & $2$ & $3$ & $4$ & $\geq5$ & $200$& $500$ & $800$ \\
 \hline \hline
 \multicolumn{9}{|c|}{MOSUM-score with \emph{S-global} covariance estimator}\\
 \hline
 $50$ & 0.484  &0.489 &0.027 &0 &0&0.494 & 0.027& 0.993\\
 $100$ & 0.003&0.468 & 0.518&0.011 & 0 & 0.969 & 0.515& 0.999\\
  \hline \hline
  \multicolumn{9}{|c|}{MOSUM-score  with \emph{S-local} covariance estimator}\\
 \hline
 $50$ & 0.110  &    0.502    &  0.353  &    0.034   &       0.001 &  0.804  & 0.430  &  1.000\\
 $100$ & 0    &  0.049   &   0.918   &   0.033  &        0 &  0.985  & 0.917 &   1.000 \\
 \hline \hline
\multicolumn{9}{|c|}{MOSUM-Wald with \emph{W-local} covariance estimator}\\
 \hline
 $50$ & 0.018  &0.445 &0.501 &0.035 &0.001& 0.963 & 0.539 & 1.000 \\
 $100$ & 0 &0.030 & 0.945 &0.025 & 0& 0.998 & 0.938& 1.000 \\
  \hline
 \end{tabular}\end{center}
 \caption{Number of estimated change points and detection rate for all three change points (i.e.\ percentage of simulations with a change point estimator in the interval $[k_{j,n}-20,k_{j,n}+20]$) for the various scenarios in the linear regression example. }
\label{table_lin_reg}
 \end{table}

As covariance estimators we use $ \boldsymbol{\widehat{\Sigma}}^{(j)}=\widehat{v}_{n}^{(j)}\,\frac{1}{n}\sum_{i=1}^n\boldsymbol{X}_i\boldsymbol{X}_i^T$ with
\begin{align*}
	&\widehat{v}_{n}^{(1)} =\frac{1}{n-1}\sum_{i=1}^n(Y_i-\bs{X}_i^T\hbb_{1,n})^2,\tag{\emph{S-global}}\\
&\widehat{v}_{n}^{(2)} = \frac{1}{2G}\left(\sum_{i=k-G+1}^{k}(\hat{\epsilon}_{i}-\bar{\epsilon}_{k-G+1,k})^2+\sum_{i=k+1}^{k+G}(\hat{\epsilon}_{i}-\bar{\epsilon}_{k+1,k+G})^2\right),\\
& \text{with }\hat{\epsilon}_{i}:=Y_i-\bs{X}_i^T\hbb_{1,n} \text{ and }\bar{\epsilon}_{l,u}:=\frac{1}{u-l+1}\sum_{i=l}^u\hat{\epsilon}_{i}, \tag{\emph{S-local}}\\
&\widehat{v}_{n}^{(3)} =\frac{1}{2G}\left(\sum_{i=k-G+1}^{k}(Y_i-\bs{X}_i^T\hbb_{k-G+1,k})^2+\sum_{i=k+1}^{k+G}(Y_i-\bs{X}_i^T\hbb_{k+1,k+G})^2\right).\tag{\emph{W-local}}
\end{align*}

We use the first two estimators  with the MOSUM-score statistics while the last one is only used with the Wald statistics where the estimators $\hbb_{t+1,t+G}$ are already available. We do not use this estimator with the MOSUM-score statistic because their usage cancels the computational advantage of the MOSUM-score statistic over the MOSUM-Wald statistic. 

The first two estimators $ \boldsymbol{\widehat{\Sigma}}^{(j)}$, $j=1,2$, are  only consistent in the no-change situation but do not fulfill assumption~\ref{as.alt.varianceestimator}(a) in the presence of change points. Instead of using a threshold as  in Remark~\ref{rem_cov_score}, in this simulation study, we stick to the threshold as  in \eqref{eq_neu_threshold} for all methods.

Table~\ref{table_lin_reg} gives
the estimated number of change points as well as the detection rates for all three change points in the various settings. For a bandwidth of $50$ the MOSUM-Wald procedure (with \emph{W-local}) outperforms the MOSUM-score procedure (with \emph{S-local}), while both procedures achieve a similar performance for bandwidth $100$. The MOSUM-score procedure with the global estimator $(\emph{S-global})$ is clearly inferior to both competing methods (in particular for  smaller bandwidth and the second change point) emphasizing again the importance of the choice of the covariance estimator for the small sample performance.

 In terms of computation time (for $G=n^{2/3}$) the MOSUM-score  clearly outperforms  the MOSUM-Wald statistics: Even with the local estimator  (\emph{S-local}), the MOSUM-score statistic was roughly $22$ times faster in our simulations than the MOSUM-Wald statistic  (with \emph{W-local}) for a time series of length $1000$ (an average (out of $100$ runs) of $0.03$ seconds as compared to $0.66$ seconds). For a length of $8000$ it was already more than $31$ times faster ($0.22$ versus $6.91$ seconds).
The numbers only give a qualitative idea as we did not optimize any of the procedures with respect to computation time but merely used the \textsf{R}-function \texttt{rollsum} (from the \textsf{R}-package \texttt{zoo} \cite{zooarticle}) to calculate the MOSUM-statistics, where the local estimators for the covariance matrices are implemented naively with a loop and the local regression parameter for the Wald-statistics is calculated with the \texttt{lm}-function.

More simulation results including the false alarm rate in the no-change situation, the results for other bandwidths and covariance estimators and more information on computing times  can be found in ~\cite{Reckruehm}, Section~4.1.

\subsection{Poisson Autoregressive Model}\label{sec_Poisson}
In this section we consider the Poisson autoregressive model with the estimation function corresponding to the partial likelihood as in Section~\ref{examplepoiautoreg}. Compared to the previous section this includes two additional difficulties: First, due to the serial dependence of the data the true scaling of the procedures depends on the long-run covariance rather than the covariance matrix. This time-dependency is also the reason behind the larger bandwidths compared to the previous section. Secondly, there is no analytical solution to the estimating procedure such that numerical methods are required which as expected will greatly increase computation time for the MOSUM-Wald procedure.

\begin{table}
\begin{center}\begin{tabular}{|l||c|c|c|c|c||c|c|c|}
\hline
&\multicolumn{5}{|c||}{\textbf{Estimated number $\widehat{q}$}}&\multicolumn{3}{|c|}{\textbf{Detection rate}}    \\
$G$ &  $\leq 1$ & $2$ & $3$ & $4$ & $\geq5$ & $250$& $500$ & $750$ \\
 \hline \hline
\multicolumn{9}{|c|}{MOSUM-score procedure with  $\hbt_{1,1000}$}\\
 \hline
 $80$ & 0.619 &0.288 &0.063 &0.028 & 0.002&0.713&0.135&0.242 \\
 $150$  &0.056 &0.321 &0.449 &0.137 & 0.037 &0.921&0.583&0.623\\
   \hline \hline
\multicolumn{9}{|c|}{MOSUM-score procedure with $\hbt_{300,700}$}\\
 \hline
 $80$ & 0.100 &0.397 &0.300 &0.143 & 0.060&0.936&0.199&0.734 \\
 $150$ &0.018 &0.162 &0.596  &0.194  & 0.030&0.919&0.724&0.742 \\
  \hline \hline
\multicolumn{9}{|c|}{MOSUM-Wald procedure}\\
 \hline
 $80$ & 0.069 &0.295 &0.373 &0.199 & 0.064&0.890&0.603&0.645 \\
 $150$ &0.001 &0.040 &0.629  &0.261  & 0.069&0.896&0.809&0.803 \\
  \hline 
 \end{tabular}\end{center}
 \caption{Number of estimated change points and detection rate  for all three change points (i.e.\ percentage of simulations with a change point estimator in the interval $[k_{j,n}-20,k_{j,n}+20]$)  for the various scenarios in the Poisson autoregressive example.
 }
\label{table_Poisson}
 \end{table}

We consider a time series of length $n=1000$ with three change points at times $250$, $500$ as well as $750$ with the paramaters $\bt_1=(1,0.5)^T$, $\bt_2=(2.5,0.5)^T$, $\bt_3 =(2.5,0.2)^T$ as well as $\bt_4=(1,0.5)^T$.

 For the MOSUM-score statistics we use the global (partial) maximum likelihood estimator as inspection parameter as well as the one based on the observations between time point $300$ and $700$ (compare also Remark~\ref{rem_detect}). To estimate the covariance matrix we use the following local estimator
 \begin{align*}
&	\widehat{\bs{\Sigma}}_{k,n}\\
&= \frac{1}{2G}\sum_{i=k-G+1}^k\left(\bs{H}(\mathbb{Y}_i,\hbt_{1,n})-\overline{\bs{H}}_{k-G+1,k}\right)\left(\bs{H}(\mathbb{Y}_i,\hbt_{1,n})-\overline{\bs{H}}_{k-G+1,k}\right)^T
\\*& \quad+ \frac{1}{2G}\sum_{i=k+1}^{k+G}\left(\bs{H}(\mathbb{Y}_i,\hbt_{1,n})-\overline{\bs{H}}_{k+1,k+G}\right)\left(\bs{H}(\mathbb{Y}_i,\hbt_{1,n})-\overline{\bs{H}}_{k+1,k+G}\right)^T,\notag
\end{align*}
where $\overline{\bs{H}}_{l,u}$ denotes the sample mean of $\bs{H}(\mathbb{Y}_l,\hbt_{1,n}),\ldots, \bs{H}(\mathbb{Y}_u,\hbt_{1,n})$. 
However, this estimator does not take the dependence into account and as such estimates the covariance rather than the long-run covariance matrix, such that Assumption~\ref{as.alt.varianceestimator}(a) is not fulfilled. As before we use a threshold as in \eqref{eq_neu_threshold} despite Remark~\ref{rem_cov_score}.

Motivated by \cite{weiss2010inarch}, equations (7) and (8), we use the  estimator\\ $\widetilde{\bs{\Gamma}}_{k,n}^{-1}=\frac{1}{2}\left(\widetilde{\bs{\Gamma}}_{k-G+1,k}^{-1}+\widetilde{\bs{\Gamma}}_{k+1,k+G}^{-1}\right)$
in the MOSUM-Wald procedure, where
\begin{align*}
&\widetilde{\bs{\Gamma}}_{l,u}^{-1}=\frac{1}{G}\sum_{i=l}^u \frac{1}{(\bs{Y}_{i-1}^T\hbt_{l,u}^{ML})^2}\left(\begin{array}{cc} Y_i& Y_iY_{i-1}\\ Y_iY_{i-1} & Y_iY_{i-1}^2\end{array}\right).
\end{align*}

Table~\ref{table_Poisson} gives the estimated number of change points as well as the detection rates for all three change points in the various setting. In this case, the MOSUM-Wald statistics outperforms the score procedures in terms of detection rates for the second and third change point but indeed the MOSUM-score with the restricted estimator (rather than the global one) slightly outperforms the MOSUM-Wald for the first one. 

The MOSUM-score is computationally much cheaper and scales better with longer series. Indeed, the median computation time (for $100$ runs and $G=n^{2/3}$) for the MOSUM-Wald statistics for a length of $n=1000$ (running several minutes) was more than 272 times slower than the MOSUM-score statistic (running less than a second). For $n=8000$ the MOSUM-Wald statistics ran for more than half an hour which was more than 362 times slower than the MOSUM-score (which ran for less than 6 seconds). In particular, the MOSUM-Wald statistic is significantly slowed down by the need of numerical optimization in comparison to the linear regression where no numerical methods are required.  The numbers only give a qualitative idea where a naive loop was used for the calculation of the statistics and the global and local parameter estimators for the Poisson regression
were calculated with the  \textsf{R}-package \texttt{tscount} \cite{tscountarticle}.

In conclusion, the MOSUM-score statistic can be a good  way to generate change point candidates by means of using different inspection parameters (and bandwidths) even in combination with a sub-optimal covariance estimation procedure.

More simulation results including the false alarm rate in the no-change situation, the results for other bandwidths and covariance estimators, results for the least-squares estimators and more information on computing times  can be found in ~\cite{Reckruehm}, Section 4.2.

\section{Conclusion}\label{section_conclusion}
In this work we propose and analyse a data segmentation procedure in a unified framework including likelihood-based as well as robust methodology for complex non-linear time series such as  Poisson regression or neural-network based (auto-)regression.
To account for the diversity of models included in the framework we work with high-level assumptions in Sections~\ref{section_segmentation} and \ref{section_consistency}, that need to be verified on a case-by-case basis in general. For sufficiently smooth estimating functions this high-level assumptions are shown to be valid under standard moment conditions in Section~\ref{sectionasunderregcond}. Assumptions~\ref{asaltestcovmatwald} for the MOSUM-Wald as well as \ref{as.alt.varianceestimator} for the MOSUM-score procedures and their relaxations in  Remarks~\ref{rem_cov_wald} and \ref{rem_cov_score} on the covariance estimators are very general. Consequently, the choice of covariance estimators has only a minimal influence on the asymptotic properties of the corresponding change point estimators, however, it is critical for the small sample behaviour:  On the one hand (away from change points) the choice of the threshold in combination with the covariance estimator greatly influences the rate of spuriously detected change points. On the other hand (close to change points) this combination determines the detectability rate for a given signal strength. Estimators for the covariance depend crucially on the given model and estimating function, such that they need to be chosen separately for every situation. 

Another contribution of this paper is the analysis of two different procedures: The MOSUM-Wald methodology  yields better results with a single bandwidth but at the cost of a higher computation  time in particular in combination with non-linear models requiring numerical optimization methodology. Furthermore, the MOSUM-Wald  procedure is prone to errors in  a situation where many local maxima are present in the optimization procedure such as is well known for neural network approximations. The reason is that two parameters that are far away in the parameter space can describe almost the same model but will be mistaken for different models by the MOSUM-Wald procedure. The MOSUM-score procedure on the other hand is robust to a multiple mode situations and in addition is computationally much lighter even for non-linear models. On the other hand, with a single inspection parameter it might not be able to detect all change points present (see Section~\ref{sectiondetectabibilty} as well as \ref{section_det_sim}). Future work will aim at solving this problem e.g.\ by combining the methodology with a pruning step as has been discussed for the mean change situation based on the sample mean in \cite{ChoKirch2018}. Such a step is also necessary to deal with truly multiscale heterogeneous situations containing both frequent large jumps as well as small jumps over long stretches of stationarity (for a mathematical definition we refer to \cite{ChoKirch2018}, Definition 2.1).  To deal with such multiscale situations several MOSUM procedures with different bandwidths can be used to generate change point candidates further emphasizing the need for fast candidate generating methods. For the MOSUM-score procedure only the computation of  a variety of data-driven inspection parameters requires numerical optimization while the same inspection parameters  can be used with a variety of bandwidths. The MOSUM-Wald procedure on the other hand requires the (numerical) computation of such estimators for each bandwidth and each point in time making the MOSUM-Wald procedure computationally potentially highly problematic.
For a mathematical analysis of the the pruning step in the mean change situation based on the sample mean, see \cite{ChoKirch2018},  some first considerations in that direction in the framework of this paper can be found in Chapter 5 of \cite{Reckruehm}. 

\section*{Acknowledgements}
This work was supported by the Deutsche Forschungsgemeinschaft (DFG, German Research Foundation) - 314838170, GRK 2297 MathCoRe. Claudia Kirch would also like to thank the Isaac Newton Institute for Mathematical Sciences for support and hospitality during the programme 'Statistical scalability'
when work on this paper was undertaken.

\addcontentsline{toc}{chapter}{References}
\bibliography{literature}

\begin{thebibliography}{}

\bibitem[\protect\citename{Adams \& MacKay, }2007]{adams2007bayesian}
Adams, Ryan~Prescott, \& MacKay, David~JC. 2007.
\newblock Bayesian online changepoint detection.
\newblock {\em arXiv preprint arXiv:0710.3742}.

\bibitem[\protect\citename{Aggarwal {\em et~al.},
  }1999]{aggarwal1999volatility}
Aggarwal, Reena, Inclan, Carla, \& Leal, Ricardo. 1999.
\newblock Volatility in emerging stock markets.
\newblock {\em Journal of Financial and Quantitative Analysis}, {\bf 34}(1),
  33--55.

\bibitem[\protect\citename{Aue \& Horv\'ath, }2013]{AueHorvath}
Aue, A., \& Horv\'ath, L. 2013.
\newblock Structural breaks in time series.
\newblock {\em Journal of Time Series Analysis}, {\bf 34(1)}, 1--16.

\bibitem[\protect\citename{Bauer \& Hackl, }1980]{bauerhackl}
Bauer, Peter, \& Hackl, Peter. 1980.
\newblock An extension of the MOSUM technique for quality control.
\newblock {\em Technometrics}, {\bf 22}(1), 1--7.

\bibitem[\protect\citename{Braun {\em et~al.}, }2000]{braunmueller}
Braun, J.~V., Braun, R.~K., \& M{\"u}ller, H.-G. 2000.
\newblock Multiple changepoint fitting via quasilikelihood, with application to
  DNA sequence segmentation.
\newblock {\em Biometrika}, {\bf 87}(2), 301--314.

\bibitem[\protect\citename{Cho \& Kirch, }2021+]{ChoKirch2018}
Cho, Haeran, \& Kirch, Claudia. 2021+.
\newblock Two-stage data segmentation permitting multiscale changepoints, heavy
  tails and dependence.
\newblock {\em Ann. Inst. Stat. Math.}
\newblock To appear.

\bibitem[\protect\citename{Cho \& Kirch, }2022+]{cho2020data}
Cho, Haeran, \& Kirch, Claudia. 2022+.
\newblock Data segmentation algorithms: Univariate mean change and beyond.
\newblock {\em Econometrics and Statistics}.
\newblock To appear.

\bibitem[\protect\citename{Chu {\em et~al.}, }1995]{chuhornikkuan}
Chu, Chia-Shang~J., Hornik, Kurt, \& Kaun, Chung-Ming. 1995.
\newblock MOSUM tests for parameter constancy.
\newblock {\em Biometrika}, {\bf 82}(3), 603--617.

\bibitem[\protect\citename{Cs{\"o}rg{\"o} \& Horv{\'a}th,
  }1997]{CsorgoHorvath97}
Cs{\"o}rg{\"o}, Mikl{\'o}s, \& Horv{\'a}th, Lajos. 1997.
\newblock {\em Limit theorems in change-point analysis}.
\newblock John Wiley \& Sons Inc.

\bibitem[\protect\citename{Eichinger \& Kirch, }2018]{KirchMuhsal}
Eichinger, Birte, \& Kirch, Claudia. 2018.
\newblock A MOSUM procedure for the estimation of multiple random change
  points.
\newblock {\em Bernoulli}, {\bf 24}(1), 526--564.

\bibitem[\protect\citename{Fearnhead, }2006]{fearnhead2006exact}
Fearnhead, Paul. 2006.
\newblock Exact and efficient Bayesian inference for multiple changepoint
  problems.
\newblock {\em Statistics and computing}, {\bf 16}(2), 203--213.

\bibitem[\protect\citename{Fearnhead \& Clifford, }2003]{fearnhead2003line}
Fearnhead, Paul, \& Clifford, Peter. 2003.
\newblock On-line inference for hidden Markov models via particle filters.
\newblock {\em Journal of the Royal Statistical Society: Series B (Statistical
  Methodology)}, {\bf 65}(4), 887--899.

\bibitem[\protect\citename{Fearnhead \& Rigaill,
  }2019]{fearnhead2019changepoint}
Fearnhead, Paul, \& Rigaill, Guillem. 2019.
\newblock Changepoint detection in the presence of outliers.
\newblock {\em Journal of the American Statistical Association}, {\bf
  114}(525), 169--183.

\bibitem[\protect\citename{Fearnhead \& Rigaill, }2020]{fearnhead2020relating}
Fearnhead, Paul, \& Rigaill, Guillem. 2020.
\newblock Relating and Comparing Methods for Detecting Changes in Mean.
\newblock {\em Stat},  e291.

\bibitem[\protect\citename{He {\em et~al.}, }1996]{he1996general}
He, Xuming, Shao, Qi-Man, {\em et~al.} 1996.
\newblock A general Bahadur representation of M-estimators and its application
  to linear regression with nonstochastic designs.
\newblock {\em The Annals of Statistics}, {\bf 24}(6), 2608--2630.

\bibitem[\protect\citename{Horv{\'a}th \& Rice, }2014]{horvath2014extensions}
Horv{\'a}th, L., \& Rice, G. 2014.
\newblock Extensions of some classical methods in change point analysis.
\newblock {\em Test}, {\bf 23}(2), 219--255.

\bibitem[\protect\citename{Hu{\v{s}}kov{\'a}, }1990]{huvskova1990}
Hu{\v{s}}kov{\'a}, Marie. 1990.
\newblock Asymptotics for robust MOSUM.
\newblock {\em Commentationes Mathematicae Universitatis Carolinae}, {\bf
  31}(2), 345--356.

\bibitem[\protect\citename{Hu{\v{s}}kov{\'a} \& Slab{\`y},
  }2001]{HuskovaSlaby01}
Hu{\v{s}}kov{\'a}, Marie, \& Slab{\`y}, Ale{\v{s}}. 2001.
\newblock Permutation tests for multiple changes.
\newblock {\em Kybernetika}, {\bf 37}(5), 605--622.

\bibitem[\protect\citename{Killick {\em et~al.}, }2010]{killick2010detection}
Killick, Rebecca, Eckley, Idris~A., Ewans, Kevin, \& Jonathan, Philip. 2010.
\newblock Detection of changes in variance of oceanographic time-series using
  changepoint analysis.
\newblock {\em Ocean Engineering}, {\bf 37}, 1120--1126.

\bibitem[\protect\citename{Killick {\em et~al.}, }2012]{killick2012optimal}
Killick, Rebecca, Fearnhead, Paul, \& Eckley, Idris~A. 2012.
\newblock Optimal detection of changepoints with a linear computational cost.
\newblock {\em Journal of the American Statistical Association}, {\bf
  107}(500), 1590--1598.

\bibitem[\protect\citename{Kirch \& Kamgaing, }2012]{kirch2012testing}
Kirch, Claudia, \& Kamgaing, Joseph~Tadjuidje. 2012.
\newblock Testing for parameter stability in nonlinear autoregressive models.
\newblock {\em Journal of Time Series Analysis}, {\bf 33}(3), 365--385.

\bibitem[\protect\citename{Kirch \& Kamgaing, }2015]{kirch2015use}
Kirch, Claudia, \& Kamgaing, Joseph~Tadjuidje. 2015.
\newblock On the use of estimating functions in monitoring time series for
  change points.
\newblock {\em Journal of Statistical Planning and Inference}, {\bf 161},
  25--49.

\bibitem[\protect\citename{Kirch \& Klein, }2021+]{KirchKlein}
Kirch, Claudia, \& Klein, Philipp. 2021+.
\newblock Moving sum data segmentation for stochastics processes based on
  invariance.
\newblock {\em Statistica Sinica}.
\newblock To appear.

\bibitem[\protect\citename{Kirch \& Tadjuidje~Kamgaing, }2016a]{KirchTadjHandb}
Kirch, Claudia, \& Tadjuidje~Kamgaing, Josef. 2016a.
\newblock Detection of change points in discrete valued time series.
\newblock {\em Pages  219--244 of:} Davis, Richard~A., Holan, Scott~H., Lund,
  Robert, \& Ravishanker, Nalini (eds), {\em Handbook of discrete valued time
  series}.
\newblock CRC Press.

\bibitem[\protect\citename{Kirch \& Tadjuidje~Kamgaing,
  }2016b]{kamgaing2016detection}
Kirch, Claudia, \& Tadjuidje~Kamgaing, Joseph. 2016b.
\newblock Detection of Change Points in Discrete-Valued Time Series.
\newblock {\em Pages  239--264 of:} {\em Handbook of Discrete-Valued Time
  Series}.
\newblock Chapman and Hall/CRC.

\bibitem[\protect\citename{Kirch \& Weber, }2018]{kirch2018modified}
Kirch, Claudia, \& Weber, Silke. 2018.
\newblock Modified sequential change point procedures based on estimating
  functions.
\newblock {\em Electronic Journal of Statistics}, {\bf 12}(1), 1579--1613.

\bibitem[\protect\citename{Knoblauch {\em et~al.}, }2018]{knoblauch2018doubly}
Knoblauch, Jeremias, Jewson, Jack~E, \& Damoulas, Theodoros. 2018.
\newblock Doubly Robust Bayesian Inference for Non-Stationary Streaming Data
  with $\beta $-Divergences.
\newblock {\em Advances in Neural Information Processing Systems}, {\bf 31}.
\newblock Software, simulation code and data available at
  \url{https://github.com/alan-turing-institute/rbocpdms/}.

\bibitem[\protect\citename{Kuelbs \& Philipp, }1980]{kuelbsphilipp80}
Kuelbs, James, \& Philipp, Walter. 1980.
\newblock Almost sure invariance principles for partial sums of mixing B-valued
  random variables.
\newblock {\em The Annals of Probability}, {\bf 8}(6), 1003--1036.

\bibitem[\protect\citename{Li \& Yu, }2021]{li2021adversarially}
Li, Mengchu, \& Yu, Yi. 2021.
\newblock Adversarially robust change point detection.
\newblock {\em Advances in Neural Information Processing Systems}, {\bf 34},
  22955--22967.

\bibitem[\protect\citename{Liboschik {\em et~al.}, }2017]{tscountarticle}
Liboschik, Tobias, Fokianos, Konstantinos, \& Fried, Roland. 2017.
\newblock {tscount}: An {R} Package for Analysis of Count Time Series Following
  Generalized Linear Models.
\newblock {\em Journal of Statistical Software}, {\bf 82}(5), 1--51.

\bibitem[\protect\citename{Meier {\em et~al.}, }2021]{ChoKirchMeier}
Meier, A., Cho, H., \& Kirch, C. 2021.
\newblock mosum: A package for moving sums in change point analysis.
\newblock {\em Journal of Statistical Software}, {\bf 97}, 1--42.

\bibitem[\protect\citename{Messer, }2019]{messer2019bivariate}
Messer, Michael. 2019.
\newblock Bivariate change point detection: joint detection of changes in
  expectation and variance.
\newblock {\em arXiv preprint arXiv:1904.01320}.

\bibitem[\protect\citename{Messer {\em et~al.}, }2014]{messer2014multiple}
Messer, Michael, Kirchner, Marietta, Schiemann, Julia, Roeper, Jochen,
  Neininger, Ralph, \& Schneider, Gaby. 2014.
\newblock A multiple filter test for the detection of rate changes in renewal
  processes with varying variance.
\newblock {\em The Annals of Applied Statistics}, {\bf 8}(4), 2027--2067.

\bibitem[\protect\citename{Rao, }1962]{Rao}
Rao, R.~Ranga. 1962.
\newblock Relations between weak and uniform convergence of measures with
  applications.
\newblock {\em The Annals of Mathematical Statistics}, {\bf 33}(2), 659--680.

\bibitem[\protect\citename{Reckr\"uhm, }2019]{Reckruehm}
Reckr\"uhm, Kerstin. 2019.
\newblock {\em Estimating Multiple Structural Breaks in Time Series: A
  Generalized MOSUM Approach Based on Estimating Functions}.
\newblock Ph.D. thesis, Otto-von-Guericke-Universit\"at Magdeburg.

\bibitem[\protect\citename{Ruanaidh \& Fitzgerald,
  }1996]{ruanaidh1996numerical}
Ruanaidh, Joseph JK~O, \& Fitzgerald, William~J. 1996.
\newblock {\em Numerical Bayesian methods applied to signal processing}.
\newblock Springer Science \& Business Media.

\bibitem[\protect\citename{Ruggieri \& Antonellis, }2016]{ruggieri2016exact}
Ruggieri, Eric, \& Antonellis, Marcus. 2016.
\newblock An exact approach to Bayesian sequential change point detection.
\newblock {\em Computational Statistics \& Data Analysis}, {\bf 97}, 71--86.

\bibitem[\protect\citename{Steinebach \& Eastwood, }1996]{SteinebachEastwood}
Steinebach, Josef, \& Eastwood, Vera~R. 1996.
\newblock Extreme value asymptotics for multivariate renewal processes.
\newblock {\em Journal of multivariate analysis}, {\bf 56}(2), 284--302.

\bibitem[\protect\citename{Wang {\em et~al.}, }2020]{wang2020univariate}
Wang, Daren, Yu, Yi, \& Rinaldo, Alessandro. 2020.
\newblock Univariate mean change point detection: Penalization, cusum and
  optimality.
\newblock {\em Electronic Journal of Statistics}, {\bf 14}(1), 1917--1961.

\bibitem[\protect\citename{Wei{\ss}, }2010]{weiss2010inarch}
Wei{\ss}, Christian~H. 2010.
\newblock The INARCH (1) model for overdispersed time series of counts.
\newblock {\em Communications in Statistics-Simulation and Computation}, {\bf
  39}(6), 1269--1291.

\bibitem[\protect\citename{Wyse {\em et~al.}, }2011]{wyse2011approximate}
Wyse, Jason, Friel, Nial, \& Rue, H{\aa}vard. 2011.
\newblock Approximate simulation-free Bayesian inference for multiple
  changepoint models with dependence within segments.
\newblock {\em Bayesian Analysis}, {\bf 6}(4), 501--528.

\bibitem[\protect\citename{Yau \& Zhao, }2016]{yau2016inference}
Yau, Chun~Yip, \& Zhao, Zifeng. 2016.
\newblock Inference for multiple change points in time series via likelihood
  ratio scan statistics.
\newblock {\em Journal of the Royal Statistical Society: Series B (Statistical
  Methodology)}, {\bf 78}(4), 895--916.

\bibitem[\protect\citename{Zeileis \& Grothendieck, }2005]{zooarticle}
Zeileis, Achim, \& Grothendieck, Gabor. 2005.
\newblock zoo: S3 Infrastructure for Regular and Irregular Time Series.
\newblock {\em Journal of Statistical Software}, {\bf 14}(6), 1--27.

\end{thebibliography}

\newpage
\appendix
\section*{Appendix}
\section{Proofs of Section \ref{sectionthreshold}}

\begin{proof}[\textbf{Proof of Theorem \ref{limitdistribution}}]
We first prove the assertions for the MOSUM-score statistics ($\ell=2$):
By the invariance principle in Assumption~\ref{as.invariancemulti} (b) as well as Assumption~\ref{as.bandwidth} on the bandwidth we get (with $\boldsymbol{\Sigma}=\boldsymbol{\Sigma}_1(\widetilde{\theta})$) that
\begin{align*}
&\max_{G\leq k \leq n-G}\frac{1}{\sqrt{2G}}\left\lVert\boldsymbol{\Sigma}^{-1/2}\M_{\tbt}(k)\right\rVert\\
&=\max_{G\leq k \leq n-G}\frac{1}{\sqrt{2G}}\left\lVert\boldsymbol{W}(k+G)-2\boldsymbol{W}(k)+\boldsymbol{W}(k-G)\right\rVert+o_P\left( a(n/G)^{-1} \right)\\
&=\sup_{r \in [G,n-G]}\frac{1}{\sqrt{2G}}\left\lVert\notag
\boldsymbol{W}(r+G)-2\boldsymbol{W}(r)+\boldsymbol{W}(r-G)\right\rVert+o_P\left( a(n/G)^{-1} \right)\\
&\overset{D}{=}
\sup_{t \in [1,n/G -1]}\frac{1}{\sqrt{2}}\left\lVert
\boldsymbol{W}(t+1)-2\boldsymbol{W}(t)+\boldsymbol{W}(t-1)\right\rVert + o_P\left( a(n/G)^{-1} \right),
\end{align*}
where we used the self-similarity of the Wiener process in the last step.
By an application of Lemma 3.1 in combination with Remark 3.1 of \cite{SteinebachEastwood}  with $\alpha=1$ and $C_1=\ldots=C_p=3/2$, assertion (a) for the MOSUM-score statistics follows. The assertion for the MOSUM-Wald statistics ($\ell=1$) follows from this and Assumption~\ref{aswaldaltnullseg}.
The assertions in (b) follow successively by an application of the triangular inequality in combination with the consistency of the spectral matrix norm with the Euclidean vector norm (as the corresponding induced norm).
  The assertions in (c) are obtained analogously.
\end{proof}

\section{Proofs of Section~\ref{chapterwald}}

\begin{proof}[\textbf{Proof of Proposition \ref{propositionsprobminmaxwald}}]
	First, by Theorem~\ref{limitdistribution},
\begin{align*}
&P\left(\max_{j=1,\ldots,q+1}\max_{k_{j-1,n}+G\le k\le k_{j,n}-G} T^{(1)}_{k,n}(G)\ge D_n(\alpha_n,G)\right)\\
&\le \sum_{j=1}^{q+1}P\left(a(n/G) \max_{k_{j-1,n}+G\le k\le k_{j,n}-G} T^{(1)}_{k,n}(G)-b(n/G)\geq c_{\alpha_n} \right)\\
&\le\sum_{j=1}^{q+1}(\alpha_n+o(1))\to 0,
\end{align*}
showing (a) (i). Furthermore, because the spectral matrix norm is induced by the Euclidean vector norm, it holds $\|\mathbf{A}\mathbf{x}\|\le \|\mathbf{A}\| \|\mathbf{x}\|$ as well as  $\|\mathbf{x}\|\le \|\mathbf{A}^{-1}\| 
\|\mathbf{A}\mathbf{x}\|$. Then, by  Assumption~\ref{Ass_31_new} (a) applied to the second term (which is dominated by the maximum in the assumption as can be seen by an index shift),
 \begin{align*}
	&\min_{k_{j,n}-(1-\varepsilon)\,G\le k\le k_{j,n}}
	T^{(1)}_{k,n}(G)		\\
	&\ge \min_{k_{j,n}-(1-\varepsilon)\,G\le k\le k_{j,n}} \sqrt{\frac G2}\left\|\boldsymbol{\Gamma}_k^{-1/2}\left(\boldsymbol{\widehat{\theta}}_{k+1,k+G}-\boldsymbol{\theta}_j\right)\right\|\\*
	&\qquad - \max_{k_{j,n}-(1-\varepsilon)\,G\le k\le k_{j,n}}\sqrt{\frac G2}\left\|\boldsymbol{\Gamma}_k^{-1/2}\left(\boldsymbol{\widehat{\theta}}_{k-G+1,k}-\boldsymbol{\theta}_j\right)\right\|\\
	&\ge \left\|\boldsymbol{\Gamma}_{(j)}^{1/2}\right\|^{-1}\,\min_{k_{j,n}-(1-\varepsilon)\,G\le k\le k_{j,n}} \sqrt{\frac G2}\left\|\boldsymbol{\widehat{\theta}}_{k+1,k+G}-\boldsymbol{\theta}_j\right\|\\*
	&\qquad - \left\|\boldsymbol{\Gamma}_{(j)}^{-1/2}\right\|\,\max_{k_{j,n}-(1-\varepsilon)\,G\le k\le k_{j,n}}\sqrt{\frac G2}\left\|\boldsymbol{\widehat{\theta}}_{k-G+1,k}-\boldsymbol{\theta}_j\right\|\\
	&=\left\|\boldsymbol{\Gamma}_{(j)}^{1/2}\right\|^{-1}\,\min_{k_{j,n}-(1-\varepsilon)\,G\le k\le k_{j,n}} \sqrt{\frac G2}\left\|\boldsymbol{\widehat{\theta}}_{k+1,k+G}-\boldsymbol{\theta}_j\right\|+O_P\left( \sqrt{\log(n/G)} \right).
\end{align*}
We get by Assumption~\ref{as.siglevel} and \ref{Ass_31_new} (b) 
\begin{align*}
	&P\left(\min_{j=1,\ldots,q}\min_{k_{j,n}-(1-\varepsilon)\,G\le k\le k_{j,n}}
	T^{(1)}_{k,n}(G)		< D_n(\alpha_n,G)\right) \\
	&
	\le \sum_{j=1}^{q} P\left(\min_{k_{j,n}-(1-\varepsilon)\,G\le k\le k_{j,n}} \sqrt{\frac G2}\left\|\boldsymbol{\widehat{\theta}}_{k+1,k+G}-\boldsymbol{\theta}_j\right\|< O_P\left(\sqrt{\log(n/G)}\right)  \right)\\*
	&\to 0.
\end{align*}
Similar arguments deal with the minimum over $k_{j,n}<k\le k_{j,n}+(1-\varepsilon)\,G$, such that
assertion (a)(ii) follows.

The proof of (b) follows along the same lines taking Assumption~\ref{asaltestcovmatwald} into account.
\end{proof}
\begin{proof}[\textbf{Proof of Remark \ref{rem_cov_wald}}]
The proof is analogous to the above proofs where we use that by Theorem~\ref{limitdistribution} \begin{align*}
	\max_{j=1,\ldots,q+1}\max_{k_{j-1,n}+G\le k\le k_{j,n}-G}\left\| T^{(1)}_{k,n}(G)\right\|=O_P\left( \sqrt{\log(n/G)} \right).
\end{align*}
\end{proof}

\begin{proof}[\textbf{Proof of Theorem \ref{theorem.consistencyqwald}}]
	The assertion follows immediately from Proposition \ref{propositionsprobminmaxwald} on noting that
	\begin{align*}
		&\left\{\max_{j=1,\ldots,q+1}\max_{k_{j-1,n}+G\le k\le k_{j,n}-G} T^{(1)}_{k,n}(G)< D_n(\alpha_n,G)\right\}\\
		&\quad \cap 
		\left\{
\min_{k_{j,n}-(1-\varepsilon)\,G\le k\le k_{j,n}+(1-\varepsilon)\,G}	T^{(1)}_{k,n}(G)\ge D_n(\alpha_n,G)
		\right\}\\
		&
		\subset\left\{\widehat{q}_n^{(1)}=q\right\}\cap 
\left\lbrace
\max_{1\leq j\leq q}	\left|\widehat{k}^{(1)}_{j,n}-k_{j,n}\right|< G
\right\rbrace.
	\end{align*}
\end{proof}

\section{Proofs of Section~\ref{chapterscore}}

\begin{proof}[\textbf{Proof of Lemma \ref{lemmaproblemdetect}}]
We will show  by contradiction that $\E\bs{H}(\mathbb{X}_1^{(j)},\tbt_{0,1})\neq\E\bs{H}(\mathbb{X}_1^{(j+1)},\tbt_{0,1})$ holds for at least one $j\in\{1,\ldots,q\}$. Assume that all these expectations are equal, then by definition of $\tbt_{0,1}$ we get for all $j=1,\ldots,q+1$
\begin{align*}
 \boldsymbol{0}=\sum_{l=1}^{q+1}\left(\lambda_l-\lambda_{l-1}\right) \E\boldsymbol{H}(\mathbb{X}_1^{(l)},\tbt_{0,1})=\E\boldsymbol{H}(\mathbb{X}_1^{(j)},\tbt_{0,1}),
\end{align*}
which by the identifiability of $\bt_j$ implies $\bt_j=\tbt_{0,1}$, for all $j=1,\ldots,q+1$,
contradicting the assumption. 
If there are only two possible regimes, then clearly if one change is detectable all of them are.
\end{proof}

\begin{proof}[\textbf{Proof of Proposition \ref{propositionsprobminmax}}]
Analogously to the proof of Proposition~\ref{propositionsprobminmaxwald} (a) (i) we get
\begin{align*}
	&P\left(\max_{j=1,\ldots,q+1}\max_{k_{j-1,n}+G\le k\le k_{j,n}-G} T^{(2)}_{k,n}(G,\tbt)\ge D_n(\alpha_n,G)\right)\to 0.
\end{align*}
The statement remains true when a sequence of inspection parameters  $\tbt_n$ is used because by Assumption~\ref{as.replacemulti} (i) it holds for any $j=1,\ldots,q$ that
\begin{align*}
	&a(n/G)\max_{k_{j-1,n}+G\le k\le k_{j,n}-G} T^{(2)}_{k,n}(G,\tbt_n)\\
	&=a(n/G)\max_{k_{j-1,n}+G\le k\le k_{j,n}-G} T^{(2)}_{k,n}(G,\tbt)+o_P(1).
\end{align*}
Additionally, we need the statement for environments of non-detectable change points $k_{j,n}$ with $j\not\in \tilde{Q}$. Indeed, it holds for $|k-k_{j,n}|\le G$ with $j\not\in \tilde Q$ that
$\E \boldsymbol{M}_{\boldsymbol{\widetilde{\theta}}}(k)=0$ by definition of $\tilde Q$ as in \eqref{as.alt.dectest2}.
Consequently, by Assumptions~\ref{as.invariancemulti} and \ref{as.bandwidth}  it holds for $j\not\in\tilde Q$
\begin{align}\label{eq_cons_score_2}
	&\max_{k_{j,n}\le k< k_{j,n}+G}  T^{(2)}_{k,n}(G,\tbt)\notag\\
	&=o_P(1)+O_P\left(\max_{k_{j,n}\le k< k_{j,n}+G}\frac{1}{\sqrt{G}} \left\|\boldsymbol{W}(k+G)-2\boldsymbol{W}(k)+\boldsymbol{W}(k_{j,n})\right\| \right)\notag\\
	&\qquad+O_P\left(\max_{k_{j,n}\le k< k_{j,n}+G}\frac{1}{\sqrt{G}}\left\|\boldsymbol{\Sigma}^{-1/2}_{(j+1)}\boldsymbol{\Sigma}^{1/2}_{(j)}\left(\boldsymbol{W}(k_{j,n})-\boldsymbol{W}(k-G))\right)\right\|  \right)\notag\\
	&=O_P(1)=o_P(D_n(\alpha_n,G)),
\end{align}
where the last line follows by the self-similarity of  Wiener processes, the stationarity of its increments and the continuous sample paths. An analogous assertion holds for $k_{j,n}-G\le k< k_{j,n}$ showing that
\begin{align*}
	&P\left(\max_{j\not\in \tilde Q} \max_{|k-k_{j,n}|<G} T^{(2)}_{k,n}(G,\tbt)\ge D_n(\alpha_n,G)\right)\to 0,
\end{align*}
completing the proof of (a) (i) for a fixed inspection parameter $\tbt$. Here, the statement remains true for a sequence of inspection parameters, because by Assumption~\ref{as.replacemulti} (ii) it holds for  $j\not\in \tilde{Q}$
\begin{align}\label{eq_cons_score}
	&\max_{|k-k_{j,n}|<G} T^{(2)}_{k,n}(G,\tbt_n)
	=\max_{|k-k_{j,n}|<G} T^{(2)}_{k,n}(G,\tbt)+o_P(\sqrt{\log(n/G)})\\
	&=o_P(D_n(\alpha_n,G))\notag.\end{align}

	The assertion with estimated long-run covariances as in (b) can be obtained along the same lines by using the consistency of the spectral matrix norm with the Euclidean vector norm and Assumptions~\ref{as.alt.varianceestimator} (b). 

Concerning (ii) first observe that for $\tilde k_{j,n}<k\le \tilde k_{j,n}+(1-\varepsilon)G$, $j=1,\ldots,\tilde q(\tbt)$ it holds
\begin{align*}
	&\E\M_{\tbt}(k)\\
&=\sum_{i=k+1}^{k+G}\E\boldsymbol{H}(\mathbb{X}_{i}^{(j+1)},\tbt)- \sum_{i=k-G+1}^{\tilde k_{j,n}}\E\boldsymbol{H}(\mathbb{X}_{i}^{(j)},\tbt)-\sum_{i=\tilde k_{j,n}+1}^{k}\E\boldsymbol{H}(\mathbb{X}_{i}^{(j+1)},\tbt)\\
&=\left(G-|k-\tilde k_{j,n}|\right)\,\boldsymbol{d}_j,
	\end{align*}
	where we denote the signal by
	\begin{align}
		&\boldsymbol{d}_j=\E\boldsymbol{H}(\mathbb{X}_{1}^{(j+1)},\tbt)-\E\boldsymbol{H}(\mathbb{X}_{1}^{(j)},\tbt),\quad j=1,\ldots\tilde q(\tbt).
		\label{eq_signal_score}
	\end{align}
	For $\tilde k_{j,n}-(1-\varepsilon)G\le k \le \tilde k_{j,n}$ we arrive at the same conclusion. Consequently, it holds  for all $j\in\tilde Q$ and $|k-\tilde k_{j,n}|\le (1-\varepsilon) G$ by the consistency of the spectral matrix with the Euclidean vector norm 
	\begin{align*}
		&&\left\lVert\boldsymbol{\Sigma}_k^{-1/2}\E\M_{\tbt}(k)\right\rVert\ge \left\|\boldsymbol{\Sigma}_k^{1/2}\right\|^{-1}\,\left(G-|k-\tilde k_{j,n}|\right)\,\left\lVert\boldsymbol{d}_j\right\rVert\geq c \,G,	\end{align*}
for some $c>0$ (depending on $\varepsilon$, the difference in expectation and the long-run covariances, noting that $\boldsymbol{\Sigma}_k$ is constant on each segment).

By analogous arguments as in \eqref{eq_cons_score_2} (but involving the necessary centering due to  $\tilde Q$) and \eqref{eq_cons_score}  it holds 
\begin{align*}
	&\min_{|k-\tilde k_{j,n}|\le (1-\varepsilon)\,G} T^{(2)}_{k,n}(G,\tbt_n)	\\
	&= \min_{|k-\tilde k_{j,n}|\le (1-\varepsilon)\, G}\frac{1}{\sqrt{2G}}\,\left\lVert\boldsymbol{\Sigma}_k^{-1/2}\E\M_{\tbt}(k)\right\rVert+ o_P(\sqrt{\log(n/G)})\\
	&\ge c\, \sqrt{\frac G 2}+o_P\left( \sqrt{\log(n/G)} \right).
\end{align*}
The proof can now be concluded as in the proof of  Proposition~\ref{propositionsprobminmaxwald} (a) and (b) (ii).
\end{proof}

\begin{proof}[\textbf{Proofs of Theorem \ref{theorem.consistencyq} and Remark~\ref{rem_cov_score}}]
	The proofs are completely analogous to the proofs of Theorem~\ref{theorem.consistencyqwald} respectively Remark~\ref{rem_cov_wald} and therefore omitted.
\end{proof}

The following lemma helps simplify several arguments.
\begin{lemma}\label{lemma_bound_conv}
	For a sequence of real random variables $\{X_n\}$ it holds for $n\to\infty$
	\begin{align*}
		& X_n=O_P(1) \quad \iff\quad P\left( |X_n|>\xi_n \right)\to 0\quad \text{for any }\xi_n\to \infty.
	\end{align*}
\end{lemma}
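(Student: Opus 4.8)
The plan is to unwind the definition of $X_n=O_P(1)$ — namely that $\lim_{M\to\infty}\limsup_{n\to\infty}P(|X_n|>M)=0$, equivalently that for every $\varepsilon>0$ there is an $M=M(\varepsilon)$ with $\limsup_{n\to\infty}P(|X_n|>M)<\varepsilon$ — and then prove the two implications separately. The forward direction is a direct estimate; the converse is where the work lies and I would handle it by contraposition together with a diagonal construction.

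For the forward direction ($\Rightarrow$), I would fix an arbitrary sequence $\xi_n\to\infty$ and an arbitrary $\varepsilon>0$. Boundedness in probability supplies a threshold $M=M(\varepsilon)$ with $\limsup_{n}P(|X_n|>M)<\varepsilon$. Since $\xi_n\to\infty$, there is an $N$ with $\xi_n>M$ for all $n\ge N$, so monotonicity of $t\mapsto P(|X_n|>t)$ gives $P(|X_n|>\xi_n)\le P(|X_n|>M)$ for $n\ge N$. Taking $\limsup$ in $n$ yields $\limsup_{n}P(|X_n|>\xi_n)\le\varepsilon$, and letting $\varepsilon\downarrow 0$ forces $P(|X_n|>\xi_n)\to 0$. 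This direction is routine.

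The substantive direction is the converse ($\Leftarrow$), which I would prove by contraposition: assuming $X_n$ is not $O_P(1)$, I construct a single sequence $\xi_n\to\infty$ along which $P(|X_n|>\xi_n)\not\to 0$. Failure of boundedness yields an $\varepsilon_0>0$ such that $\limsup_{n}P(|X_n|>m)\ge\varepsilon_0$ for every integer $m$. Exploiting this I would select indices $n_1<n_2<\cdots$ with $n_m\to\infty$ and $P(|X_{n_m}|>m)>\varepsilon_0/2$, and then define the step sequence $\xi_n=m$ for $n_m\le n<n_{m+1}$ (and, say, $\xi_n=0$ for $n<n_1$). By construction $\xi_n\to\infty$, yet along the subsequence $(n_m)$ one has $P(|X_{n_m}|>\xi_{n_m})=P(|X_{n_m}|>m)>\varepsilon_0/2$, so $P(|X_n|>\xi_n)\not\to 0$, contradicting the right-hand side.

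The one step I would flag as the potential pitfall is precisely this diagonal construction in the converse: the indices $n_m$ must be chosen strictly increasing (so that the step function $\xi_n$ genuinely diverges) while simultaneously retaining the lower bound $\varepsilon_0/2$ on the exceedance probability at those indices. This is exactly where the $\limsup$ — rather than an ordinary limit — is essential: because $\limsup_{n}P(|X_n|>m)\ge\varepsilon_0$ holds for each fixed $m$, there are infinitely many $n$ with $P(|X_n|>m)>\varepsilon_0/2$, which always lets me push $n_m$ strictly past $n_{m-1}$ without losing the bound. Once this selection is in place the contradiction is immediate, completing the equivalence.
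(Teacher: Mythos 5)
Your proof is correct and follows essentially the same route as the paper: the forward implication is handled by the routine monotonicity estimate, and the converse is proved by contradiction via a diagonal construction in which strictly increasing indices $n_m$ with exceedance probability bounded below are selected and $\xi_n$ is defined as the step sequence equal to $m$ on the block $[n_m, n_{m+1})$, exactly as in the paper's recursive choice of $N_l$ and $\xi_{N_{l-1}+1}=\ldots=\xi_{N_l}=l$. The point you flag — that the $\limsup$ lower bound at every fixed level lets you push the selected indices strictly forward without losing the bound — is precisely the content of the paper's function $n_1(n_0,C)$, so no gap remains.
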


\begin{proof}
The proof of the only-if-part is straightforward. We prove the if-part by contradiction. If $X_n$ is not stochastically bounded, then there exists $\eta>0$ such that for any bound $C>0$ and any $n_0\ge 0$, there exists $n_1(n_0,C)>n_0$ such that 
	$P\left( |X_{n_1}|> C\right)>\eta.$
Setting $N_0=0$ and recursively $N_l=n_1(N_{l-1},l)$ as well as $\xi_{N_{l-1}+1}=\ldots=\xi_{N_l}=l$, we get $N_n\to\infty$, $\xi_n\to\infty$ as well as by construction
\begin{align*}
	P\left( |X_{N_l}|>\xi_{N_l} \right)>\eta,
\end{align*}
which is a contradiction.
\end{proof}

The proof technique of the below proof is well known in change point analysis, for example it has been used in the context of MOSUM statistics for the mean change problem by \cite{KirchMuhsal} (Proof of Theorem 3.2).

\begin{proof}[\textbf{Proof of Theorem \ref{theoremimprovedconvrates}}]
	By finiteness of $q$ and Lemma \ref{lemma_bound_conv} it is sufficient to prove that for any sequence $\xi_n\to\infty$ (arbitrarily slow) it holds
\begin{align*}
	&P\left( \widehat{k}^{(2)}_{j,n}(\tbt_n; \widehat{\boldsymbol{\Psi}}_{j,n})<\tilde k_{j,n}(\tbt)-\xi_n \right)\to 0,\\ \quad 
&	P\left( \widehat{k}^{(2)}_{j,n}(\tbt_n; \widehat{\boldsymbol{\Psi}}_{j,n})>\tilde k_{j,n}(\tbt)+\xi_n \right)\to 0.
\end{align*}
We will prove the first assertion in detail, the second one follows analogously. For simplicity of notation denote $\tilde k_{j,n}=\tilde{k}_{j,n}(\tbt)$ throughout this proof. 

On the asymptotic 1-set of Theorem~\ref{theorem.consistencyq} and where $\min_{j=1,\ldots,q+1}|k_{j,n}-k_{j-1,n}|>2G$ (which holds for $n$ large enough by Assumption~\ref{as.bandwidth} (b)), it holds for any $j=1,\ldots,\tilde{q}(\tbt)$ with $\boldsymbol{d}_j$ as in \eqref{eq_signal_score}
\begin{align*}
	&\widehat k_{j,n}^{(2)}(\tbt_n; \widehat{\boldsymbol{\Psi}}_{j,n})
	=\argmax_{v_{j,n}\le k\le w_{j,n}}V^{(j)}_{k,n}(G,\tbt_n),\quad \text{where }\\
	&V^{(j)}_{k,n}(G,\tbt_n)=\left\lVert\boldsymbol{\widehat{\Psi}}_{j,n}^{-1/2}\M_{\tbt_n}(k)\right\rVert^2-\left\lVert\boldsymbol{\widehat{\Psi}}_{j,n}^{-1/2}\M_{\tbt_n}(k_{j,n})\right\rVert^2\\
	&=-\left(\M_{\tbt_n}(\tilde k_{j,n})-\M_{\tbt_n}(k)  \right)\,\boldsymbol{\widehat{\Psi}}_{j,n}^{-1}\,\left( \M_{\tbt_n}(\tilde k_{j,n})+\M_{\tbt_n}(k) \right)\\
&	=:-\left( \boldsymbol{E}_1(k,G,\tbt_n) +\boldsymbol{d}_j (\tilde k_{j,n}-k)\right)\,\boldsymbol{\widehat{\Psi}}_{j,n}^{-1}\,\left( \boldsymbol{E}_2(k,G,\tbt_n) +\boldsymbol{d}_j (2G+k-\tilde k_{j,n})\right).
\end{align*}
Denote  $\Delta\boldsymbol{H}(\mathbb{X}_{i}^{(j)},\tbt_n)=\boldsymbol{H}(\mathbb{X}_{i}^{(j)},\tbt_n)-\boldsymbol{H}(\mathbb{X}_{i}^{(j)},\tbt)$ and $\boldsymbol{H}_0(\mathbb{X}_{i}^{(j)},\tbt)=\boldsymbol{H}(\mathbb{X}_{i}^{(j)},\tbt)-\E \boldsymbol{H}(\mathbb{X}_{i}^{(j)},\tbt)$. Then, it
holds for $k<\tilde{k}_{j,n}$ \begin{align*}
	& \boldsymbol{E}_1(k,G,\tbt_n)=\M_{\tbt_n}(\tilde k_{j,n})-\M_{\tbt_n}(k)-\boldsymbol{d}_j (\tilde k_{j,n}-k)\\
	&=\sum_{i=k+G+1}^{\tilde k_{j,n}+G}\Delta\boldsymbol{H}(\mathbb{X}_{i}^{(j+1)},\tbt_n)+\sum_{i=k-G+1}^{\tilde k_{j,n}-G}\Delta\boldsymbol{H}(\mathbb{X}_{i}^{(j)},\tbt_n)-2\sum_{i=k+1}^{\tilde k_{j,n}}\Delta\boldsymbol{H}(\mathbb{X}_{i}^{(j)},\tbt_n)
	\\*
	&\qquad+\sum_{i=k+G+1}^{\tilde k_{j,n}+G}\boldsymbol{H}_0(\mathbb{X}_{i}^{(j+1)},\tbt)+\sum_{i=k-G+1}^{\tilde k_{j,n}-G}\boldsymbol{H}_0(\mathbb{X}_{i}^{(j)},\tbt)-2\sum_{i=k+1}^{\tilde k_{j,n}}\boldsymbol{H}_0(\mathbb{X}_{i}^{(j)},\tbt),
\end{align*}
where by Assumption \ref{ass_localization_rates} (see also Remark~\ref{rem_localization_rates} for the situation when $k>\tilde k_{j,n}$) and stationarity of the segments it follows
\begin{align}\label{eq_E1}
	&\sup_{\xi_n< \tilde k_{j,n}-k\le G}\frac{\|\boldsymbol{E}_1(k,G,\tbt_n)\|}{\tilde k_{j,n}-k}=o_P(1).
\end{align}
Furthermore,
\begin{align*}
	&\boldsymbol{E}_2(k,G,\tbt_n)\\
	&=-\boldsymbol{E}_1(k,G,\tbt_n)\\
	&\quad +2 \sum_{i=\tilde k_{j,n}+1}^{\tilde k_{j,n}+G}\Delta\boldsymbol{H}(\mathbb{X}_{i}^{(j+1)},\tbt_n)-2\sum_{i=\tilde k_{j,n}-G+1}^{\tilde k_{j,n}}\Delta\boldsymbol{H}(\mathbb{X}_{i}^{(j)},\tbt_n)
\\&\quad +2 \sum_{i=\tilde k_{j,n}+1}^{\tilde k_{j,n}+G}\boldsymbol{H}_0(\mathbb{X}_{i}^{(j+1)},\tbt)-2\sum_{i=\tilde k_{j,n}-G+1}^{\tilde k_{j,n}}\boldsymbol{H}_0(\mathbb{X}_{i}^{(j)},\tbt),\end{align*}
such that by \eqref{eq_E1}, Assumption~\ref{ass_localization_rates} (a) and stationarity of the segments in combination with the law of large number (that follows from Assumption~\ref{as.invariancemulti}, see also Remark~\ref{rem_localization_rates}) it holds
\begin{align*}
	&\sup_{\xi_n<\tilde k_{j,n}-k\le G}\frac{\|\boldsymbol{E}_2(k,G,\tbt_n)\|}{G}=o_P(1).
\end{align*}
By Assumptions~\ref{ass_Psi} and the consistency of the spectral matrix norm with the Euclidean vector norm we conclude 
\begin{align*}
	&\sup_{\xi_n< \tilde k_{j,n}-k\le G}V^{(j)}_{k,n}(G,\tbt_n)\\
	&\le (\boldsymbol{d}_j\boldsymbol{\widehat{\Psi}}_{j,n}^{-1}\boldsymbol{d}_j+o_P(1))\sup_{\xi_n<\tilde k_{j,n}-k\le G}[-(\tilde k_{j,n}-k)(2G+k-\tilde k_{j,n})].
\end{align*}
Finally,
\begin{align*}
\sup_{\xi_n< \tilde k_{j,n}-k\le G}[-(\tilde k_{j,n}-k)(2G+k-\tilde k_{j,n})]\le -\xi_n G<0,
\end{align*}
and
\begin{align*}
	&\boldsymbol{d}_j\boldsymbol{\widehat{\Psi}}_{j,n}^{-1}\boldsymbol{d}_j=\|\boldsymbol{\widehat{\Psi}}_{j,n}^{-1/2}\boldsymbol{d}_j\|^2\ge \|\boldsymbol{\widehat{\Psi}}_{j,n}^{1/2}\|^{-2}
	\|\boldsymbol{d}_j\|^2\end{align*}
	Consequently,
	\begin{align*}
		&P\left(  \widehat{k}^{(2)}_{j,n}(\tbt_n; \widehat{\boldsymbol{\Psi}}_{j,n})<\tilde k_{j,n}-\xi_n \right)\\
		&\le P\left(\sup_{\tilde k_{j,n}-G\le k< \tilde k_{j,n}-\xi_n}V^{(j)}_{k,n}(G,\tbt_n)\ge \sup_{\tilde  k_{j,n}-\xi_n\le k\le \tilde  k_{j,n}+G}V^{(j)}_{k,n}(G,\tbt_n)  \right)+o(1)\\
		&\le P\left( \sup_{\tilde k_{j,n}-G\le k< \tilde k_{j,n}-\xi_n}V^{(j)}_{k,n}(G,\tbt_n)\ge 0 \right)+o(1)\\
		&\le P\left( (\boldsymbol{d}_j\boldsymbol{\widehat{\Psi}}_{j,n}^{-1}\boldsymbol{d}_j+o_P(1))\sup_{\xi_n<\tilde k_{j,n}-k\le G}[-(\tilde k_{j,n}-k)(2G+k-\tilde k_{j,n})]\ge 0 \right)+o(1)\\
		&\le P\left( |o_P(1)|\ge \boldsymbol{d}_j\boldsymbol{\widehat{\Psi}}_{j,n}^{-1}\boldsymbol{d}_j \right)+o(1)\le P\left( |o_P(1)|\,\|\boldsymbol{\widehat{\Psi}}_{j,n}^{1/2}\|^2\ge \|\boldsymbol{d}_j\|^2 \right)+o(1)\\
		&=o(1),
	\end{align*}
	where the last line follows from   Assumption~\ref{ass_Psi}, concluding the proof.
\end{proof}

\section{Proofs of Section~\ref{sectionasunderregcond}}

\begin{proof}[Proof of Theorem~\ref{theorem_ass_ip}]
	The assertion in (a) follows by  Theorem 4 of \cite{kuelbsphilipp80} on noting that the mixing rate of $\boldsymbol{H}(\mathbb{X}_1^{(j)},\boldsymbol{\theta})$ is at least as good as the one of $\{\mathbb{X}_1^{(j)}\}$ by definition. Because the time series in backward time is also mixing with the same rate, Assumption~\ref{ass_localization_rates} (b) follow from the invariance principle in backward time (see also Remark~\ref{rem_localization_rates}).
\end{proof}

The remaining assumptions all correspond to well known results in statistics if a global estimator based on estimating functions is used. However, here, it is  maximized over an increasing number of windows. To this end, we require versions of uniform laws of large numbers taking these moving windows into account as given in the following lemma. 
\begin{lemma}\label{lemma_uniform}Let $\{\mathbb{Y}_t\}$ be $p$-dimensional random vectors  fulfilling Regularity Condition~\ref{regc_ip} (b) with $\tilde{\nu}$ as below (in (a),(b),(d) and arbitrary in (c)),   $\boldsymbol{\Theta}\subset\mathbb{R}^p$ be a compact parameter space and $\boldsymbol{F}=(F_1,\ldots,F_p)^T:(\mathbb{R}^p,\boldsymbol{\Theta})\to\mathbb{R}^p$ measurable.
	\begin{enumerate}[(a)]
		\item If $0<\E \left\|\boldsymbol{F}(\mathbb{Y}_1,\boldsymbol{\theta})\right\|^{2+\tilde{\nu}}<\infty$ for some $\tilde{\nu}>0$ and some $\boldsymbol{\theta}$, then for the same $\boldsymbol{\theta}$
			\begin{align*}
				&\sup_{0\le k\le n-G}\left\|\sum_{i=k+1}^{k+G}\left(\boldsymbol{F}(\mathbb{Y}_i,\boldsymbol{\theta})-\E\left( \boldsymbol{F}(\mathbb{Y}_1,\boldsymbol{\theta})\right) \right)\right\|=O_P\left( \sqrt{G\,\log(n/G)} \right),\\
						\end{align*}
		\item If for some $\tilde{\nu}>0$ it holds  $0<\E \left\|\boldsymbol{F}(\mathbb{Y}_1,\boldsymbol{\theta})\right\|^{2+\tilde{\nu}}<\infty$ for all $\boldsymbol{\theta}$ as well as\\
		$		\E\sup_{\boldsymbol{\theta}\in\boldsymbol{\Theta}}\| \nabla \boldsymbol{F}(\mathbb{Y}_1,\boldsymbol{\theta})\|^{2+\tilde\nu}<\infty,$
			then
			\begin{align*}
			\sup_{\boldsymbol{\theta}\in \boldsymbol{\Theta}}\max_{0\leq k\leq n-G}\frac{1}{G}\left\lVert\sum_{i=k+1}^{k+G}( \boldsymbol{F}(\mathbb{Y}_i,\boldsymbol{\theta})-\E\left(\boldsymbol{F}(\mathbb{Y}_1,\boldsymbol{\theta})\right))\right\rVert=o_P(1).
		\end{align*}
	\item  If $\E\sup_{\boldsymbol{\theta}\in\boldsymbol{\Theta}}\| \boldsymbol{F}(\mathbb{Y}_1,\boldsymbol{\theta})\|<\infty$, then for any sequence $G\to\infty$	it holds
		\begin{align*}
			(i)\quad &\sup_{\boldsymbol{\theta}\in \boldsymbol{\Theta}}\max_{1\leq k\leq G}\frac{1}{k}\left\lVert\sum_{i=G-k+1}^{G} \boldsymbol{F}(\mathbb{Y}_i,\boldsymbol{\theta})\right\rVert=O_P(1),\quad\\
			&\sup_{\boldsymbol{\theta}\in \boldsymbol{\Theta}}\max_{1\leq k\leq G}\frac{1}{k}\left\lVert\sum_{i=1}^{k} \boldsymbol{F}(\mathbb{Y}_i,\boldsymbol{\theta})\right\rVert=O_P(1).\\
			(ii)\quad&\sup_{\boldsymbol{\theta}\in \boldsymbol{\Theta}}\max_{1\leq k\leq G}\frac{1}{G}\left\lVert\sum_{i=1}^{k}( \boldsymbol{F}(\mathbb{Y}_i,\boldsymbol{\theta})-\E\left(\boldsymbol{F}(\mathbb{Y}_1,\boldsymbol{\theta})\right))\right\rVert=o_P(1),\\
&\sup_{\boldsymbol{\theta}\in \boldsymbol{\Theta}}\max_{1\leq k\leq G}\frac{1}{G}\left\lVert\sum_{i=k}^{G}( \boldsymbol{F}(\mathbb{Y}_i,\boldsymbol{\theta})-\E\left(\boldsymbol{F}(\mathbb{Y}_1,\boldsymbol{\theta})\right))\right\rVert=o_P(1).
		\end{align*}
	\item If $\E\sup_{\boldsymbol{\theta}\in\boldsymbol{\Theta}}\| \boldsymbol{F}(\mathbb{Y}_1,\boldsymbol{\theta})\|^{2+\tilde{\nu}}<\infty$, then
		\begin{align*}
			&\sup_{0\le k\le n-G}\sum_{i=k+1}^{k+G}\sup_{\boldsymbol{\theta}\in\boldsymbol{\Theta}}\left\|\boldsymbol{F}(\mathbb{Y}_i,\boldsymbol{\theta})\right\|=O_P\left( G \right).
		\end{align*}
\end{enumerate}
\end{lemma}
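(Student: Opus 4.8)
The plan is to prove the four parts in the order (a) $\Rightarrow$ (d) $\Rightarrow$ (b) and then (c), since the later assertions bootstrap on the moving-window rate established first. For part (a), fix the relevant $\boldsymbol{\theta}$ and write the window sum as a difference of centred partial sums, $\sum_{i=k+1}^{k+G}(\boldsymbol{F}(\mathbb{Y}_i,\boldsymbol{\theta})-\E\boldsymbol{F}(\mathbb{Y}_1,\boldsymbol{\theta}))=\boldsymbol{S}_{k+G}-\boldsymbol{S}_k$ with $\boldsymbol{S}_m=\sum_{i=1}^m(\boldsymbol{F}(\mathbb{Y}_i,\boldsymbol{\theta})-\E\boldsymbol{F}(\mathbb{Y}_1,\boldsymbol{\theta}))$. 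Under the moment condition and Regularity Condition~\ref{regc_ip}(b), Theorem~4 of \cite{kuelbsphilipp80} (as already invoked in Theorem~\ref{theorem_ass_ip}) yields a strong invariance principle $\boldsymbol{S}_m=\boldsymbol{\Sigma}^{1/2}\boldsymbol{W}(m)+O(m^{1/(2+\nu)})$ a.s.\ on a suitable probability space. Hence the window sum equals $\boldsymbol{\Sigma}^{1/2}(\boldsymbol{W}(k+G)-\boldsymbol{W}(k))$ up to a uniform error $O(n^{1/(2+\nu)})$, which by Assumption~\ref{as.bandwidth}(a) is $o(\sqrt{G})$ and thus negligible against $\sqrt{G\log(n/G)}$. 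It then remains to control $\max_{0\le k\le n-G}\|\boldsymbol{W}(k+G)-\boldsymbol{W}(k)\|$, and the classical increment theorem for the Wiener process (of the same type used in the proof of Theorem~\ref{limitdistribution} via \cite{SteinebachEastwood}) supplies the upper bound $O(\sqrt{G\log(n/G)})$ a.s., establishing (a).

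Part (d) is then immediate: applying (a) to the scalar stationary sequence $M_i=\sup_{\boldsymbol{\theta}\in\boldsymbol{\Theta}}\|\boldsymbol{F}(\mathbb{Y}_i,\boldsymbol{\theta})\|$, which has a finite $(2+\tilde\nu)$-moment by assumption, gives $\max_k|\sum_{i=k+1}^{k+G}(M_i-\E M_1)|=O_P(\sqrt{G\log(n/G)})=o_P(G)$, whence $\sup_k\sum_{i=k+1}^{k+G}M_i\le G\,\E M_1+o_P(G)=O_P(G)$. For part (b) I would run a stochastic-equicontinuity argument: cover the compact $\boldsymbol{\Theta}$ by finitely many balls of radius $\delta$ centred at $\boldsymbol{\theta}_\ell$, and on each ball bound the centred average by three terms. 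Its value at $\boldsymbol{\theta}_\ell$ is controlled by (a) together with the bandwidth (since $\sqrt{\log(n/G)/G}\to0$), taken over finitely many centres and all windows; the oscillation term is handled by the mean value theorem and part (d) applied to $\nabla\boldsymbol{F}$, giving $\delta\cdot\frac1G\sum_{i=k+1}^{k+G}\sup_{\boldsymbol{\theta}}\|\nabla\boldsymbol{F}(\mathbb{Y}_i,\boldsymbol{\theta})\|=\delta\,O_P(1)$; and the deterministic term is $\le\delta\,\E\sup_{\boldsymbol{\theta}}\|\nabla\boldsymbol{F}\|$. Letting $\delta\to0$ after $n\to\infty$ yields the uniform $o_P(1)$.

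For part (c) only a first moment is assumed, so the invariance principle is unavailable and I would rely on ergodic-theoretic maximal bounds applied to the dominating sequence $M_i$. The forward statement in (i) follows because, by Birkhoff's theorem for the stationary ergodic sequence $\{M_i\}$, $\frac1k\sum_{i=1}^kM_i$ converges a.s.\ and hence $\sup_{k\ge1}\frac1k\sum_{i=1}^kM_i<\infty$ a.s., giving $O_P(1)$; the backward average $\frac1k\sum_{i=G-k+1}^{G}$ has, for each fixed $G$ and over $1\le k\le G$, the same joint law as the forward average of the time-reversed stationary sequence, which is again ergodic and mixing with the same rate (cf.\ Remark~\ref{rem_localization_rates}), so the same bound applies. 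Finally (ii) follows by splitting $\max_{1\le k\le G}$ at $k_0=\lfloor\varepsilon G\rfloor$: for $k\le k_0$ one bounds $\frac1G\|\sum_{i=1}^k(\boldsymbol{F}-\E\boldsymbol{F})\|\le\frac kG\sup_{\boldsymbol{\theta}}\frac1k\|\sum_{i=1}^k(\boldsymbol{F}-\E\boldsymbol{F})\|\le\varepsilon\,O_P(1)$ using (i), while for $k>k_0$ a uniform-in-$\boldsymbol{\theta}$ variable-length ergodic theorem, obtained by the same equicontinuity argument as in (b) on top of Birkhoff's theorem, gives $\sup_{\boldsymbol{\theta}}\max_{k_0\le k\le G}\frac1k\|\sum_{i=1}^k(\boldsymbol{F}-\E\boldsymbol{F})\|=o_P(1)$; letting $\varepsilon\to0$ completes the proof.

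The main obstacle is part (a): the $n-G+1$ windows overlap heavily, so a naive union bound over $k$ cannot reproduce the sharp $\sqrt{\log(n/G)}$ factor, and the resolution is precisely to transfer the problem to Wiener increments through the strong invariance principle and then invoke the increment theorem, the compatibility of the approximation error with the bandwidth in Assumption~\ref{as.bandwidth}(a) being what makes this work. A secondary technical point is the backward maximal bound in (c)(i), where the lack of reversibility forces the reduction to the time-reversed stationary sequence rather than a direct appeal to the forward ergodic theorem.
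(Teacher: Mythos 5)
Your treatment of (a), (b) and (d) coincides with the paper's own proof: (a) is obtained from the Kuelbs--Philipp strong invariance principle (as in Theorem~\ref{theorem_ass_ip}) followed by control of Wiener-process increments as in the proof of Theorem~\ref{limitdistribution}; (d) by running the same argument on the envelope $\sup_{\boldsymbol{\theta}\in\boldsymbol{\Theta}}\|\boldsymbol{F}(\mathbb{Y}_i,\boldsymbol{\theta})\|$ with centering of order $G$; and (b) by a finite $\delta$-cover of the compact $\boldsymbol{\Theta}$, part (a) at the centres, and a mean-value bound on the oscillation via the gradient envelope --- exactly the paper's decomposition. One small caveat in (a): the \emph{almost-sure} increment theorem for the Wiener process carries an additional $\log\log n$ term which Assumption~\ref{as.bandwidth} does not force to be $O(\log(n/G))$; since only an $O_P$ bound is claimed, the in-probability bound extracted from the Gumbel limit of \cite{SteinebachEastwood} (the route the paper takes and which you also cite) is the one that delivers exactly $O_P(\sqrt{G\log(n/G)})$.

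Part (c) is where you genuinely depart from the paper, and where there is a gap. The paper disposes of (c) in one line by citing Rao's uniform strong law of large numbers for stationary ergodic sequences (ergodic forward and backward by mixing), whereas you build the uniformity yourself from Birkhoff's theorem. Your (i) is correct and nicely self-contained: dominating $\sup_{\boldsymbol{\theta}}\frac1k\|\sum_i\boldsymbol{F}(\mathbb{Y}_i,\boldsymbol{\theta})\|$ by $\frac1k\sum_i M_i$ with $M_i=\sup_{\boldsymbol{\theta}}\|\boldsymbol{F}(\mathbb{Y}_i,\boldsymbol{\theta})\|$ makes the supremum over $\boldsymbol{\theta}$ costless, and the time-reversal reduction for the backward averages is precisely the symmetry the paper invokes. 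The gap is in (ii): for $k\ge\lfloor\varepsilon G\rfloor$ you appeal to ``the same equicontinuity argument as in (b)'', but that argument bounds the oscillation over a $\delta$-ball by $\delta$ times the gradient envelope, hence requires differentiability of $\boldsymbol{F}$ in $\boldsymbol{\theta}$ and a moment for $\sup_{\boldsymbol{\theta}}\|\nabla\boldsymbol{F}(\mathbb{Y}_1,\boldsymbol{\theta})\|$ --- neither of which is available under the hypotheses of (c), where $\boldsymbol{F}$ is merely measurable with integrable envelope. The step can be repaired Wald-style: if $\boldsymbol{\theta}\mapsto\boldsymbol{F}(y,\boldsymbol{\theta})$ is continuous, dominated convergence with envelope $2M_1$ gives $\E\sup_{\|\boldsymbol{\theta}-\boldsymbol{\xi}\|\le\delta}\|\boldsymbol{F}(\mathbb{Y}_1,\boldsymbol{\theta})-\boldsymbol{F}(\mathbb{Y}_1,\boldsymbol{\xi})\|\to0$ as $\delta\to0$, and Birkhoff applied to these oscillation variables replaces the gradient bound. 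Such continuity is not stated in the lemma either, but it is implicitly required by the Rao theorem on which the paper's own proof rests, so the repaired argument needs nothing beyond what the paper needs; as written, however, your back-reference to (b) invokes structure that (c) does not supply.
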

\begin{proof}
	Analogously to Theorem~\ref{theorem_ass_ip} $\{\boldsymbol{F}(\mathbb{Y}_i,\boldsymbol{\theta})\}$ fulfills an invariance principle from which assertion (a) follows by similar arguments as in the proof of Theorem~\ref{limitdistribution} (see \cite{Reckruehm}, Theorem E.2.12 for details).

	The proof technique for (b) is well known (and we only use a basic version thereof). Thus, we only sketch the proof. First note that by the compactness assumption on $\boldsymbol{\Theta}$ for  each $\delta>0$ there exist  $M=M(\delta)\ge 1$ and $\boldsymbol{\xi}_1,\ldots,\boldsymbol{\xi}_M\in\boldsymbol{\Theta}$ such that for any $\boldsymbol{\theta}\in\boldsymbol{\Theta}$ there is an $m=1,\ldots, M$ with $\left\lVert\boldsymbol{\theta}-\boldsymbol{\xi}_m\right\rVert<\delta$. 
	We get for any $\boldsymbol{\xi},\bt$
\begin{align*}
	&\max_{0\le k\le n-G}\frac 1 G \sum_{i=k+1}^{k+G}\left\lVert\boldsymbol{F}(\mathbb{Y}_i,\boldsymbol{\theta})-\E\boldsymbol{F}(\mathbb{Y}_i,\boldsymbol{\theta}) -\left(\boldsymbol{F}(\mathbb{Y}_i,\boldsymbol{\xi})-\E\boldsymbol{F}(\mathbb{Y}_i,\boldsymbol{\xi})\right)\right\rVert\\
	&\le\left(2 \E\sup_{\bt\in\boldsymbol{\Theta}}\|\nabla\boldsymbol{F}(\mathbb{Y}_i,\boldsymbol{\theta})\|+o_P(1)\right) \, \|\boldsymbol{\theta}-\boldsymbol{\xi}\|
	=\|\boldsymbol{\theta}-\boldsymbol{\xi}\|\,O_P(1),
\end{align*}
where the last line follows from a first order Taylor expansion in addition to a moving law of large numbers as in (a) applied to the time series $\{\sup_{\bt\in\boldsymbol{\Theta}}\|\nabla\boldsymbol{F}(\mathbb{Y}_i,\boldsymbol{\theta})\|\}$ (see also (d)). 
For given $\eta_1,\eta_2>0$ we can now choose $\delta=\delta(\eta_1,\eta_2)>0$ such that
\begin{align*}
	&P\Big( \sup_{\|\bt-\boldsymbol{\xi}\|<\delta}\max_{0\le k\le n-G}\frac 1 G \sum_{i=k+1}^{k+G}\left\lVert\boldsymbol{F}(\mathbb{Y}_i,\boldsymbol{\theta})-\E\boldsymbol{F}(\mathbb{Y}_i,\boldsymbol{\theta})\right.\\
	&\phantom{P( \max_{\|\bt-\boldsymbol{\xi}\|<\delta}\max_{0\le k\le n-G}\frac 1 G \sum_{i=k+1}^{k+G}}\quad
	\left.-\left(\boldsymbol{F}(\mathbb{Y}_i,\boldsymbol{\xi})-\E\boldsymbol{F}(\mathbb{Y}_i,\boldsymbol{\xi})\right)\right\rVert\ge \eta_1 \Big)\le \eta_2
\end{align*}
for all $n\ge n_0(\eta_1,\eta_2)$. For the (to $\delta$) corresponding $\boldsymbol{\xi_1},\ldots,\boldsymbol{\xi_M}$ it holds by another
 application of (a) 
 \begin{align*}
		&P\left( \max_{m=1,\ldots,M} \max_{0\leq k\leq n-G}\frac{1}{G}\left\lVert\sum_{i=k+1}^{k+G}( \boldsymbol{F}(\mathbb{Y}_i,\boldsymbol{\xi}_m)-\E\left(\boldsymbol{F}(\mathbb{Y}_1,\boldsymbol{\xi}_m)\right))\right\rVert\ge \eta_1 \right)\le \eta_2.
	\end{align*}
	for all $n\ge n_1(\eta_1,\eta_2)$. Combining these arguments yields (b) on noting that for any $\delta$ and corresponding $\boldsymbol{\xi}_m$, $m=1,\ldots,M$, it holds
	\begin{align*}
		&\sup_{\boldsymbol{\theta}\in \boldsymbol{\Theta}}\max_{0\leq k\leq n-G}\frac{1}{G}\left\lVert\sum_{i=k+1}^{k+G}( \boldsymbol{F}(\mathbb{Y}_i,\boldsymbol{\theta})-\E\left(\boldsymbol{F}(\mathbb{Y}_1,\boldsymbol{\theta})\right))\right\rVert\\
		&\le \max_{m=1,\ldots,M} \max_{0\leq k\leq n-G}\frac{1}{G}\left\lVert\sum_{i=k+1}^{k+G}( \boldsymbol{F}(\mathbb{Y}_i,\boldsymbol{\xi}_m)-\E\left(\boldsymbol{F}(\mathbb{Y}_1,\boldsymbol{\xi}_m)\right))\right\rVert\\
		&\qquad + \sup_{\|\bt-\boldsymbol{\xi}\|<\delta}\max_{0\le k\le n-G}\frac 1 G \sum_{i=k+1}^{k+G}\left\lVert\boldsymbol{F}(\mathbb{Y}_i,\boldsymbol{\theta})-\E\boldsymbol{F}(\mathbb{Y}_i,\boldsymbol{\theta})\right.\\
	&\phantom{\qquad + \sup_{\|\bt-\boldsymbol{\xi}\|<\delta}\max_{0\le k\le n-G}\frac 1 G \sum_{i=k+1}^{k+G}\quad
	}
	\left.-\left(\boldsymbol{F}(\mathbb{Y}_i,\boldsymbol{\xi})-\E\boldsymbol{F}(\mathbb{Y}_i,\boldsymbol{\xi})\right)\right\rVert.
	\end{align*}

By \cite{Rao}, Theorem 6.5, a uniform (in $\boldsymbol{\theta}$) strong law of large numbers holds for $\{\boldsymbol{F}(\mathbb{Y}_i,\boldsymbol{\theta})\}$ because stationarity and mixing implies ergodicity (both forward and backward). By the almost sure convergence standard arguments give the assertions in (c).

The proof of (d) follows along the same lines as the proof of (a) but applied to the function $\sup_{\boldsymbol{\theta}\in\boldsymbol{\Theta}}\left\|\boldsymbol{F}(\mathbb{Y}_i,\boldsymbol{\theta})\right\|$. The necessary centering is of the order $G$.
\end{proof}

\begin{proof}[Proof of Theorem~\ref{theorem_ass_wald}]
	For $k_{j-1,n}<k\le k_{j,n}-G $ a Taylor expansion in $\boldsymbol{\widehat{\theta}}_{k+1,k+G}$  around $\boldsymbol{\theta}_j$  yields that there exists  $\left\lVert \boldsymbol{\xi}^{(j)}_{k,n}-\boldsymbol{\theta}_j\right\rVert\leq \left\lVert \boldsymbol{\widehat{\theta}}_{k+1,k+G}-\boldsymbol{\theta}_j\right\rVert$ such that
\begin{align*}
	&-\frac{1}{\sqrt{G}}\sum_{i=k+1}^{k+G}\boldsymbol{H}(\mathbb{X}_i^{(j)},\boldsymbol{\theta_j})\\
	&=\left(\frac{1}{G}\sum_{i=k+1}^{k+G}\nabla\boldsymbol{H}(\mathbb{X}^{(j)}_i,\boldsymbol{\xi}_{k,n}^{(j)})\right)^T\sqrt{G}\left(\boldsymbol{\widehat{\theta}}_{k+1,k+G}-\boldsymbol{\theta}_j\right)\\
	&=\left(o_P(1)+\bs{V}_{(j)}(\bs{\xi}_{k,n}^{(j)})\right) \sqrt{G}\left(\boldsymbol{\widehat{\theta}}_{k+1,k+G}-\boldsymbol{\theta}_j\right)\text{ uniformly in } k,
\end{align*}
where the last line follows by Regularity Conditions~\ref{regc_wald} (a) in combination with Lemma~\ref{lemma_uniform} (b). By  Lemma~\ref{lemma_uniform} (a), Regularity Condition~\ref{regc_ip} and the definition of $\boldsymbol{\theta}_j$ we get 
\begin{align*}	&\sup_{1\le k\le n-G}\frac{1}{\sqrt{G}}\left\|\sum_{i=k+1}^{k+G}\boldsymbol{H}(\mathbb{X}_i^{(j)},\boldsymbol{\theta_j})\right\|=O_P\left( \sqrt{\log(n/G)} \right).
\end{align*}
In combination with  Regularity Conditions~\ref{regc_wald}(b) this yields
\begin{align}\label{eq_315}
	&\max_{j=1,\ldots,q+1}\max_{k_{j-1,n}< k \leq k_{j,n}-G}\sqrt{G}\left\lVert\boldsymbol{\widehat{\theta}}_{k+1,k+G}-\boldsymbol{\theta}_j\right\rVert=O_P\left(\sqrt{\log(n/G)}\right).
\end{align}
In particular, this shows the validity of Assumption~\ref{Ass_31_new} (a).\\
Moreover, for each $l=1,\ldots,p$ and $k_{j-1,n}<k\le k_{j,n}-G $ a second order Taylor expansion  yields
the existence of 
$ \left\lVert \boldsymbol{\xi}^{(j)}_{l,n,k}-\boldsymbol{\theta}_j\right\rVert\leq \left\lVert \widehat{\boldsymbol{\theta}}_{k+1,k+G}-\boldsymbol{\theta}_j\right\rVert$ with
\begin{align*}
	&-\sum_{i=k+1}^{k+G}H_l(\mathbb{X}_i^{(j)},\boldsymbol{\theta}_j)\\
	&=\left(\sum_{i=k+1}^{k+G}\left(\nabla H_l(\mathbb{X}_i^{(j)},\boldsymbol{\theta}_j)\right)\right)^T\left(\boldsymbol{\widehat{\theta}}_{k+1,k+G}-\boldsymbol{\theta}_j\right)\\*
	&\quad+\frac{1}{2}\left(\boldsymbol{\widehat{\theta}}_{k+1,k+G}-\boldsymbol{\theta}_j\right)^T\left(\sum_{i=k+1}^{k+G}\nabla^2 H_l(\mathbb{X}^{(j)}_i,\boldsymbol{\xi}^{(j)}_{l,n,k})\right)\left(\boldsymbol{\widehat{\theta}}_{k+1,k+G}-\boldsymbol{\theta}_j\right)\\
	&=\left(\E\left(\nabla H_l(\mathbb{X}_1^{(j)},\boldsymbol{\theta}_j)\right)+O_P\left( \sqrt{\frac{\log (n/G)}{G}} \right)  \right)^T \; G \left(\boldsymbol{\widehat{\theta}}_{k+1,k+G}-\boldsymbol{\theta}_j\right)\\
	&\quad + O_P\left( G \right) \|\boldsymbol{\widehat{\theta}}_{k+1,k+G}-\boldsymbol{\theta}_j\|^2\quad \text{uniformly in }k,
\end{align*}
where the last line follows by an application of Lemma~\ref{lemma_uniform} both (a) and (d).
Thus, an application of \eqref{eq_315} yields uniformly in $k$
\begin{align*}
	&-\frac{1}{\sqrt{2G}}\sum_{i=k+1}^{k+G}H_l(\mathbb{X}_i^{(j)},\boldsymbol{\theta}_j)\\&=\E\left(\nabla H_l(\mathbb{X}_1^{(j)},\boldsymbol{\theta}_j)\right)^T\sqrt{\frac G 2}\left(\boldsymbol{\widehat{\theta}}_{k+1,k+G}-\boldsymbol{\theta}_j\right)+ o_P\left( \left(\log n/G\right)^{-1/2} \right), 
\end{align*}
showing the validity of Assumption~\ref{aswaldaltnullseg}, concluding the proof of (a). By the strong law of large numbers (and similar arguments as in the proof of Lemma~\ref{lemma_uniform} (c)) and 
by definition of $\boldsymbol{\theta}_j$ it holds
\begin{align*}
	&\max_{	k_{j-1,n}-G\le k\le k_{j-1,n}-\varepsilon\,G }\left\|\frac 1 G\sum_{i=k+1}^{k+G}\boldsymbol{H}(\mathbb{X}_i,\boldsymbol{\theta}_j)-\frac{k_{j-1,n}-k}{G}
	\E \boldsymbol{H}(\mathbb{X}_i^{(j-1)},\boldsymbol{\theta}_j)\right\|\\
& =\max_{	k_{j-1,n}-G\le k\le k_{j-1,n}-\varepsilon\,G }
\left\|\frac 1 G\sum_{i=k+1}^{k_{j-1,n}}\left(\boldsymbol{H}(\mathbb{X}_i^{(j-1)},\boldsymbol{\theta}_j)-\E \boldsymbol{H}(\mathbb{X}_i^{(j-1)},\boldsymbol{\theta}_j)\right)\right.\\*
& 
\phantom{ =\max_{	k_{j-1,n}-G\le k\le k_{j-1,n}-\varepsilon\,G }}\quad
+ \left. 
\frac 1 G\sum_{i=k_{j-1,n}+1}^{k+G}\boldsymbol{H}(\mathbb{X}_i^{(j)},\boldsymbol{\theta}_j)
\right\|=o_P(1).
\end{align*} By the identifiable uniqueness of $\boldsymbol{\theta}_j$  it holds $\E \boldsymbol{H}(\mathbb{X}_i^{(j-1)},\boldsymbol{\theta}_j)\neq 0$, such that \begin{align*}
	&\sqrt{\frac{G}{\log(n/G)}}\,\min_{k_{j-1,n}-G\le k\le k_{j-1,n}-\varepsilon\,G}
	\left\|\frac 1 G\sum_{i=k+1}^{k+G}\boldsymbol{H}(\mathbb{X}_i,\boldsymbol{\theta}_j)\right\|\\
	&\ge \sqrt{\frac{G}{\log(n/G)}}\left(    \varepsilon\, \left\|\E \boldsymbol{H}(\mathbb{X}_i^{(j-1)},\boldsymbol{\theta}_j)\right\| +o_P(1)\right)\overset{P}{\longrightarrow} \infty.
\end{align*} 
By Lemma~\ref{lemma_uniform} (c) it holds 
\begin{align*}
	&\sup_{\bt\in\Theta}\max_{k_{j-1,n}-G\le k\le k_{j-1,n}-\varepsilon\,G
	}\left\|\frac{1}{G}\sum_{i=k+1}^{k+G}\nabla\boldsymbol{H}(\mathbb{X}_i,	\bt)\right\|_F=O_P(1),
\end{align*}
where $\|\cdot\|_F$ denotes the Frobenius matrix norm. 

Furthermore, a Taylor expansion of $\boldsymbol{\widehat{\theta}}_{k+1,k+G}$ around $\boldsymbol{\theta}_j$ yields for some $\boldsymbol{\xi}_{k,n}$\begin{align*}
	&\sqrt{\frac{G}{\log(n/G)}}\,	\left\|\frac 1 G\sum_{i=k+1}^{k+G}\boldsymbol{H}(\mathbb{X}_i,\boldsymbol{\theta}_j)\right\|\\
&=\left\|\left(\frac{1}{G}\sum_{i=k+1}^{k+G}\nabla\boldsymbol{H}(\mathbb{X}_i,\boldsymbol{\xi}_{k,n})\right)^T \sqrt{\frac{G}{\log(n/G)}}\left(\boldsymbol{\widehat{\theta}}_{k+1,k+G}-\boldsymbol{{\theta}}_j\right)\right\|\\
&\le \left\|\frac{1}{G}\sum_{i=k+1}^{k+G}\nabla\boldsymbol{H}(\mathbb{X}_i,\boldsymbol{\xi}_{k,n})\right\|_F\, \sqrt{\frac{G}{\log(n/G)}}\|\boldsymbol{\widehat{\theta}}_{k+1,k+G}-\boldsymbol{{\theta}}_j\|,
\end{align*}
where we used the consistency of the Frobenius matrix norm with the Euclidean vector norm in the last step.
Thus, we have shown that the left hand side diverges to infinity stochastically, while the first term on the right hand side is stochastically bounded -- both in an appropriate uniform sense. By standard arguments  this shows that 
 indeed 
\begin{align*}
	&\sqrt{\frac{G}{\log(n/G)}}
\min_{k_{j-1,n}-G\le k\le k_{j-1,n}-\varepsilon\,G
	}
	\|\boldsymbol{\widehat{\theta}}_{k+1,k+G}-\boldsymbol{{\theta}}_j\|\overset{P}{\longrightarrow}\infty.
\end{align*}
The assertion for $k_{j,n}-(1-\varepsilon)\,G\le k\le k_{j,n}$ follows analogously concluding the proof.

\end{proof}

\begin{proof}[Proof of Theorem~\ref{theoremrootnconsisalt}]
	The proof is analogous to the proof of \eqref{eq_315} where the sum 

	$	-\frac{1}{\sqrt{n}}\sum_{i=a}^b\boldsymbol{H}(\mathbb{X}_i,\tbt_{\gamma_a,\gamma_b})$
	is considered instead. The better rate compared to \eqref{eq_315} is due to the fact that a piecewise application of the central limit theorem (which follows from the mixing condition) to each regime yields
	\begin{align*}
		&\frac{1}{\sqrt{n}}\left\|\sum_{i=a}^{b}\boldsymbol{H}(\mathbb{X}_i,\tbt_{\gamma_a,\gamma_b})\right\|=O_P\left( 1 \right).
	\end{align*}
	The convergence of $\frac{1}{n}\sum_{i=a}^{b}\nabla\boldsymbol{H}(\mathbb{X}_i,\boldsymbol{\xi}_{n}^{(\gamma_a,\gamma_b)}) $ under these regularity conditions follows also e.g.\ from a piecewise application of
 Lemma~\ref{lemma_uniform} (c)(ii) (with $G$ replaced by $n$).
\end{proof}

\begin{proof}[Proof of Theorem~\ref{theorem_ass_score}]
A first order Taylor expansion yields
\begin{align*}
	&\sum_{i=k+1}^{k+G}\boldsymbol{H}(\mathbb{X}_i^{(j)},\tbt_n)-\sum_{i=k+1}^{k+G}\boldsymbol{H}(\mathbb{X}_i^{(j)},\tbt)\\
	&=\left(\sum_{i=k+1}^{k+G}\nabla\boldsymbol{H}(\mathbb{X}^{(j)}_i,\boldsymbol{\xi}_{k,n}^{(j)})\right)^T\left(\tbt_n-\tbt\right)=O_P(G/\sqrt{n})=o_P\left( \sqrt{\frac{G}{\log(n/G)}} \right),
\end{align*}
where the last line follows uniformly in $k$ by Lemma~\ref{lemma_uniform} (b). This shows the validity of
Assumption~\ref{as.replacemulti} (i). The proof of (ii) and Assumption~\ref{ass_localization_rates} (a) are analogous by using Lemma~\ref{lemma_uniform}(c) instead.

\end{proof}

\end{document}